\definecolor{darkmidnightblue}{rgb}{0.0, 0.2, 0.4}
\definecolor{persianplum}{rgb}{0.44, 0.11, 0.11}
\newtheorem{thm}{Theorem}[section]
\newtheorem{definition}[thm]{Definition}
\newtheorem{corollary}[thm]{Corollary}
\newtheorem{lemma}[thm]{Lemma}
 \def\desclabel#1#2{\begingroup
    \def\@currentlabel{#1}%
    #1\label{#2}\endgroup
 }
\newcommand{\orcid}[1]{\href{https://orcid.org/#1}{\includegraphics[width=9pt]{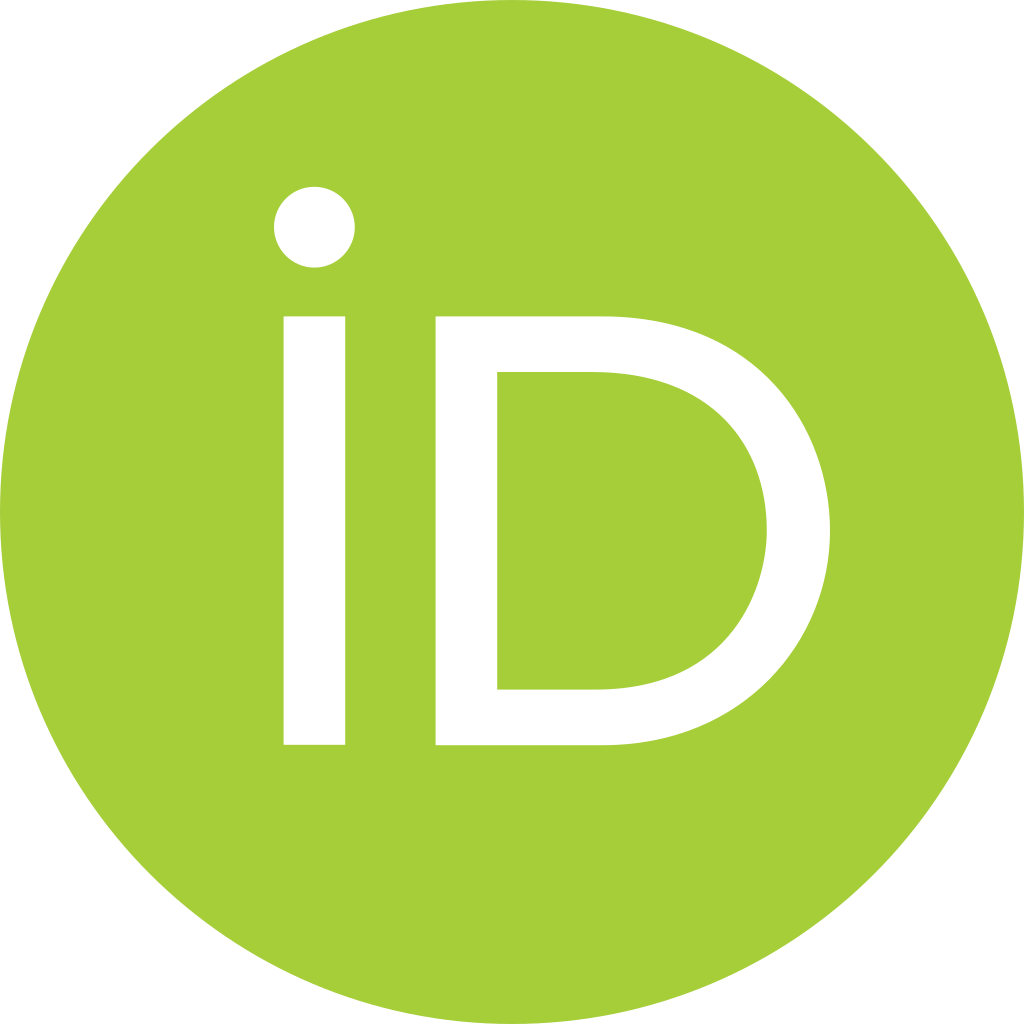}}}
 \def\desclabel#1#2{\begingroup
    \def\@currentlabel{#1}%
    #1\label{#2}\endgroup
 }
\definecolor{ao(english)}{rgb}{0.0, 0.5, 0.0}
\newrobustcmd{\newnotion}[1]{\emph{#1}}
\newcommand{\dland}{\sqcap}
\newcommand{\dlor}{\sqcup}
\newcommand{\topconcept}{\top}
\newcommand{\botconcept}{\bot}
\newcommand{\dlsubseteq}{\sqsubseteq}
\newcommand{\Self}{\mathsf{Self}}
\newcommand{\bigdlor}{\bigsqcup}
\providecommand{\bigdland}{%
  \mathop{%
    \mathpalette\@updown\bigsqcup
  }%
}
\newcommand*{\@updown}[2]{%
  \rotatebox[origin=c]{180}{$\m@th#1#2$}%
}
\newcommand{\DL}[1]{\ensuremath{\mathcal{#1}}}  
\newcommand{\ALC}{\DL{ALC}}                     
\newcommand{\ALCself}{\DL{ALC}_{\Self}}         
\newcommand{\SH}{\DL{SH}}                       
\newcommand{\ALCHbselfreg}{\DL{ALCH}\mathit{b}^{\Self}_{\mathsf{reg}}}
\newcommand{\Z}{\DL{Z}}                       
\newcommand{\FGF}{\DL{FGF}}   
\newcommand{\complexityclass}[1]{\textsc{#1}} 
\newcommand{\ExpTime}{\complexityclass{ExpTime}} 
\newcommand{\TwoExpTime}{\complexityclass{2ExpTime}} 
\newcommand{\coTwoExpTime}{\text{co}\complexityclass{2ExpTime}} 
\newcommand{\lang}[1]{\mathbf{#1}}  
\newcommand{\Ilang}{\lang{N_I}}     
\newcommand{\Rlang}{\lang{N_R}}     
\newcommand{\Clang}{\lang{N_C}}     
\newcommand{\Vlang}{\lang{N_V}}     
\newcommand{\query}[1]{\mathit{#1}}  
\newcommand{\queryq}{\query{q}}      
\newcommand{\match}[1]{#1}          
\newcommand{\modelsmatch}[1]{\models_{#1}} 
\newcommand{\matchpi}{\match{\pi}}  
\newcommand{\queryVar}[1]{\mathrm{Var}{(#1)}}   
\newcommand{\queryVarq}{\queryVar{\queryq}}     
\newcommand{\var}[1]{\mathit{#1}}   
\newcommand{\varx}{\var{x}}         
\newcommand{\vary}{\var{y}}         
\newcommand{\varz}{\var{z}}         
\newcommand{\role}[1]{\mathit{#1}}      
\newcommand{\roler}{\role{r}}           
\newcommand{\roles}{\role{s}}           
\newcommand{\rolet}{\role{t}}           
\newcommand{\concepts}{\lang{C}}            
\newcommand{\alcselfconcepts}{\concepts} 
\newcommand{\concept}[1]{\mathrm{#1}}       
\newcommand{\conceptA}{\concept{A}}         
\newcommand{\conceptB}{\concept{B}}         
\newcommand{\conceptC}{\concept{C}}         
\newcommand{\conceptD}{\concept{D}}         
\newcommand{\indv}[1]{\texttt{#1}}  
\newcommand{\indva}{\indv{a}}       
\newcommand{\indvb}{\indv{b}}       
\newcommand{\kb}[1]{\mathcal{#1}}   
\newcommand{\kbK}{\kb{K}}           
\newcommand{\abox}[1]{\mathcal{#1}} 
\newcommand{\aboxA}{\abox{A}}       
\newcommand{\tbox}[1]{\mathcal{#1}} 
\newcommand{\tboxT}{\tbox{T}}       
\newcommand{\inter}[1]{\mathcal{#1}}    
\newcommand{\interC}{\inter{C}}         
\newcommand{\interE}{\inter{E}}         
\newcommand{\interI}{\inter{I}}         
\newcommand{\interJ}{\inter{J}}         
\newcommand{\interU}{\inter{U}}         
\newcommand{\interQ}{\inter{Q}}         
\newcommand{\DeltaInter}[1]{\Delta^{#1}}                
\newcommand{\DeltaC}{\DeltaInter{\interC}}              
\newcommand{\DeltaE}{\DeltaInter{\interE}}              
\newcommand{\DeltaI}{\DeltaInter{\interI}}              
\newcommand{\DeltaJ}{\DeltaInter{\interJ}}              
\newcommand{\DeltaU}{\DeltaInter{\interU}}              
\newcommand{\DeltaQ}{\DeltaInter{\interQ}}              
\newcommand{\cdotInter}[1]{\cdot^{#1}}          
\newcommand{\cdotI}{\cdotInter{\interI}}        
\newcommand{\cdotQ}{\cdotInter{\interQ}}        
\newcommand{\domelem}[1]{\mathrm{#1}}                           
\newcommand{\domelemc}{\domelem{c}}                             
\newcommand{\domelemd}{\domelem{d}}                             
\newcommand{\domeleme}{\domelem{e}}                             
\newcommand{\deff}{:=}
\newcommand{\set}[1]{\{#1\}}
\renewcommand{\restriction}{\mathord{\upharpoonright}}
\newcommand{\restr}[2]{#1\restriction_{#2}} 
\newcommand{\atmM}{\mathcal{M}}
\newcommand{\nofM}{\mathtt{N}}
\newcommand{\statesQ}{\mathtt{Q}}
\newcommand{\statesQexists}{\statesQ_{\exists}}
\newcommand{\statesQforall}{\statesQ_{\forall}}
\newcommand{\states}{\mathtt{s}}
\newcommand{\initialstate}{\states_{I}}
\newcommand{\acceptingstate}{\states_{A}}
\newcommand{\rejectingstate}{\states_{R}}
\newcommand{\transreldelta}{\mathtt{T}} 
\newcommand{\transitiont}{\mathtt{t}}
\newcommand{\letterzero}{\scalebox{1.2}[0.9]{$\mathtt{0}$}}
\newcommand{\letterone}{\scalebox{1.2}[0.9]{$\mathtt{1}$}}
\newcommand{\lettera}{\mathtt{a}}
\newcommand{\letterb}{\mathtt{b}}
\newcommand{\tapeword}[1]{\mathtt{#1}}
\newcommand{\tapewordw}{\tapeword{w}}
\newcommand{\tapewordv}{\tapeword{v}}
\newcommand{\homo}[1]{\mathfrak{#1}}    
\newcommand{\homof}{\homo{f}}           
\newcommand{\homog}{\homo{g}}           
\newcommand{\homoh}{\homo{h}}           
\newcommand{\rolesunit}{\mathbf{R}_{\textit{unit}}}
\newcommand{\addressword}[1]{#1}
\newcommand{\addresswordu}{\addressword{u}}
\newcommand{\addresswordv}{\addressword{v}}
\newcommand{\addresswordw}{\addressword{w}}
\newcommand{\conceptsunit}{\mathbf{C}_{\textit{unit}}}
\newcommand{\roleell}{\role{\ell}}
\newcommand{\rolenext}{\role{next}}
\newcommand{\conceptLvl}{\concept{Lvl}}
\newcommand{\conceptL}{\concept{L}}
\newcommand{\conceptR}{\concept{R}}
\newcommand{\conceptadr}[1]{\concept{Ad}_{#1}}
\newcommand{\kbKunit}[1]{\kbK_{\textit{unit}}^{#1}}
\newcommand{\qrl}{\queryq_{\textit{rl}}} 
\newcommand{\conceptsconf}{\concepts_{\textit{conf}}}
\newcommand{\conceptzero}{\mathbb{0}}
\newcommand{\conceptone}{\mathbb{1}}
\newcommand{\conceptLetter}[1]{\concept{Let}_{#1}}
\newcommand{\currstate}[1]{\concept{St}_{#1}}
\newcommand{\currstates}{\currstate{\states}}
\newcommand{\HeadPos}[1]{\concept{HdPos}_{#1}}
\newcommand{\HeadHere}{\concept{HdHere}}
\newcommand{\NoHeadHere}{\concept{NoHdHere}}
\newcommand{\HeadLetter}[1]{\concept{HdLet}_{#1}}
\newcommand{\kbKconf}{\kbK_{\textit{conf}}}
\newcommand{\conceptsenrct}{\concepts_{\textit{enr}}}
\newcommand{\conceptsptrans}{\concepts_{\textit{ptr}}}
\newcommand{\kbKenr}{\kbK_{\textit{enr}}}
\newcommand{\InitialConf}{\concept{Init}}
\newcommand{\prevtrans}[1]{\concept{PTrns}_{#1}}
\newcommand{\PHeadPos}[1]{\concept{PHdPos}_{#1}}
\newcommand{\PHeadLetter}[1]{\concept{PHdLet}_{#1}}
\newcommand{\PrevHeadHere}{\concept{PHdHere}}
\newcommand{\NoPrevHeadHere}{\concept{NoPHdHere}}
\newcommand{\PrevHeadAbove}{\concept{PHdAbv}}
\newcommand{\NoPrevHeadAbove}{\concept{NoPHdAbv}}
\newcommand{\quasirunR}{\mathfrak{R}}
\newcommand{\treeT}{\mathfrak{T}}
\newcommand{\frakw}{\mathfrak{w}}
\newcommand{\frakb}{\mathfrak{b}}
\newcommand{\frakzero}{\mathfrak{0}}
\newcommand{\frakone}{\mathfrak{1}}
\newcommand{\qmain}{\queryq_{\textit{main}}}
\newcommand{\qaddr}[1]{\queryq_{\textit{adr}}^{#1}}
\newcommand{\qtopdown}{\queryq_{\downarrow}}
\newcommand{\qLtopdowni}[1]{\queryq_{\roleell{#1}\downarrow}}
\newcommand{\qRtopdowni}[1]{\queryq_{\roler{#1}\downarrow}}
\newcommand{\qithbit}[2]{\queryq^{#1\text{-th bit}}_{=#2}}
\title{The Price of Selfishness:\\ Conjunctive Query Entailment for $\ALCself$ is \mbox{\sc{2ExpTime}}-hard}
\author{Bartosz Bednarczyk$^{1,2}$ \orcid{0000-0002-8267-7554} and Sebastian Rudolph$^{1}$ \orcid{0000-0002-1609-2080} }
\date{$^{1}$Computational Logic Group, Technische Universit{\"a}t Dresden, Germany\\
$^{2}$Institute of Computer Science, University of Wroc{\l}aw, Poland\\
\texttt{ $\{$bartosz.bednarczyk, sebastian.rudolph$\}$@tu-dresden.de }}
\begin{document}
\maketitle

\begin{abstract}
In logic-based knowledge representation, query answering has essentially replaced mere satisfiability checking as the inferencing problem of primary interest.
For knowledge bases in the basic description logic $\ALC$, the computational complexity of conjunctive query (CQ) answering is well known to be \ExpTime-complete and hence not harder than satisfiability.
This does not change when the logic is extended by certain features (such as counting or role hierarchies), whereas adding others (inverses, nominals or transitivity together with role-hierarchies) turns CQ answering exponentially harder.

We contribute to this line of results by showing the surprising fact that even extending $\ALC$ by just the $\Self$ operator -- which proved innocuous in many other contexts -- increases the complexity of CQ entailment to $\TwoExpTime$.
As common for this type of problem, our proof establishes a reduction from alternating Turing machines running in exponential space, but several novel ideas and encoding tricks are required to make the approach work in that specific, restricted setting.  
\end{abstract}

\section{Introduction}\label{sec:intro}
  Formal ontologies are of significant importance in artificial intelligence, playing a central role in the Semantic Web, ontology-based information integration, or peer-to-peer data management.
  In such scenarios, an especially prominent role is played by \emph{description logics} (DLs)~\cite{dlbook} -- a robust family of logical formalisms used to describe ontologies and serving as the logical underpinning of contemporary standardised ontology languages.
  To put knowledge bases to full use as core part of intelligent information systems, much attention is being devoted to the area of ontology-based data-access, with  \emph{conjunctive queries} (CQs) being employed as a fundamental querying formalism~\cite{OrtizS12}.

  In recent years, it has become apparent that various modelling features of DLs affect the complexity of CQ answering in a rather strong sense.
  Let us focus on the most popular DL,~$\ALC$.
  It was first shown in~\cite[Thm. 2]{Lutz08} that CQ entailment is exponentially harder than the consistency problem for $\ALC$ extended with inverse roles $(\mathcal{I})$. 
  Shortly after, a combination of transitivity and role-hierarchies $(\mathcal{S}\mathcal{H})$ was shown as a culprit of higher worst-case complexity of reasoning~\cite[Thm. 1]{EiterLOS09}.
  Finally, also nominals $(\mathcal{O})$ turned out to be problematic~\cite[Thm. 9]{NgoOS16}.
  Nevertheless, there are also more benign DL constructs regarding the complexity of CQ entailment.
  Examples are counting~($\mathcal{Q}$)~\cite[Thm. 4]{Lutz08} (the complexity stays the same even for expressive arithmetical constraints~\cite[Thm. 21]{BaaderBR20}), role-hierarchies alone $(\mathcal{H})$~\cite[Cor.~3]{EiterOS12}, or even a tamed use of higher-arity relations~\cite[Thm. 20]{Bednarczyk21}.

  \paragraph{Our results.}
  We study CQ entailment in $\ALCself$, an extension of $\ALC$ with the $\Self$ operator, \ie a modelling feature that allows us to specify the situation when an element is related to it\emph{self} by a binary relationship. Among other things, this allows us to formalise the concept of a ``narcissist'':
  \[
      \concept{Narcissist} \dlsubseteq \exists{\role{loves}}.\Self,
  \]
  or to express that no person is their own parent:
  \[
      \concept{Person} \dlsubseteq \neg \exists{\role{hasParent}}.\Self.
  \]
  The $\Self$ operator is supported by the OWL 2 Web Ontology Language and the DL $\DL{SROIQ}$~\cite{HorrocksKS06}.
  Due to the simplicity of the $\Self$ operator (it only refers to one element), it is easy to accommodate for automata techniques~\cite{CalvaneseEO09} or consequence-based methods~\cite{OrtizRS10} and thus, so far, there has been no real indication that the added expressivity provided by $\Self$ may change anything, complexity-wise. Arguably, this impression is further corroborated by the observation that $\Self$ features in two profiles of OWL 2 (the EL and the RL profile), again without harming tractability \cite{KRH2008}.

  In this work, however, we show a rather counter-intuitive result, namely that CQ entailment for $\ALCself$ is exponentially harder than for $\ALC$.
  Hence, it places the seemingly innocuous $\Self$ operator among the ``malign'' modelling features, like $(\mathcal{I}), (\mathcal{SH})$, or $(\mathcal{O})$. 
  Moreover, this establishes $\TwoExpTime$-hardness of query entailment for the $\Z$ family (a.k.a. $\ALCHbselfreg$) of DLs~\cite{CalvaneseEO09}, which until now remained open\footnote{We stress that $\TwoExpTime$ hardness of CQ entailment over $\Z$ ontologies \emph{does not} follow from $\TwoExpTime$-hardness of the same problem for $\SH$ since we cannot define in $\Z$ that a given role is transitive nor that it is a transitive closure of another role (to simulate transitivity). Actually a modification of techniques by Lutz~\cite[Section~5]{Lutz08} yields $\ExpTime$ upper bound for querying of $\Z$ without~$\Self$~\cite{Bednarczyk21Spoiler}, which we are going to publish in the journal version of this paper.} as well as the $\TwoExpTime$-hardness of querying ontologies formulated in the forward guarded fragment~\cite{Bednarczyk21} with~equality.\footnote{Self-loops can be expressed in $\FGF_=$ via $\forall{\varx_1} \concept{self_r}(\varx_1) \to \exists{\varx_2} \roler(\varx_1, \varx_2) \land \varx_1 {=} \varx_2 \; \land \; \forall{\varx_1} \forall{\varx_2} \roler(\varx_1, \varx_2) \to (\varx_1 {=} \varx_2 \to \concept{self_r}(\varx_1))$.}

  Our proof goes via encoding of computational trees of alternating Turing machines working in exponential space and follows a general hardness-proof-scheme by Lutz~\cite[Section 4]{Lutz08}. 
  However, to adjust the schema to $\ALCself$, novel ideas are required: the ability to speak about self-loops is exploited to produce a single query that traverses trees in a root-to-leaf manner and to simulate disjunction inside CQs, useful to express that certain paths are repeated inside the tree.
  The axiomatisation used in this paper is not optimised for succinctness. In fact, many of our axioms are dispensable, but they provide a sharper understanding of the described structures and significantly the~proofs.

\section{Preliminaries}\label{sec:prelim}
  We recall the basics on description logics (DLs)~\cite{dlbook} and query answering~\cite{OrtizS12}.

  \paragraph*{DLs.}\label{subsec:prelim-dls}
  We fix countably-infinite pairwise disjoint sets of \emph{individual names} \(\Ilang \), \emph{concept names} \(\Clang \), and \emph{role names}~\(\Rlang \) and introduce the description logic \( \ALCself \).
  Starting from \(\Clang \) and \(\Rlang \), the set \(\alcselfconcepts \) of \( \ALCself \) \emph{concepts} is built using the following concept constructors: \emph{negation} \((\neg \conceptC) \), \emph{conjunction} \((\conceptC \dland \conceptD) \), \emph{existential restriction} (\(\exists{\roler}.\conceptC \)), the \emph{top concept} (\(\topconcept \)), and \emph{Self} concepts~\( (\exists{\roler}.\Self) \) , with the grammar:
  \begin{equation*} \label{eq:alc-grammar}
  \conceptC, \conceptD \; ::= \; \topconcept \; \mid \; \conceptA \; \mid \; \neg \conceptC \; \mid \; \conceptC \dland \conceptD \; \mid \; \exists{\roler}.\conceptC \; \mid \; \exists{\roler}.\Self,
  \end{equation*}
  where \(\conceptC,\conceptD \in \alcselfconcepts \), \(\conceptA \in \Clang \), and \(\roler \in \Rlang \). 
  We often employ disjunction \(\conceptC \dlor \conceptD \deff \neg (\neg \conceptC  \dland \neg \conceptD) \), universal restrictions \(\forall{\roler}.\conceptC \deff \neg \exists{\roler}.\neg\conceptC \), bottom \(\botconcept \deff \neg \topconcept \), and the less commonly used ``inline-implication'' $\conceptC \to \conceptD \deff \neg \conceptC \dlor \conceptD$.

  \emph{Assertions} are of the form \(\conceptC(\indva) \) or \(\roler(\indva,\indvb) \) for \(\indva,\indvb \in \Ilang \), \(\conceptC \in \alcselfconcepts \), and \(\roler \in \Rlang \).
  A \emph{general concept inclusion} (GCI) has the form \(\conceptC \dlsubseteq \conceptD \) for concepts \(\conceptC, \conceptD \in \alcselfconcepts \). 
  We use $\conceptC \equiv \conceptD$ as a shorthand for the two GCIs \(\conceptC \dlsubseteq \conceptD \) and \(\conceptD \dlsubseteq \conceptC \).
  A~\emph{knowledge base} (KB) \(\kbK=(\aboxA, \tboxT) \) is composed of a finite non-empty set \(\aboxA \) (\emph{ABox}) of assertions and a finite non-empty set \(\tboxT \) (\emph{TBox}) of GCIs. 
  We call the elements of \(\aboxA \cup \tboxT \) \emph{axioms}. 

  \begin{table}[!htb]
    \begin{minipage}{.55\linewidth}
      \vspace{0.045cm}
      \caption{Concepts and roles in \(\ALCself \).\label{tab:ALCself}}
      \centering
          \begin{tabular}{@{}l@{\ \ \ }c@{\ \ \ }l@{}}
              \hline\\[-2ex]
              Name & Syntax & Semantics \\ \hline \\[-2ex]
              top concept & \(\topconcept \) & \(\DeltaI  \) \\
              concept name & \(\conceptA \) & \(\conceptA^\interI \subseteq \DeltaI  \) \\ 
              role name & \(\roler \) & \(\roler^\interI \subseteq \DeltaI {\times} \DeltaI \) \\ 
              conc.\ negation & \(\neg\conceptC \)& \(\DeltaI \setminus \conceptC^{\interI} \) \\  
              conc.\ intersection & \(\conceptC \dland \conceptD \)& \(\conceptC^{\interI}\cap \conceptD^{\interI} \) \\  
              exist.\ restriction & \(\exists{\roler}.\conceptC \) & 
              \(\set{ \domelemd \mid \exists{\domeleme}.(\domelemd,\domeleme)\in \roler^{\interI} \land \domeleme\in \conceptC^{\interI}} \)\\
              Self concept & \(\exists{\roler}.\Self \) & 
              \(\set{ \domelemd \mid (\domelemd, \domelemd) \in \roler^{\interI} }\)   
              \\\hline
          \end{tabular}
    \end{minipage}%
    \begin{minipage}{.45\linewidth}
      \caption{Axioms in \( \ALCself \).\label{tab:axm}} 
      \centering
          \begin{tabular}{ l l }
              \hline\\[-2ex]
              Axiom \(\alpha \) & \(\interI \models\alpha \), if \\ \hline \\[-2ex] 
              \(\conceptC \dlsubseteq \conceptD \) & \(\conceptC^{\interI} \subseteq \conceptD^{\interI}  \)\hspace{5ex} \mbox{TBox}~\(\tboxT \) \\ \hline \\[-2ex]
              \(\conceptC(\indva) \) & \(\indva^{\interI} \in \conceptC^{\interI}  \)\hfill \mbox{ABox}\(~\aboxA \) \\
              \(\roler(\indva,\indvb) \) & \((\indva^{\interI}, \indvb^{\interI} )\in \roler^{\interI}  \)             
              \\\hline
          \end{tabular}
    \end{minipage} 
\end{table}
  The semantics of \(\ALCself \) is defined via \emph{interpretations} \(\interI = (\DeltaI, \cdotI) \) composed of a non-empty set \(\DeltaI \) called the \emph{domain of \(\interI \)}, and an \emph{interpretation function} \(\cdotI \) mapping individual names to elements of \(\DeltaI \), concept names to subsets of \(\DeltaI \), and role names to subsets of \(\DeltaI \times \DeltaI \). 
  This mapping is extended to concepts (see~\cref{tab:ALCself}) and finally used to define \emph{satisfaction} of assertions and GCIs (see~\cref{tab:axm}). 
  We say that an interpretation \(\interI \) \emph{satisfies} a KB \(\kbK=(\aboxA,\tboxT) \) (or \(\interI \) is a \emph{model} of \(\kbK \), written: \(\interI \models \kbK \)) if it satisfies all axioms of~\(\aboxA\cup\tboxT \). 
  A KB is \emph{consistent} (or \emph{satisfiable}) if it has a model, and \emph{inconsistent} (or \emph{unsatisfiable}) otherwise. 

A \emph{homomorphism} $\homoh : \interI \to \interJ$ is a concept-name and role-name-preserving function that maps every element of $\DeltaI$ to some element from $\DeltaJ$, \ie we have that $\domelemd \in \conceptA^{\interI}$ implies that $\homoh(\domelemd) \in \conceptA^{\interJ}$, and $(\domelemd, \domeleme) \in \roler^{\interI}$ implies $(\homoh(\domelemd), \homoh(\domeleme)) \in \roler^{\interJ}$ for all role names $\roler \; \in \Rlang$, concept names $\conceptA \; \in \Clang$, and $\domelemd, \domeleme \in \DeltaI$.

  \paragraph*{Queries.}\label{subsec:prelim-queries}
  \emph{Conjunctive queries} (CQs) are conjunctions of \emph{atoms} of the form \(\roler(\varx, \vary) \) or \(\conceptA(\varz) \), where \(\roler \) is a role name, \(\conceptA \) is a concept name, and \(\varx, \vary, \varz \) are variables from a countably infinite set \(\Vlang \).
  We denote with $|\queryq|$ the number of its atoms, and with $\queryVar{\queryq}$ the set of all variables that appear in $\queryq$.
  Let $\interI$ be an interpretation, $\queryq$ a CQ and $\matchpi: \queryVar{\queryq}\to\DeltaI$ be a variable assignment.
  We write \(\interI \modelsmatch{\matchpi} \roler(\varx,\vary) \) if~\((\matchpi(\varx),\matchpi(\vary))\in \roler^\interI \), and~\(\interI \modelsmatch{\matchpi} \conceptA(\varz) \) if \(\matchpi(\varz) \in \conceptA^\interI \). 
  We say that~\(\matchpi \) is a \emph{match} for \(\interI \) and \(\queryq \) if \(\interI \modelsmatch{\matchpi} \alpha \) holds for every atom $\alpha \in \queryq$, and that~\(\interI \) \emph{satisfies} \(\queryq \) (denoted with: \(\interI \models \queryq \)) whenever \(\interI \modelsmatch{\matchpi} \queryq \) for some match \(\matchpi \). 
  The definitions are lifted to KBs: \(\queryq \) is \emph{entailed} by a KB \(\kbK \) (written:~\(\kbK \models \queryq \)) if every model \(\interI \) of~\( \kbK \) satisfies~\(\queryq \). 
  We stress here that satisfaction of conjunctive queries is preserved by homomorphisms, \ie if $\interI \models \queryq$ and there is a homomorphism from $\homoh: \interI \to \interJ$ then $\interJ \models \queryq$. 
  When \(\interI \models \kbK \) but \(\interI \not\models \queryq \), we call \(\interI \) a \emph{countermodel} for \(\kbK \) and \(\queryq \). 
  The \emph{query entailment problem} asks whether $\kbK \models \queryq$ holds for an input KB $\kbK$ and a CQ $\queryq$.

  Whenever convenient, we employ the \emph{path syntax} of CQs to write queries in a concise way.
  By a \emph{path expression} we mean an expression of the form
  \[
  (\conceptA_0?;\roler_1; \conceptA_1?; \roler_2; \conceptA_2?; \ldots; \conceptA_{n{-}1}?; \roler_n; \conceptA_{n}?)(\varx_0, \varx_{n})
  \]
  with all $\roler_i \in \Rlang$ and $\conceptA_i \in \Clang$, serving as a shorthand for
  \[
  \bigwedge_{i=0}^{n} \conceptA_i(\varx_i) \land \bigwedge_{i=1}^{n} \roler_i(\varx_{i{-}1}, \varx_i).
  \]
  Whenever $\conceptA_i$ happens to be $\top$, it will be removed from the expression; this does not create ambiguities. Note that path CQs are just syntactic sugar and should not be mistaken \eg with regular path~queries.

  \subsection{Alternating Turing Machines}\label{subsec:prelim-atms}
  We next fix the notation of alternating Turing machines over a binary alphabet $\set{\letterzero,\letterone}$ working in exponential space (simply: ATMs).
  An ATM is defined as a tuple $\atmM = (\nofM, \statesQ, \statesQexists, \initialstate, \acceptingstate, \rejectingstate, \transreldelta )$,
  where \( \statesQ \) is a finite set of \emph{states} (usually denoted with \( \states \));
  \( \statesQexists \subseteq \statesQ \) is a set of \emph{existential} states;
  \( \initialstate, \acceptingstate, \rejectingstate \in \statesQ \) are, respectively, pairwise different \emph{initial}, \emph{accepting}, and \emph{rejecting} states; we assume that $\initialstate \in \statesQforall$
  \( \transreldelta \subseteq (\statesQ \times \set{\letterzero,\letterone}) \times (\set{\letterzero,\letterone} \times \statesQ \times \set{{-}1, {+}1}) \) is the \emph{transition relation};
  and the natural number $\nofM$ (encoded in unary) is a parameter governing the size of the working tape.
  We call the states from \( \statesQforall{} \deff \statesQ \setminus \statesQexists \) \emph{universal}. 
  The size of $\atmM$, denoted with $|\atmM|$, is~$\nofM + |\statesQ| + |\statesQexists| + 3 + |\transreldelta|$.

  \noindent A \emph{configuration} of $\atmM$ is a word \( \tapewordw \states \tapewordw' \in \set{\letterzero,\letterone}^* \statesQ \set{\letterzero,\letterone}^* \) with $|\tapewordw\tapewordw'| = 2^{\nofM}$.
  We call \( \tapewordw \states \tapewordw' \) (i) existential (resp. universal) if \( \states \) is existential (resp. universal), (ii) final if \( \states \) is either $\acceptingstate$ or $\rejectingstate$ (iii) non-final if it is not final (iv) accepting if \( \states = \acceptingstate \). 
  Successor configurations are defined in terms of the transition relation~$\transreldelta$.
  For $\mathtt{a},\mathtt{b},\mathtt{c},\mathtt{d} \in \set{\letterzero,\letterone}$ and $\mathtt{v},\mathtt{v}',\mathtt{w},\mathtt{w}' \in\set{\letterzero,\letterone}^*$ with $|\tapewordv| = |\tapewordw|$, we let $\tapewordw\mathtt{b}\states'\tapewordw'$ be a \emph{quasi-successor} configuration of $\tapewordv\states\mathtt{a}\tapewordv'$ whenever $(\states,\mathtt{a},\mathtt{b},\states',+1) \in \transreldelta$, and we let $\tapewordw\states'\mathtt{d}\mathtt{b}\tapewordw'$ be a quasi-successor configuration of $\tapewordv\mathtt{c}\states\mathtt{a}\tapewordv'$ whenever $(\states,\mathtt{a},\mathtt{b},\states',-1) \in \transreldelta$. 
  If additionally we meet the requirement $\tapewordw = \tapewordv$, $\tapewordw' = \tapewordv'$, and $\mathtt{c}=\mathtt{d}$ we speak of \emph{successor} configurations.\footnote{In words, this corresponds to the common definition of successor configurations, while for quasi-successor configurations, untouched tape cells may change arbitrarily during the transition.}

  Without loss of generality, we make the following additional assumptions about $\atmM$: 
  First, for each non-final (\ie{} non-accepting and non-rejecting) state \( \states \) and every letter $\mathtt{a} \in \{ \letterzero, \letterone \}$ the set \( \transreldelta(\states, \mathtt{a}) \deff \set{ (\states, \mathtt{a}, \mathtt{b}, \states', d) \in \transreldelta}  \) contains exactly two elements, denoted $\transreldelta_1(\states, \mathtt{a})$ and $\transreldelta_2(\states, \mathtt{a})$. Hence, every configuration has exactly two successor configurations.
  Second, for any $(\states, \mathtt{a}, \mathtt{b}, \states', d) \in \transreldelta$, if  \( \states \) is existential then \(\states'\) is universal and vice versa.
  Third, the machine reaches a final state no later than after $2^{2^{\nofM}}$ steps (for configuration sequences).
  Fourth and last, $\atmM$ never attempts to move left (resp. right) on the left-most (resp. right-most) tape cell.

  A \emph{run} of \( \atmM \) is a finite tree, with nodes labelled by configurations of \( \atmM \), that satisfies all the conditions~below:
  \begin{itemize}[itemsep=0pt]
      \item  the root is labelled with the initial configuration \(\initialstate \letterzero^{2^{\nofM}}\),
      \item each node labelled with a non-final existential configuration \( \tapewordw \states \tapewordw' \) has a single child node which is labelled with one of the successor configurations of \( \tapewordw \states \tapewordw' \),
      \item each node labelled with a non-final universal configuration \( \tapewordw \states \tapewordw' \) has two child nodes which are labelled with the two successor configurations (wrt. $\transreldelta_1$ and $\transreldelta_2$) of \( \tapewordw \states \tapewordw' \),
      \item no node labelled with a final configuration has successors. 
  \end{itemize}
  \emph{Quasi-runs} of \( \atmM \) are defined analogously by replacing the notions of successors with quasi-successors. Note that every run is also a quasi-run but not vice versa.
  
  \noindent An ATM \( \atmM \) is (quasi-)\emph{accepting} if it has an  \emph{accepting} (quasi)-run, \ie one whose all leaves are labelled by accepting configurations. By~\cite[Corollary 3.6]{ChandraS76} the problem of checking if a given ATM is accepting is~$\TwoExpTime$-hard.

\section{A High-Level Overview of the Encoding}\label{sec:overview}

Let $\atmM$ be an ATM.\@
The core contribution of our paper is to present a polynomial-time reduction that, given~$\atmM$, constructs a pair $(\kbK_{\atmM}, \queryq_{\atmM})$ --- composed of an $\ALCself$ knowledge base and a conjunctive query --- such that $\kbK_{\atmM} \not\models \queryq_{\atmM}$ iff $\atmM$ is accepting.
Intuitively, the models of $\kbK$ will encode accepting quasi-runs of $\atmM$, \ie trees in which every node is a meaningful configuration of $\atmM$, but the tape contents of consecutive configurations might not be in sync as they should.
The query $\queryq_{\atmM}$ will be responsible for detecting such errors.
Hence, the existence of a countermodel for $\kbK_{\atmM}$ and $\queryq_{\atmM}$ will coincide with the existence of an accepting run of $\atmM$.
The intended models of $\kbK_{\atmM}$ look as follows: 
\begin{figure}[H]
\centering
\vspace{-1em} 
\hspace{5em}  
\includegraphics[scale=0.14]{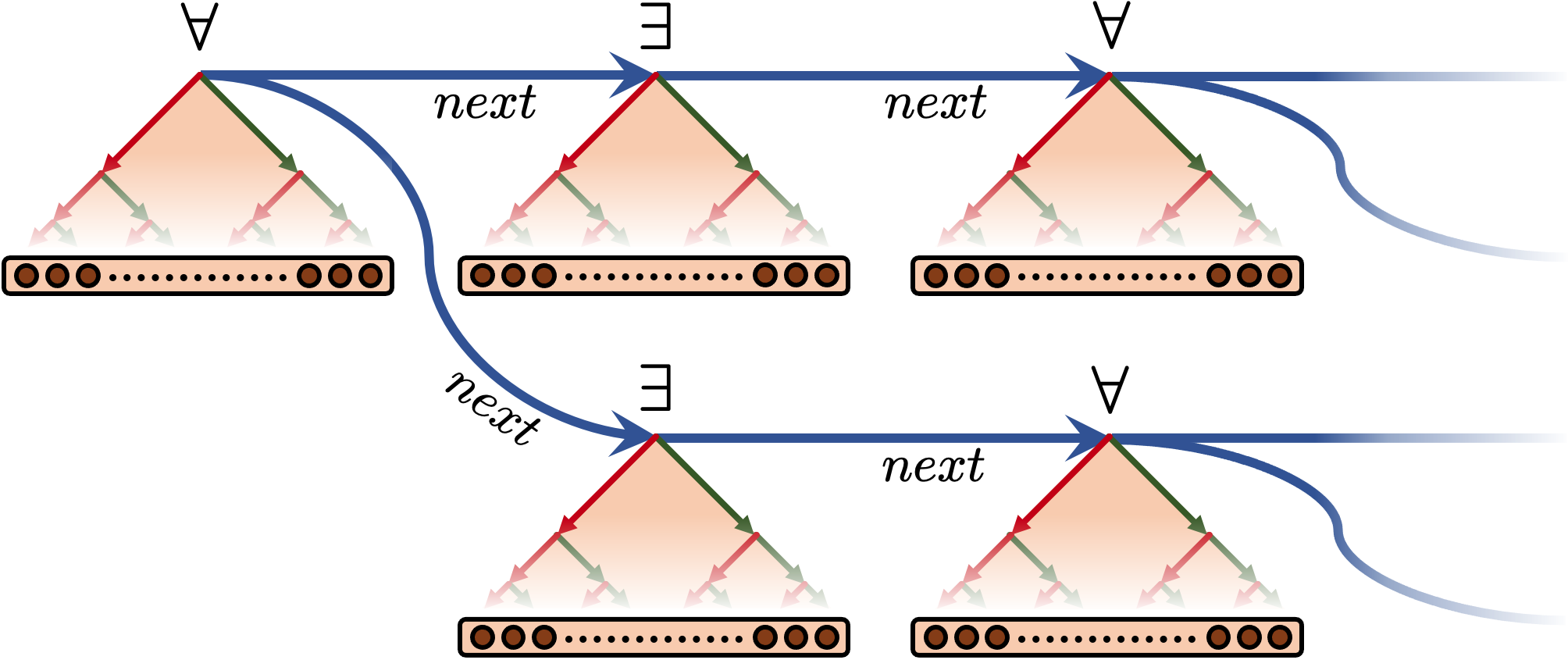}
\end{figure}

The depicted triangles are called the \emph{configuration trees} and encode configurations of $\atmM$. 
The information contained in these configuration trees is ``superimposed'' on identical \emph{configuration units}: full binary trees of height $\nofM{+}1$ decorated with many self-loops\footnote{The concrete purpose of the abundant presence of self-loops will only become clear later, starting from~\cref{lemma:query-from-root-to-leaf}.} that will provide the ``navigational infrastructure'' for the query $\queryq_{\atmM}$ to detect ``tape mismatches''. 
Every such tree has $2^{\nofM}$ nodes at its $\nofM$-th level and each of these nodes represents a single tape cell of a machine.
%
%
%
However, somehow unexpectedly, we do not just label them directly with concepts representing a letter from the alphabet. 
Instead, every node at the $\nofM$-th level also has two children labelled (from left to right) respectively with either $\conceptzero$ and $\conceptone$, or with $\conceptone$ and~$\conceptzero$. 
Whenever the left child is in $\conceptzero$ and the right child is in $\conceptone$, we think that their parent represents a cell filled with the letter $\letterzero$, while the converse situation encodes a cell filled with $\letterone$.
\vspace{-1em} 
\begin{figure}[H]
  \centering
\footnotesize
\begin{tabular}{c@{\hspace{5ex}}c}
  \includegraphics[width=0.1\columnwidth]{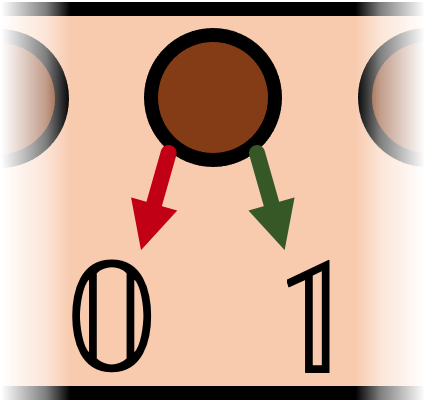} &
  \includegraphics[width=0.1\columnwidth]{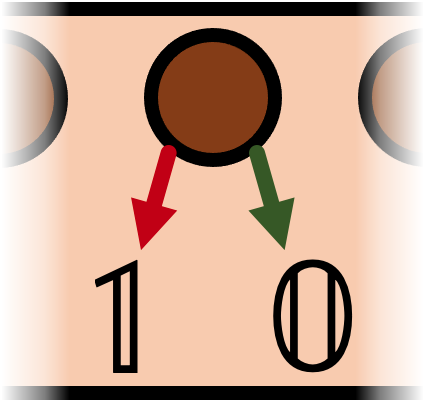}  \\
  encoding of tape cell & encoding of tape cell\\
  carrying symbol $\letterzero$ & carrying symbol $\letterone$ \\
\end{tabular}
\end{figure}
This encoding will be useful to avoid a seemingly required disjunction in the construction of $\queryq_{\atmM}$. 
Lastly, the roots of configuration units store all remaining necessary information required for encoding: the current state of $\atmM$, the previous and the current head position as well as the transition used to arrive at this node from the previous configuration.
Finally, the roots of configuration trees are interconnected by the role $\rolenext$ indicating that $(\domelem{r}, \domelem{r}') \in \rolenext^\interI$ holds iff the configuration represented by $\domelem{r}'$ is a quasi-successor of the configuration of $\domelem{r}$.

\section{Configuration Units}\label{sec:units}
In our encoding, a vital role is played by $n$-\emph{configuration units}, which will later form the backbone of configuration~trees. Roughly speaking, each $n$-configuration unit is a full binary tree of depth $n$, decorated with certain concepts, roles, and self-loops. 
We introduce configuration units by providing the formal definition, followed by a graphical depiction and an intuitive description.
In order to represent configuration units inside interpretations, we employ role names from $\rolesunit$ as well as concept names from $\conceptsunit$:
\begin{align*}
\rolesunit  := \{ \roleell_i, \roler_i, \rolenext \mid 1 \leq i \leq n\}, \qquad
\conceptsunit := \{ \conceptLvl_0, \conceptLvl_i, \conceptL, \conceptR, \conceptadr{i}^0, \conceptadr{i}^1 \mid 1 \leq i \leq n\}.
\end{align*}

\begin{definition}[configuration unit]\label{def:conf-units}
    Given a number $n$, an \emph{$n$-configuration unit} $\interU$ is an interpretation~$(\DeltaU,\cdot^\interU)$~with
    \begin{multicols}{2}
    \begin{itemize}\itemsep0em
    \item $\DeltaU = \{0,1\}^{\leq n} := \{ \addresswordw \in \{0,1\}^* \mid |\addresswordw| \leq n \}$,
    \item $(\conceptadr{i}^b)^{\interU} = \{ \addresswordw {\in} \DeltaU \mid |\addresswordw| \geq i \; \text{and its i-th letter is} \; b\}$,
    \item $\roleell_i^{\,\interU} = \{ (\addresswordw, \addresswordw0) \mid |\addresswordw| = i{-}1 \} \cup \{ (\addresswordw,\addresswordw) \mid \addresswordw \in \DeltaU \}$,
    \item $\conceptL^{\interU} \! \setminus \! \{ \varepsilon \} = \{ \addresswordw0 \in \DeltaU \}$,
    \end{itemize}
    \columnbreak
    \begin{itemize}\itemsep0em
    \item $\conceptLvl_i^{\interU} = \{ \addresswordw \in \DeltaU \mid |\addresswordw| = i\}$,
    \item $\rolenext^{\interU} = \{ (\addresswordw,\addresswordw) \mid |\addresswordw| = n \}$,
    \item $\roler_i^{\interU} = \{ (\addresswordw,\addresswordw1) \mid |\addresswordw| = i{-}1 \} \cup \{ (\addresswordw,\addresswordw) \mid \addresswordw \in \DeltaU \}$,
    \item $\conceptR^{\interU} = \DeltaU \setminus \conceptL^{\interU}$.
    \end{itemize}
    \end{multicols}
\end{definition}
\smallskip
\noindent The following drawing depicts a $2$-configuration unit.
\begin{figure}[H]
    \centering
    \includegraphics[scale=0.15]{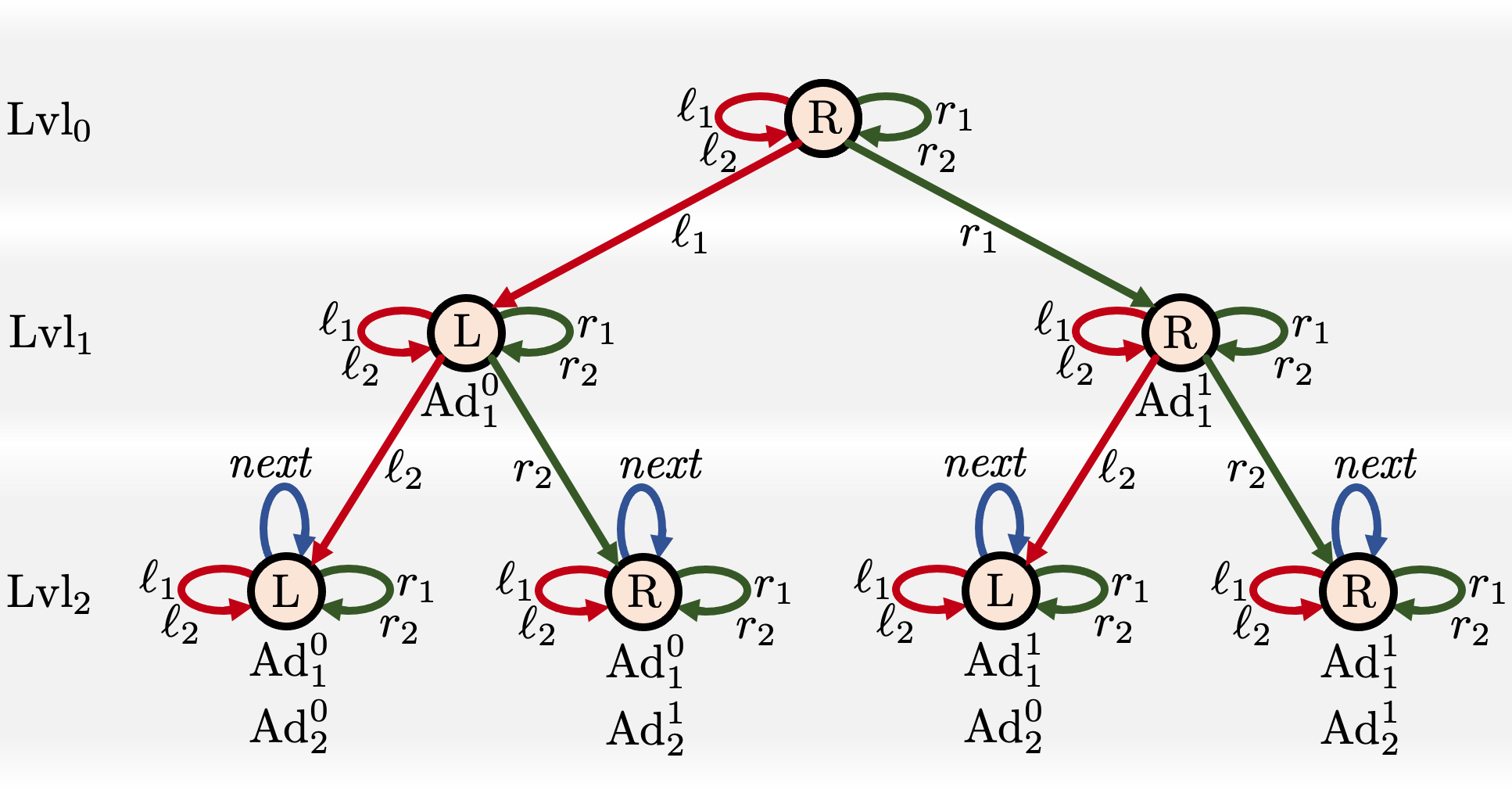}
\end{figure}

\noindent As one can see, the nodes in the tree are layered into levels according to their distance from the root. Nodes at the $i$-th level are members of the $\conceptLvl_i$ concept and their distance from the root is equal to $i$.
Next, each non-leaf node at the $i$-th level has two children, the left one and the right one (satisfying, respectively, the concepts $\conceptL$ and~$\conceptR$) and is connected to them via the role $\roleell_i$ and $\roler_i$, respectively. All nodes are equipped with $\roleell_i$- and $\roler_i$-self-loops and all leaves are additionally endowed with $\rolenext$-loops.
With all nodes inside the tree, we naturally associate their addresses, \ie their ``numbers'' when nodes from the $i$-th level are enumerated from left to right. In order to encode the address of a given node at the $i$-th level, we employ concepts $\conceptadr{1}^b, \conceptadr{2}^b, \ldots, \conceptadr{i}^b$ with ``values'' $b$ either $0$ or $1$, meaning that a node is in $\conceptadr{j}^b$ iff the $j$-th bit of its address is equal to~$b$. The most significant bit~is~$\conceptadr{1}^b$.\\

\noindent We next proceed with an axiomatisation of $n$-configuration units in $\ALCself$, obtained with the following GCIs.%
\footnote{As usual in such DL encodings, we cannot formalise such structures up to isomorphism, but the axiomatisation provided is sufficient in a sense made formally precise in the subsequent lemmas.}

\begin{enumerate}
    \item Each node is at exactly one level. 
    \begin{multicols}{2}
    \begin{description}\itemsep0em
        \item[\desclabel{(LvlCov)}{axiom:LvlCov}] $\topconcept \dlsubseteq \bigdlor_{i=0}^{n} \conceptLvl_i$
    \end{description}
    \columnbreak
    \begin{description}\itemsep0em
        \item[\desclabel{(LvlDisj[i,j])}{axiom:LvlDisjij}] $\conceptLvl_i \dland \conceptLvl_j \dlsubseteq \botconcept \; (\text{with} \; 0 \leq i < j \leq n)$
    \end{description}
    \end{multicols}

    \item All nodes carry self-loops for all role names from $\rolesunit$ except $\rolenext$ and all leaf nodes (and only they) carry a $\rolenext$-loop.
    \begin{multicols}{2}
    \begin{description}\itemsep0em
        \item[\desclabel{(all-loops-but-next)}{axiom:allloopsbutnext}] $\topconcept \dlsubseteq \bigdland_{\roles \in \rolesunit \setminus \{ \rolenext \}} \exists{\roles}.\Self \quad$
    \end{description}
    \columnbreak
    \begin{description}\itemsep0em
        \item[\desclabel{(leaves-next-loop)}{axiom:leavesnextloop}] $\conceptLvl_n \equiv \exists{\rolenext}.\Self$
    \end{description}
    \end{multicols}

    \item Every node is either a ``left'' node or a ``right'' node.
    \begin{multicols}{2}
    \begin{description}\itemsep0em
        \item[\desclabel{(LRCov)}{axiom:LRCov}] $\topconcept \dlsubseteq \conceptL \dlor \conceptR$
    \end{description}
    \columnbreak
    \begin{description}\itemsep0em
        \item[\desclabel{(LRDisj)}{axiom:LRDisj}] $\conceptL \dland \conceptR \dlsubseteq \botconcept$
    \end{description}
    \end{multicols}

    \item Each node at any level $0 \leq i < n$ has two successors (one left and one right).
    \begin{description}\itemsep0em
        \item[\desclabel{(LsuccLvl[i])}{axiom:LsuccLvli}] $\conceptLvl_{i} \, \dlsubseteq \exists{\roleell_{i{+}1}}.(\conceptLvl_{i{+}1}) \dland \forall{\roleell_{i{+}1}}.(\conceptLvl_{i{+}1} \to \conceptL)$
        \item[\desclabel{(RsuccLvl[i])}{axiom:RsuccLvli}] $\conceptLvl_{i} \dlsubseteq \exists{\roler_{i{+}1}}.(\conceptLvl_{i{+}1}) \dland \forall{\roler_{i{+}1}}.(\conceptLvl_{i{+}1} \to \conceptR)$
    \end{description}

    \item Address information for the nodes is created bit-wise and propagated down the tree.\\
    That is, once we are in the left (resp. right) node on the $i$-th level, this node and all nodes further below will have the $i$-th bit of their address set to $0$ (resp. $1$). Below we have $1 \leq i \leq n$, $b \in \{ 0,1 \}$ and $0 \leq j < i$.
    \begin{multicols}{2}
    \begin{description}\itemsep0em
        \item[\desclabel{(LBitZero[i])}{axiom:LBitZeroi}] $\conceptLvl_{i} \dland \conceptL \dlsubseteq \conceptadr{i}^0$
        \item[\desclabel{(AdDisj[i])}{axiom:AdDisji}] $\conceptadr{i}^0 \dland \conceptadr{i}^1 \dlsubseteq \botconcept$
    \end{description}
    \columnbreak
    \begin{description}\itemsep0em
        \item[\desclabel{(RBitOne[i])}{axiom:RBitOnei}] \;\;\;\;\;\;\;\;$\conceptLvl_{i} \dland \conceptR \dlsubseteq \conceptadr{i}^1$
        \item[\desclabel{(AdLvlDisj[i,j])}{axiom:AdLvlDisjij}] $\conceptadr{i}^b \dland \conceptLvl_{j} \dlsubseteq \botconcept $
    \end{description}
    \end{multicols}
    \begin{description}\itemsep0em
        \centering
        \item[\desclabel{(PropBit[i])}{axiom:propbiti}] $\conceptadr{i}^b \dlsubseteq \textstyle\bigdland_{j=1}^{n} \forall{\roleell_{j}}.\conceptadr{i}^b \dland \forall{\roler_{j}}.\conceptadr{i}^b$
    \end{description}
\end{enumerate}

This finishes the axiomatisation of $n$-configuration units.
Let $\kbKunit{n}$ denote the KB composed of all GCIs presented so far.
What remains to be done is to show that our axiomatisation is correct, in the sense of the following two lemmas.
Their proofs are routine, hence the reader may skip them at first reading.

\begin{lemma}\label{prop:n-conf-units-satisfies-KBKunit}
  Each $n$-configuration unit is a model of~$\kbKunit{n}$.
\end{lemma}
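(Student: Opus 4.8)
The plan is to fix an arbitrary $n$-configuration unit $\interU = (\DeltaU, \cdotU)$ and to verify, group by group, that it satisfies every GCI collected in $\kbKunit{n}$, in each case simply unfolding the explicit set-theoretic definitions from \cref{def:conf-units}. Since every axiom is a subsumption of the form $\conceptC \dlsubseteq \conceptD$, I would uniformly pick an arbitrary $\addresswordw \in \conceptC^{\interU}$ and read off from the definition of $\cdotU$ that $\addresswordw \in \conceptD^{\interU}$. The whole argument is bookkeeping, so I would organise it by the thematic blocks in which the axioms are presented.

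First I would dispatch the structural axioms. For the level axioms \cref{axiom:LvlCov,axiom:LvlDisjij}, the point is that every word $\addresswordw \in \{0,1\}^{\leq n}$ has a \emph{unique} length $|\addresswordw| \in \{0,\dots,n\}$ and that $\addresswordw \in \conceptLvl_i^{\interU}$ holds iff $|\addresswordw| = i$; uniqueness of the length yields both coverage and pairwise disjointness at once. The left/right axioms \cref{axiom:LRCov,axiom:LRDisj} are immediate from $\conceptR^{\interU} = \DeltaU \setminus \conceptL^{\interU}$, which makes $\conceptL^{\interU}, \conceptR^{\interU}$ a partition of $\DeltaU$ by construction. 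For the loop axioms, every $\roleell_i^{\interU}$ and $\roler_i^{\interU}$ contains the full diagonal $\{(\addresswordw,\addresswordw) \mid \addresswordw \in \DeltaU\}$, so $\exists{\roles}.\Self$ is satisfied everywhere for each $\roles \in \rolesunit \setminus \{\rolenext\}$, giving \cref{axiom:allloopsbutnext}; and since $\rolenext^{\interU} = \{(\addresswordw,\addresswordw) \mid |\addresswordw| = n\}$, a node carries a $\rolenext$-self-loop iff it lies at level $n$, which is exactly the biconditional \cref{axiom:leavesnextloop}.

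Next I would treat the successor axioms \cref{axiom:LsuccLvli,axiom:RsuccLvli}. For $\addresswordw$ at level $i < n$, the word $\addresswordw0$ witnesses the existential conjunct, as $(\addresswordw,\addresswordw0) \in \roleell_{i{+}1}^{\interU}$, $|\addresswordw0| = i{+}1$, and $\addresswordw0$ is a left node since it ends in $0$. For the universal conjunct I would enumerate the $\roleell_{i{+}1}$-successors of $\addresswordw$, namely $\addresswordw0$ and $\addresswordw$ itself via the self-loop: the self-loop is harmless because $\addresswordw$ sits at level $i \neq i{+}1$, so the guard $\conceptLvl_{i{+}1}$ fails there, while $\addresswordw0$ satisfies the consequent $\conceptL$. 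The case \cref{axiom:RsuccLvli} is symmetric using $\addresswordw1$ and $\conceptR$. Finally, for the address axioms I would use that $\addresswordw \in (\conceptadr{i}^b)^{\interU}$ means precisely $|\addresswordw| \geq i$ with $i$-th letter $b$: this yields \cref{axiom:LBitZeroi,axiom:RBitOnei} (the last letter of a length-$i$ left, resp. right, node is $0$, resp. $1$), \cref{axiom:AdDisji} (a fixed position cannot carry both $0$ and $1$), and \cref{axiom:AdLvlDisjij} (membership forces length $\geq i > j$, contradicting level $j$). For \cref{axiom:propbiti} I would split on self-loop versus genuine edge: any genuine edge $(\addresswordw,\addresswordw0) \in \roleell_j^{\interU}$ or $(\addresswordw,\addresswordw1) \in \roler_j^{\interU}$ out of a node in $(\conceptadr{i}^b)^{\interU}$ has source of length $\geq i$, so the target, obtained by appending one letter at the end, still has length $\geq i$ and the same $i$-th letter, hence remains in $(\conceptadr{i}^b)^{\interU}$.

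The proof is entirely routine; the only steps demanding care are the ubiquitous self-loops, which must be accounted for as additional role-successors whenever a universal restriction ($\forall$) is checked (in \cref{axiom:LsuccLvli,axiom:RsuccLvli,axiom:propbiti}), and the fact that \cref{def:conf-units} leaves it open whether the root $\varepsilon$ lies in $\conceptL^{\interU}$ or $\conceptR^{\interU}$. The latter ambiguity is innocuous: the only axioms that could be sensitive to it are the address axioms, which are restricted to $1 \leq i \leq n$ and thus never constrain the root, while \cref{axiom:LRCov,axiom:LRDisj} hold for $\varepsilon$ regardless of the choice because $\conceptR^{\interU}$ is defined as the complement of $\conceptL^{\interU}$.
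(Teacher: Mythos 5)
Your proof is correct and follows essentially the same route as the paper's: a routine axiom-by-axiom verification that unfolds the set-theoretic definitions from \cref{def:conf-units}, with the same key observations (self-loops as extra role-successors guarded away by the $\conceptLvl_{i{+}1}$ antecedent, and edges appending one letter so that address bits are preserved). Your closing remark on the deliberate ambiguity of whether $\varepsilon \in \conceptL^{\interU}$ is a nice extra touch not spelled out in the paper's proof, but it does not change the argument.
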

\begin{proof}
    Let $\interU$ be an $n$-configuration unit. We will proceed with all axioms $\alpha$ of $\kbKunit{n}$ and show that $\interU \models \alpha$.
    \begin{enumerate}\itemsep0em
        \item Note that $\DeltaU$ is a set of all binary words of length $\leq n$ and by definition we have that $\addresswordw \in \conceptLvl_i^{\interU}$ iff $|\addresswordw|=i$. Since every word $\addresswordw$ has a \emph{unique} length, it follows that $\interU \models \ref{axiom:LvlCov}$ and that $\interU \models \ref{axiom:LvlDisjij}$ for any indices $0 \leq j < i \leq n$.
        \item By definitions of $\conceptLvl_n^{\interU}$ and $\rolenext^{\interU}$, we immediately conclude $\interU \models$ \ref{axiom:leavesnextloop}.
              Moreover, for any role name $\roles$ from $\rolesunit \setminus \{ \rolenext \}$, we conclude $\interU \models \exists{\roles}.\Self$ from the fact that the set $\{ (\addresswordw, \addresswordw) \mid \addressword \in \DeltaU \}$ is explicitly stated as a part of $\roles^{\interU}$ in its definition.
            Thus, we infer $\interU \models$ \ref{axiom:allloopsbutnext}.
        \item The satisfaction of \ref{axiom:LRCov} and \ref{axiom:LRDisj} by $\interU$ is due to the equality $\conceptR^{\interU} = \DeltaU \setminus \conceptL^{\interU}$.
        \item Suppose $i < n$. We will show that $\interU \models$ \ref{axiom:LsuccLvli} (the satisfaction of \ref{axiom:RsuccLvli} is analogous).
        Hence, take any $\addresswordw \in \conceptLvl_i^{\interU}$. 
        Then $\addresswordw$ (by definition of $\roleell_{i{+}1}$) has exactly two $\roleell_{i{+}1}$-successors: $\addresswordw$ and $\addresswordw 0$. 
        Moreover, by definition of $\conceptLvl_{i{+}1}^{\interU}$ and $\conceptL^{\interU}$ we conclude that $\addresswordw 0 \in \conceptLvl_{i{+}1}^{\interU} \cap \conceptL^{\interU}$ and $\addresswordw \not\in \conceptLvl_{i{+}1}$. 
        Hence, $\interU \models$ \ref{axiom:LsuccLvli} holds, since $\addresswordw 0$ is the required (only) witness for the $\exists$- ($\forall$-) restriction.
        \item To see $\interU \models$ \ref{axiom:LBitZeroi}, it suffices to take any element $\addresswordw \in \conceptLvl_{i}^{\interU} \cap \conceptL^{\interU}$. 
        By the first inclusion we infer that $|\addresswordw|=i$ and by the second that the last letter of $\addresswordw$ is $0$. 
        Hence, we are done by the definition of $(\conceptadr{i}^b)^{\interU}$. 
        Similarly, we get $\interU \models$ \ref{axiom:RBitOnei}.
        The property $\interU \models$ \ref{axiom:AdDisji} is due to the fact that words from $\interU$ carry only one letter per position. 
        Next, to show $\interU \models \ref{axiom:AdLvlDisjij}$ for any $1 \leq j < i \leq n$ it suffices to see that, by definition, $\conceptLvl_j^{\interU}$ contains words of length $=j$ and $(\conceptadr{i}^b)^{\interU}$ contains words of length $\geq i$.
        Thus their intersection is empty, implying the satisfaction of \ref{axiom:AdLvlDisjij}.
        Finally, we need to prove $\interU \models \ref{axiom:propbiti}$ for $1 \leq i \leq n$. 
        To this end, note that $\addresswordw \in (\conceptadr{i}^b)^{\interU}$ is equivalent to saying that $|\addresswordw| \geq i$ and that the $i$-th letter of $\addresswordw$ is $b$.
        Now observe that, by definition, that for every $\roles \in \rolesunit \setminus \{ \rolenext \}$ we have that any $\roles$-successor of $\addresswordw$ can only be $\addresswordw$, $\addresswordw 0$, or $\addresswordw 1$ (if $i \neq 0$). 
        In any case, such a successor has length $\geq i$ and has its $i$-th letter equal to $b$. 
        Thus its membership in $(\conceptadr{i}^b)^{\interU}$ follows, finishing the proof. \qedhere
    \end{enumerate}
\end{proof}

The proof of the next lemma is by constructing an $n$-configuration unit by starting from $\domelemd$ and recursively traversing $\roleell_i$ and $\roler_i$ roles.
\begin{lemma}\label{lemma:n-conf-units-homomorphisms}
For any model~$\interI$ of $\kbKunit{n}$ and any $\domelemd \in \conceptLvl_0^{\interI}$ there is an $n$-configuration unit \;$\interU$ and a homomorphism $\homoh$ from $\interU$ into $\interI$ with $\homoh(\varepsilon) = \domelemd$.
\end{lemma}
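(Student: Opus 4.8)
The plan is to construct the required homomorphism $\homoh$ by recursion on the length of the binary addresses in $\DeltaU = \{0,1\}^{\leq n}$, maintaining the invariant that every address of length $i$ is sent to an element of $\conceptLvl_i^{\interI}$. Before starting, I would pin down which $n$-configuration unit to use: inspecting \cref{def:conf-units}, the only datum left open is the $\conceptL/\conceptR$-status of the root $\varepsilon$ (all other elements end in a letter and so are forced into $\conceptL$ or $\conceptR$, and all role and level extensions are fully determined). I would therefore fix $\interU$ by placing $\varepsilon$ into $\conceptL^{\interU}$ or $\conceptR^{\interU}$ so as to agree with $\domelemd$, which is legitimate since $\domelemd \in \conceptL^{\interI} \cup \conceptR^{\interI}$ by \ref{axiom:LRCov}.

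The recursion begins with $\homoh(\varepsilon) \deff \domelemd$, which lies in $\conceptLvl_0^{\interI}$ by hypothesis. Given an address $\addresswordw$ of length $i < n$ with $\homoh(\addresswordw) \in \conceptLvl_i^{\interI}$ already defined, I would invoke \ref{axiom:LsuccLvli} to pick an $\roleell_{i+1}$-successor of $\homoh(\addresswordw)$ lying in $\conceptLvl_{i+1}^{\interI}$ and declare it $\homoh(\addresswordw 0)$, and dually use \ref{axiom:RsuccLvli} to obtain an $\roler_{i+1}$-successor in $\conceptLvl_{i+1}^{\interI}$, taken as $\homoh(\addresswordw 1)$. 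This makes $\homoh$ total and establishes the invariant $\homoh(\addresswordw) \in \conceptLvl_{|\addresswordw|}^{\interI}$ for all $\addresswordw \in \DeltaU$, so all level concepts are preserved. Note that $\homoh$ need not be injective and $\interI$ may carry more edges than $\interU$, since only forward preservation is required.

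For role preservation I would split each $\roleell_i^{\interU}$ (and symmetrically each $\roler_i^{\interU}$) into tree edges and self-loops. A tree edge $(\addresswordw, \addresswordw 0)$ with $|\addresswordw| = i-1$ maps to an $\roleell_i$-edge of $\interI$ directly, by the existential witness chosen in the recursion; every self-loop $(\addresswordw, \addresswordw)$ is preserved because \ref{axiom:allloopsbutnext} forces an $\roleell_i$-self-loop on $\homoh(\addresswordw)$; and the leaf $\rolenext$-loops are handled the same way, using $\homoh(\addresswordw) \in \conceptLvl_n^{\interI}$ together with \ref{axiom:leavesnextloop}. The concepts $\conceptL$ and $\conceptR$ are preserved on non-root elements by the $\forall$-parts of \ref{axiom:LsuccLvli} and \ref{axiom:RsuccLvli}, which yield $\homoh(\addresswordw 0) \in \conceptL^{\interI}$ and $\homoh(\addresswordw 1) \in \conceptR^{\interI}$, while the root is covered by the initial choice of $\interU$.

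The one step requiring genuine care is the preservation of the address concepts $\conceptadr{i}^b$, which I would treat by a secondary induction on address length. In the base case, an address of length exactly $i$ ending in $b$ is mapped into $\conceptLvl_i^{\interI} \cap \conceptL^{\interI}$ (if $b=0$) or $\conceptLvl_i^{\interI} \cap \conceptR^{\interI}$ (if $b=1$), so \ref{axiom:LBitZeroi} resp.\ \ref{axiom:RBitOnei} place its image in $(\conceptadr{i}^b)^{\interI}$. In the inductive step, writing $\addresswordw = \addresswordw' c$ with $|\addresswordw'| \geq i$, the $i$-th bit of $\addresswordw'$ still equals $b$, so $\homoh(\addresswordw') \in (\conceptadr{i}^b)^{\interI}$ by induction; since $\homoh(\addresswordw)$ is an $\roleell_{|\addresswordw|}$- or $\roler_{|\addresswordw|}$-successor of $\homoh(\addresswordw')$, the matching $\forall$-conjunct of \ref{axiom:propbiti} forces $\homoh(\addresswordw) \in (\conceptadr{i}^b)^{\interI}$. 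This exhausts every concept and role name in $\conceptsunit$ and $\rolesunit$, giving the desired homomorphism with $\homoh(\varepsilon) = \domelemd$. I expect the bit-propagation induction, together with the careful separation of self-loops from tree edges, to be the only delicate part; everything else reduces to a direct appeal to the existential and universal axioms.
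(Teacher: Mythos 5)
Your proposal is correct and follows essentially the same route as the paper's proof: fix the unit whose root's $\conceptL/\conceptR$-status matches $\domelemd$, build $\homoh$ top-down using the existential witnesses of \ref{axiom:LsuccLvli} and \ref{axiom:RsuccLvli}, and verify preservation via \ref{axiom:allloopsbutnext}/\ref{axiom:leavesnextloop} for the self-loops, the $\forall$-conjuncts plus \ref{axiom:LBitZeroi}/\ref{axiom:RBitOnei} for $\conceptL$, $\conceptR$ and the fresh address bit, and \ref{axiom:propbiti} for propagating older bits downward. The only cosmetic difference is that you organize the verification as a level invariant plus a secondary induction on address length, whereas the paper folds everything into a single induction on the truncation $\homoh_{\leq k}$ being a homomorphism; also, like the paper, you should note explicitly that the chosen unit interprets all symbols outside $\rolesunit \cup \conceptsunit$ as empty sets, since \cref{def:conf-units} does not constrain them.
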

\begin{proof}
Let $\interU$ be an $n$-configuration unit with $\varepsilon \in \conceptL^{\interU}$ iff $\domelemd \in \conceptL^{\interI}$, and that interprets of all role and concept names outside $\rolesunit \cup \conceptsunit$ as empty sets. 
It is obvious that exactly one such unit exists.
In what follows, we are going to define a function $\homoh : \DeltaU \to \DeltaI$ inductively. 
Denoting the restriction of $\homoh$ to $\{ 0, 1 \}^{\leq k}$ by $\homoh_{\leq k}$, our inductive assumption states, for a given $k \leq n$, that $\homoh_{\leq i}$ is defined for all $i < k$ and $\homoh_{\leq i}$ is a homomorphism from $\restr{\interU}{\{ 0, 1 \}^{\leq k}}$ to $\interI$, \ie the restriction of $\interU$ to the set $\{ 0, 1 \}^{\leq k}$.

We first set $\homoh(\varepsilon) = \domelemd$. It is immediate to check that $\homoh_{\leq 0}$ is indeed a homomorphism (by $\interI \models$ \ref{axiom:allloopsbutnext} and our assumptions on $\varepsilon \in \conceptL^{\interU}$, and on empty interpretations of symbols outside $\rolesunit \cup \conceptsunit$).

For the inductive step, suppose that the assumption holds for some $1 \leq k \leq n$ and take a word~$\addresswordw \in \{ 0,1 \}^{k{-}1}$. 
We are going to define $\homoh(\addresswordw 0)$ as follows (the case of $\homoh(\addresswordw 1)$ is symmetric). 
Note that since $\homoh(\addresswordw) \in \conceptLvl_{k{-}1}^{\interI}$ (by the fact that $\homoh_{\leq k{-}1}$ is a homomorphism) and since $\interI \models$ \ref{axiom:LsuccLvli} (for $i$ equal to $k{-}1$) we conclude the existence of $\domelemd' \in \conceptLvl_{k}^{\interI}$ satisfying $(\homoh(\addresswordw),\domelemd') \in \roleell_k^{\interI}$. 
Note that also $\domelemd' \in \conceptL^{\interI} \cap (\conceptadr{k}^0)^{\interI}$ holds (by $\interI \models$ \ref{axiom:LsuccLvli} and $\interI \models$ \ref{axiom:LBitZeroi} with $i = k$).
Thus, we simply let $\homoh(\addresswordw 0) := \domelemd'$.

What remains to be shown is the fact that $\homoh_{\leq k}$ is a a homomorphism from $\restr{\interU}{\{ 0, 1 \}^{\leq k}}$ to $\interI$.
We already know that $\homoh_{\leq k{-}1}$ preserves concepts and roles, thus we can focus on concepts and roles involving words of length $k$.
Hence, take any $\addresswordw$ of length $k$ and proceed with concepts first.
Let $\conceptA$ be any concept name and assume that $\addresswordw \in \conceptA^{\interU}$. 
Our goal is to show that $\homoh(\addresswordw) \in \conceptA^{\interI}$.
The cases of $\conceptA \in \{ \conceptLvl_k, \conceptL, \conceptR, \conceptadr{k}^b \}$ follows immediately from the construction (see the discussion while defining them).
The cases of $\conceptA = \conceptLvl_j$ with $j \neq k$ and $\conceptA = \conceptadr{i}^b$ with $i \leq k$ cannot happen by the definition of $n$-configuration unit.
Thus the only cases left are these with $\conceptA = \conceptadr{i}^b$ with $i < k$.
But this is easy: let $\addresswordw = \addresswordu \addresswordv$ with $|\addresswordv| = 1$. 
By definition of $\interU$ we have that $\addresswordu \in (\conceptadr{i}^b)^{\interU}$. 
Since $\homoh_{\leq k{-}1}$ is a homomorphism, we infer $\homoh(\addresswordu) \in (\conceptadr{i}^b)^{\interI}$ and, by $\interI \models$ \ref{axiom:propbiti}, we conclude $\homoh(\addresswordw) \in (\conceptadr{i}^b)^{\interI}$.
Now we proceed with the case of role preservation by $\homoh$. 
Reasoning analogously, we may focus on roles $\roles$ from $\rolesunit$ and involving $\addresswordw$ only. 
Thus, by definition of $\interU$, the only cases are self-loops (that follows by $\interU \models$ \ref{axiom:allloopsbutnext}, \ref{axiom:leavesnextloop}) and the roles $\roleell_k^{\interU}$ and $\roler_k^{\interU}$ between the parent of $\addresswordw$ and $\addresswordw$, that follow from the construction.

This finishes the induction, implying that $\homoh$ is indeed a homomorphism from $\interU$ to $\interI$ satisfying~$\homoh(\varepsilon) = \domelemd$.
\end{proof}

At this point, we would like to give the reader some intuitions why units are decorated with different self-loops.
First, we show that their presence can be exploited to navigate top-down through a given unit.

 \begin{lemma}\label{lemma:from-root-to-leaf-path-via-loops}
 Let $\interU$ be an $n$-configuration unit. 
 Then for all $\addresswordw \in \DeltaU$ we have $(\varepsilon, \addresswordw) \in {\roleell_1}^{\interU} \circ {\roler_1}^{\interU} \circ \ldots \circ {\roleell_n}^{\interU} \circ {\roler_n}^{\interU}$ with ``$\circ$'' denoting the composition of relations, \ie $\roles^{\interU} \circ \rolet^{\interU} := \{ (\domelemc,\!\domeleme) \mid  (\domelemc,\!\domelemd) {\,\in\,} \roles^{\interU}  \text{ and } (\domelemd,\!\domeleme) {\,\in\,} \rolet^{\interU} \text{ for some } \domelemd\}$.
 \end{lemma}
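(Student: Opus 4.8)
The plan is to exploit the structural feature that makes the claim work at all: each role $\roleell_i^{\interU}$ and $\roler_i^{\interU}$ is the union of a genuine \emph{descent} edge, sending a length-$(i{-}1)$ word to one of its children, together with the full set of \emph{self-loops} $\{(\addresswordw,\addresswordw) \mid \addresswordw \in \DeltaU\}$. The self-loops let a walk traversing the $2n$-fold composition \emph{idle} at its current node whenever the role assigned to that level should not be used for descending. Reading the composition left to right, the walk passes through $2n{+}1$ elements $\varepsilon \xrightarrow{\roleell_1} \cdot \xrightarrow{\roler_1} \cdot \xrightarrow{\roleell_2} \cdots \xrightarrow{\roler_n} \addresswordw$, and the whole point is that at each level exactly one of the two available roles descends while the other idles.

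Fix a target word $\addresswordw = b_1 b_2 \cdots b_k \in \DeltaU$ with $k \leq n$, and write $\addresswordw_{\leq j}$ for its length-$j$ prefix, so that $\addresswordw_{\leq 0} = \varepsilon$ and $\addresswordw_{\leq k} = \addresswordw$. First I would describe the intended witnessing walk explicitly by processing the levels $j = 1, \ldots, n$ in order and consuming the pair $\roleell_j, \roler_j$ as follows. If $j \leq k$, the bit $b_j$ selects which role descends: when $b_j = 0$ we descend along $\roleell_j$ from $\addresswordw_{\leq j{-}1}$ to $\addresswordw_{\leq j{-}1}0 = \addresswordw_{\leq j}$ and then idle along $\roler_j$ via its self-loop; when $b_j = 1$ we idle along $\roleell_j$ at $\addresswordw_{\leq j{-}1}$ and then descend along $\roler_j$ to $\addresswordw_{\leq j{-}1}1 = \addresswordw_{\leq j}$. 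If instead $j > k$, both $\roleell_j$ and $\roler_j$ are consumed as self-loops, keeping the walk at $\addresswordw$.

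To turn this into a formal argument I would induct on $j$, establishing the invariant that $(\varepsilon, \addresswordw_{\leq \min(j,k)})$ lies in the partial composition $\roleell_1^{\interU} \circ \roler_1^{\interU} \circ \cdots \circ \roleell_j^{\interU} \circ \roler_j^{\interU}$. The base case $j = 0$ is just the identity pair $(\varepsilon, \varepsilon)$, and the inductive step appends the two level-$j$ edges exactly as described above; at $j = n$ the invariant yields $(\varepsilon, \addresswordw_{\leq k}) = (\varepsilon, \addresswordw)$, which is the claim since $k \leq n$. The only point requiring genuine care, and the main obstacle, is the \emph{applicability} of the descent edges: by definition $\roleell_j^{\interU}$ (resp.\ $\roler_j^{\interU}$) permits a descent only from a word of length exactly $j{-}1$. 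This is precisely what the invariant supplies, since before consuming level $j$ the walk sits at $\addresswordw_{\leq \min(j{-}1,k)}$, whose length is $j{-}1$ in the only case ($j \leq k$) where a descent is actually performed. Every remaining step is a self-loop, which is available at each element of $\DeltaU$ directly from the definitions, so no further verification is needed.
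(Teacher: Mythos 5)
Your proof is correct and takes essentially the same approach as the paper: an induction over the levels in which self-loops let the walk idle and the descent edge $\roleell_j$ or $\roler_j$ is used exactly when the $j$-th bit of the target word calls for it. The only difference is cosmetic — you fix the target word and track its prefixes via the invariant $(\varepsilon, \addresswordw_{\leq \min(j,k)})$, whereas the paper's inductive hypothesis quantifies over all words of length at most $i$ simultaneously — but the witnessing walk constructed is identical.
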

 \begin{proof}
 For simplicity we use $\roles_i^{\interU}$ as an abbreviation of ${\roleell_1}^{\interU} \circ {\roler_1}^{\interU} \circ \ldots \circ {\roleell_i}^{\interU} \circ {\roler_i}^{\interU}$. 
 The proof is by induction, where the assumption is that for all $1 \leq i \leq n$ we have that all words $w$ of length at most $i$ satisfy $(\varepsilon, \addresswordw) \in \roles_i^{\interU}$.
 The base case (for $\addresswordw \in \{\varepsilon, 0, 1\}$) is immediate to verify. 
 Now take any word $\addresswordw$ of length at most $i{+}1$ and consider the following two cases:
 \begin{enumerate}\itemsep0em
   \item $|\addresswordw| \leq i$. Hence, by the inductive assumption we have $(\varepsilon, w) \in \roles_i^{\interU}$. 
   Since $(\addresswordw,\addresswordw) \in \roleell_{i{+}1}^{\interU}$ and $(\addresswordw,\addresswordw) \in \roler_{i{+}1}^{\interU}$, by the definition of composition we conclude $(\varepsilon, \addresswordw) \in \roles_{i{+}1}^{\interU}$.
   \item $|\addresswordw| = i{+}1$. Hence, $\addresswordw = \addresswordu0$ or $\addresswordu1$ for some $|\addresswordu| = i$. 
   We focus on the first case, the second one is symmetric.
   By inductive assumption we infer that $(\varepsilon, \addresswordu) \in \roles_i^{\interU}$. Since $(\addresswordu, \addresswordu0) \in \roleell_{i{+}1}^{\interU}$ and $(\addresswordw, \addresswordw) \in \roler_{i{+}1}^{\interU}$ we conclude $(\varepsilon, \addresswordw) \in \roles_{i{+}1}^{\interU}$, which finishes the proof. \qedhere
 \end{enumerate}
 \end{proof}

\noindent As a corollary of~\cref{lemma:from-root-to-leaf-path-via-loops}, we conclude that there is a \emph{single} CQ detecting root-leaf pairs in~units.

 \begin{corollary}\label{lemma:query-from-root-to-leaf}
     Let $\interU$ be an $n$-configuration unit. 
     There is a single conjunctive query $\qrl$ with $\varx_0, \varx_{2n} {\in} \queryVar{\qrl}$ such that the set $M = \{ (\matchpi(\varx_0), \matchpi(\varx_{2n})) \mid \interU \modelsmatch{\matchpi} \qrl  \}$ is equal to the set of root-leaf pairs from $\interU$, \ie $\conceptLvl_0^{\interU} \times \conceptLvl_n^{\interU}$.
 \end{corollary}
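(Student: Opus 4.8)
The plan is to take $\qrl$ to be exactly the path query suggested by \cref{lemma:from-root-to-leaf-path-via-loops}, but \emph{guarded at both ends} by level-concept tests that fix the endpoints. Concretely, I would set
\[
\qrl \deff (\conceptLvl_0?; \roleell_1; \roler_1; \roleell_2; \roler_2; \ldots; \roleell_n; \roler_n; \conceptLvl_n?)(\varx_0, \varx_{2n}),
\]
which, after desugaring the path syntax, is the conjunction $\conceptLvl_0(\varx_0) \land \conceptLvl_n(\varx_{2n}) \land \bigdland_{i=1}^{n}(\roleell_i(\varx_{2i{-}2},\varx_{2i{-}1}) \land \roler_i(\varx_{2i{-}1},\varx_{2i}))$ over the variables $\varx_0,\ldots,\varx_{2n}$. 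I would then establish the two inclusions $M \subseteq \conceptLvl_0^{\interU}\times\conceptLvl_n^{\interU}$ and $M \supseteq \conceptLvl_0^{\interU}\times\conceptLvl_n^{\interU}$ separately.

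For the inclusion $M \subseteq \conceptLvl_0^{\interU}\times\conceptLvl_n^{\interU}$ the argument is immediate: any match $\matchpi$ of $\qrl$ must in particular satisfy the two concept atoms, hence $\matchpi(\varx_0)\in\conceptLvl_0^{\interU}$ and $\matchpi(\varx_{2n})\in\conceptLvl_n^{\interU}$. The reverse inclusion is where \cref{lemma:from-root-to-leaf-path-via-loops} does the real work. Since $\conceptLvl_0^{\interU}=\set{\varepsilon}$, a root-leaf pair has the form $(\varepsilon,\addresswordw)$ with $\addresswordw\in\conceptLvl_n^{\interU}$; the lemma guarantees $(\varepsilon,\addresswordw)\in {\roleell_1}^{\interU}\circ{\roler_1}^{\interU}\circ\cdots\circ{\roleell_n}^{\interU}\circ{\roler_n}^{\interU}$, and unfolding the definition of relational composition yields intermediate elements $\domelemd_1,\ldots,\domelemd_{2n{-}1}$ linking $\varepsilon$ to $\addresswordw$ one role at a time. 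Setting $\matchpi(\varx_0)=\varepsilon$, $\matchpi(\varx_{2n})=\addresswordw$, and $\matchpi(\varx_j)=\domelemd_j$ for the intermediate indices gives a match: all role atoms hold by construction, and the two concept atoms hold because $\varepsilon\in\conceptLvl_0^{\interU}$ and $\addresswordw\in\conceptLvl_n^{\interU}$.

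The one point I would be careful to flag — and the reason the guards cannot be dropped, so that this corollary is genuinely more than a cosmetic restatement of the lemma — is that the \emph{bare} path query $\roleell_1;\roler_1;\ldots;\roleell_n;\roler_n$ admits many spurious matches. Because \ref{axiom:allloopsbutnext} (equivalently, the definition of $\interU$) equips every node with self-loops on each $\roleell_i$ and $\roler_i$, a match could idle at a single node throughout, or start at some intermediate level $k>0$, self-loop through the first $2k$ roles (whose descending edges leave from levels $0,\ldots,k{-}1$ and hence do not apply at level $k$), and only then descend via $\roleell_{k{+}1},\roler_{k{+}1},\ldots$ all the way to a leaf. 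Such matches would put non-root/leaf pairs into $M$. The guards $\conceptLvl_0?$ and $\conceptLvl_n?$ neutralise exactly these degrees of freedom by pinning $\varx_0$ to the root and $\varx_{2n}$ to a leaf. I do not anticipate any technical obstacle beyond making this necessity explicit; both inclusions are otherwise routine.
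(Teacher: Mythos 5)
Your proposal is correct and coincides with the paper's own proof: the paper takes exactly the same query $\qrl = (\conceptLvl_0?;\roleell_1;\roler_1;\ldots;\roleell_n;\roler_n;\conceptLvl_n?)(\varx_0,\varx_{2n})$ and likewise derives correctness directly from \cref{lemma:from-root-to-leaf-path-via-loops}, merely leaving the two inclusions (which you spell out) implicit. Your additional remark on why the $\conceptLvl_0?$ and $\conceptLvl_n?$ guards are indispensable in the presence of the self-loops is a sound observation, consistent with the paper's intent.
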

 \begin{proof}
   Take $\qrl := (\conceptLvl_0?;\roleell_1;\roler_1;\ldots;\roleell_n;\roler_n;\conceptLvl_n?)(\varx_0, \varx_{2n})$.
   The correctness follows from~\cref{lemma:from-root-to-leaf-path-via-loops}.
 \end{proof}

\section{From Units to Configuration Trees}\label{sec:conf-tress}
In the next step, we enrich $(\nofM{+}1)$-configuration units with additional concepts, allowing the units to represent a meaningful configuration of our ATM $\atmM = (\nofM, \statesQ, \statesQexists, \initialstate, \acceptingstate, \rejectingstate, \transreldelta )$. To this end, we employ a variety of new concept names from $\conceptsconf$ consisting of
\[
\conceptsconf := \big\{  \HeadHere, \NoHeadHere, \currstates, \HeadPos{i}^b, \HeadLetter{\lettera}, \conceptLetter{\lettera}, \conceptzero, \conceptone \mid \states \in \statesQ, b \in \set{0,1}, i \in \set{1,\ldots,\nofM}, \lettera \in \set{\letterzero,\letterone} \big\}.
\]

Before turning to a formal definition we first describe how configurations are structurally represented in models.
Recall that a configuration of $\atmM$ is a word $\tapewordw\states \tapewordw'$ with $|\tapewordw\tapewordw'| = 2^{\nofM}$ (called \emph{tape}) and $\states \in \statesQ$.
In our encoding, this configuration will be represented by an $(\nofM{+}1)$-configuration unit $\interC$ decorated by concepts from $\conceptsconf$. 
The interpretation $\interC$ stores the state $\states$, by associating the state concept $\currstates$ to its root.
The tape content $\tapewordw\tapewordw'$ is represented by the internal nodes of $\interC$: the $i$-th letter of $\tapewordw\tapewordw'$ (\ie the content of the ATM's $i$-th tape cell) is represented by the $i$-th node (according to their binary addresses) at the $\nofM$-th level. In case this letter is $\letterzero$, the corresponding node will be assigned the concept
$\conceptLetter{\letterzero}$, while $\letterone$ is represented by $\conceptLetter{\letterone}$. Yet, for reasons that will become clear only later, the tape cells' content is additionally represented in another way:
if it is $\letterzero$, then we label the $i$-th node's left child with $\conceptzero$ and its right child with $\conceptone$. The reverse situation is implemented when node represents the letter $\letterone$. 
Finally, there is a unique tape cell that is visited by the head of $\atmM$ and the node corresponding to this cell is explicitly marked by the concept $\HeadHere$ while all other ``tape cell nodes'' are marked by $\NoHeadHere$. In order to implement this marking correctly, the head's position's address needs to be explicitly recorded. Consequently, $\interC$'s root node stores this address (binarily encoded using the $\HeadPos{i}^b$ concepts) and from there, these concept assignments are broadcasted to and stored in all tape cell nodes on the $\nofM$-th level. 
Similarly, we decorate $\interC$'s root with the concept $\HeadLetter{\lettera}$ meaning that the current letter scanned by the head is~$\lettera$.
The structure just described can be visualised as follows:
\begin{figure}[H]
\centering
\includegraphics[scale=0.17]{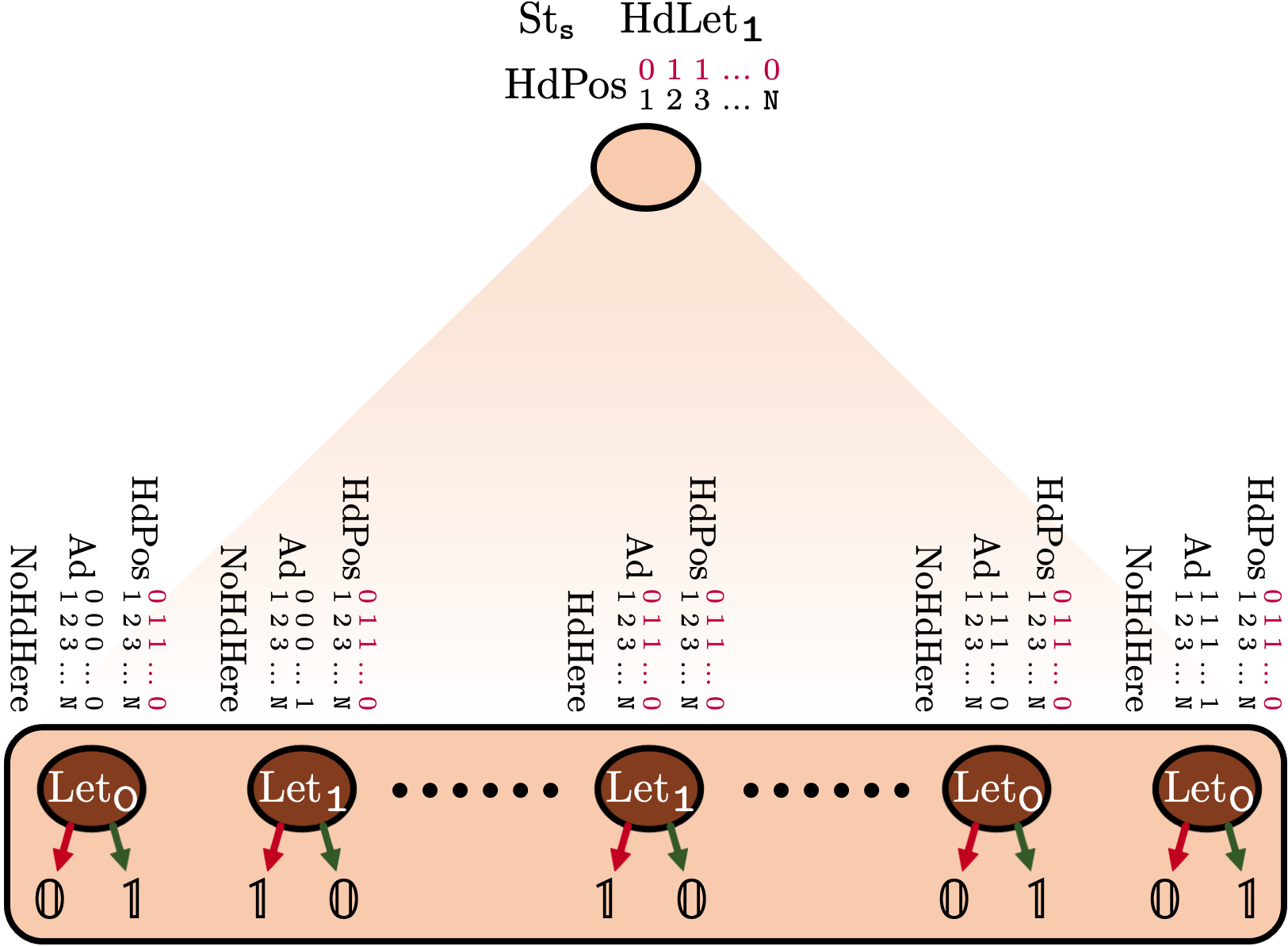}
\end{figure}

After this informal description and depiction, the formal definition of configuration trees should be plausible.

\begin{definition}[configuration tree]\label{def:conf-trees}
A \emph{configuration tree} $\interC$ of $\atmM$ is an interpretation $\interC = (\DeltaC,\cdot^\interC)$ such that $\interC$ is an $(\nofM{+}1)$-configuration unit additionally satisfying:
\begin{itemize}\itemsep0em
  \item There exists a unique state $\states \in \statesQ$ such that $(\currstates)^{\interC} = \{ \varepsilon \} $ and $(\currstate{\states'})^{\interC} = \emptyset$ for all $\states' \neq \states$.
  \item $(\conceptLvl_{\nofM{+}1})^{\interC} = \conceptzero^{\interC} \cup \conceptone^{\interC}$ and \  $\conceptzero^{\interC} \cap  \conceptone^{\interC} = \emptyset$.
  \item $(\conceptLetter{\letterzero})^{\interC} = \{ \addresswordw \mid \addresswordw0 \in \conceptzero^{\interC}, \addresswordw1 \in \conceptone^{\interC} \}$, $(\conceptLetter{\letterone})^{\interC} = \{ \addresswordw \mid \addresswordw0 \in \conceptone^{\interC}, \addresswordw1 \in \conceptzero^{\interC} \}$, and $(\conceptLetter{\letterzero})^{\interC} \cup (\conceptLetter{\letterone})^{\interC} = \conceptLvl_{\nofM}^{\interC}$. 
  \item There is a \emph{unique} word $\addresswordw_{\textit{head}}$ of length $\nofM$ witnessing $\HeadHere^{\interC} {=} \{ \addresswordw_{\textit{head}} \}$ and $\NoHeadHere^{\interC} {=} \conceptLvl_{\nofM}^{\interC} \setminus \{ \addresswordw_{\textit{head}} \}$.

  \item For $1 \leq i \leq \nofM$ and $b \in \{0,1\}$ s.t. $\addresswordw_{\textit{head}} \in (\conceptadr{i}^b)^{\interC}$ we~put\footnote{This is well-defined since for any $i$, we have that $\addresswordw_{\textit{head}}$ belongs to exactly one of $(\conceptadr{i}^0)^{\interC}, (\conceptadr{i}^1)^{\interC}$ by the definition of a unit.}
  $(\HeadPos{i}^b)^{\interC} = \conceptLvl_{0}^{\interC} {\cup} \conceptLvl_{\nofM}^{\interC} \; \text{and} \; (\HeadPos{i}^{1-b})^{\interC} = \emptyset$.
  \item $\HeadLetter{\lettera}^{\interC} = \{ \varepsilon \}$ and $\HeadLetter{\letterone-\lettera}^{\interC} = \emptyset$, where $\lettera$ is the unique letter from $\{ \letterzero, \letterone \}$ such that $\addresswordw_{\textit{head}} \in \conceptLetter{\lettera}^{\interC}$.\\
  \end{itemize}
\end{definition}

We next proceed with the corresponding axiomatisation.

\begin{enumerate}\itemsep0em
  \item To express that $\interC$ is an $(\nofM{+}1)$-configuration unit we integrate all the GCIs from $\kbKunit{\nofM{+}1}$.
  \item The root of $\interC$ is labelled with a unique state concept.
  \begin{multicols}{2}
  \begin{description}\itemsep0em
      \item[\desclabel{(StCov)}{axiom:StCov}] $\conceptLvl_{0} \equiv \bigdlor_{\states \in \statesQ} \currstates$
  \end{description}
  \columnbreak
  \begin{description}\itemsep0em
      \item[\desclabel{(StDisj[$\states$,$\states'$])}{axiom:StDisj}] $\currstates \dland \currstate{\states'} \dlsubseteq \botconcept \quad (\text{for all} \; \states \neq \states')$
  \end{description}
  \end{multicols}
  \item To axiomatise the coherent representation of the tape's content we employ:
  \begin{multicols}{2}
  \begin{description}\itemsep0em
      \item[\desclabel{(LetDisj)}{axiom:LetDisj}] \; $\conceptzero \dland \conceptone \dlsubseteq \botconcept$
      \item[\desclabel{(LetCov)}{axiom:LetCov}] $\conceptLvl_{\nofM{+}1} \equiv \conceptzero \dlor \conceptone$
  \end{description}
  \columnbreak
  \begin{description}\itemsep0em 
      \item[\desclabel{(LetConDisj)}{axiom:LetConDisj}] $\conceptLetter{\letterzero} \dland \conceptLetter{\letterone} \dlsubseteq \botconcept$
      \item[\desclabel{(LetConCov)}{axiom:LetConCov}]   $\conceptLetter{\letterzero} \dlor \conceptLetter{\letterone} \hspace{0.05em}\equiv \conceptLvl_{\nofM}$
  \end{description}
  \end{multicols}
  \begin{description}
      \centering
      \item[\desclabel{(EncLetZero)}{axiom:EncLetZero}] $\conceptLetter{\letterzero} \dlsubseteq \forall{\roleell_{\nofM{+}1}} ( \conceptLvl_{\nofM{+}1} \to \conceptzero) \dland \forall{\roler_{\nofM{+}1}} ( \conceptLvl_{\nofM{+}1} \to \conceptone)$
      \item[\desclabel{(EncLetOne)}{axiom:EncLetOne}] \;$\conceptLetter{\letterone} \dlsubseteq \forall{\roleell_{\nofM{+}1}} ( \conceptLvl_{\nofM{+}1} \to \conceptone) \dland \forall{\roler_{\nofM{+}1}} ( \conceptLvl_{\nofM{+}1} \to \conceptzero)$
  \end{description}
  \item Next, for the concepts $\HeadPos{1}^b, \ldots, \HeadPos{\nofM}^b$ we make sure they encode exactly one proper binary address (meant to encode the head's current position) in the root of $\interC$. Below we assume $1 \leq i \leq \nofM$.
  \begin{multicols}{2}
  \begin{description}\itemsep0em
      \item[\desclabel{(HdPosCov[i])}{axiom:HdPosCovi}] $\conceptLvl_{0} \dlor \conceptLvl_{\nofM} \equiv \HeadPos{i}^0 \dlor \HeadPos{i}^1$
  \end{description}
  \columnbreak
  \begin{description}\itemsep0em
      \item[\desclabel{(HdPosDisj[i])}{axiom:HdPosDisji}] $\HeadPos{i}^0 \dland \HeadPos{i}^1 \dlsubseteq \botconcept $
  \end{description}
  \end{multicols}
  \item Another step is to propagate the head address stored in the root to all nodes on the $\nofM$-th level of $\interC$.
    Here we exploit the presence of self-loops and~\cref{lemma:from-root-to-leaf-path-via-loops}, and use the following GCIs (for $1 \leq i \leq \nofM$ and~$b \in \{ 0,1 \}$):\footnote{We note that the same can be achieved without exploitation of self-loops by iteratively propagating the $\HeadPos{i}^b$ through the tree, but the first author believes that the presented formulation is elegant and makes the reader get used to the presence self-loops.}
  \begin{description}\itemsep0em
      \item[\desclabel{(PropHdPos[i,b])}{axiom:PropHdPosib}] $\conceptLvl_0 \dland \HeadPos{i}^b \dlsubseteq \forall{\roleell_1}\forall{\roler_1}\ldots\forall{\roleell_{\nofM}}\forall{\roler_{\nofM}} \; (\conceptLvl_\nofM \to \HeadPos{i}^b)$
  \end{description}
  \item We distinguish between the node representing the cell visited by the head (assigning $\HeadHere$) and the other cell nodes (assigning $\NoHeadHere$) by having every cell node compare their address (stored in the $\conceptadr{i}^b$ concepts) with the head address received through the broadcast from the root.
  \begin{description}\itemsep0em
      \item[\desclabel{(HdHereCov)}{axiom:HdHereCov}] \qquad \qquad \qquad \qquad \qquad \; \; \; $\HeadHere \dlor \NoHeadHere \equiv \conceptLvl_{\nofM}$
      \item[\desclabel{(HdHereEqualAdr)}{axiom:HdHereEqualAdr}] \;\,\, $\conceptLvl_{\nofM} \dland \bigdland_{i=1}^{\nofM} \bigdlor_{b\in\{0,1\}} \big( \conceptadr{i}^b \dland \HeadPos{i}^b \big) \; \; \; \, \dlsubseteq \HeadHere$
      \item[\desclabel{(NoHdHereDiffrAdr)}{axiom:NoHdHereDiffrAdr}] $\conceptLvl_{\nofM} \dland \bigdlor_{i=1}^{\nofM} \bigdlor_{b\in\{0,1\}}  \big( \conceptadr{i}^b \dland \HeadPos{i}^{1-b} \big) \dlsubseteq \NoHeadHere$      
  \end{description}
  \item We synchronise the letter scanned by the head of $\atmM$ with its ``recording'' in the root (below $\lettera \in \{ \letterzero, \letterone \}$).
  \begin{description}\itemsep0em
      \item[\desclabel{(HdLetCov)}{axiom:HdLetCov}] \qquad \qquad \qquad \qquad \qquad \, \, $\HeadLetter{\letterzero} \dlor \HeadLetter{\letterone} \equiv \conceptLvl_0$
      \item[\desclabel{(RetrHdLet[$\lettera$])}{axiom:RetrHdLet}] $\conceptLvl_0 \dland \exists{\roleell_1}\exists{\roler_1} \ldots \exists{\roleell_{\nofM}}\exists{\roler_{\nofM}} (\HeadHere \dland \conceptLetter{\lettera}) \: \dlsubseteq \HeadLetter{\lettera}$
      \item[\desclabel{(HdLetUnique[$\lettera$])}{axiom:HdLetUnique}] \quad \quad \quad \quad \quad \quad \quad \quad \quad \, \,  $\conceptLvl_0 \dland \HeadLetter{\lettera} \dlsubseteq \forall{\roleell_1}\forall{\roler_1} \ldots \forall{\roleell_{\nofM}}\forall{\roler_{\nofM}} (\HeadHere \to \conceptLetter{\lettera})$
  \end{description}
\end{enumerate}

This finishes the axiomatisation of configuration trees. 

\noindent For the knowledge base $\kbKconf$, composed of all presented GCIs, we present its correctness in the following lemmas.
Similarly to the previous section, both of them are routine and the reader may omit them at first reading.

\begin{lemma}\label{prop:conf-trees-satisfies-kbKconf}
Any configuration tree $\interC$ is a model of $\kbKconf$.
\end{lemma}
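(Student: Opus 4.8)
The plan is to go axiom-by-axiom through $\kbKconf$, grouped exactly as in its eight-part presentation, and in each case read off satisfaction directly from the structural clauses of \cref{def:conf-trees}. Group~1 requires no work at all: a configuration tree \emph{is} an $(\nofM{+}1)$-configuration unit, so \cref{prop:n-conf-units-satisfies-KBKunit} gives $\interC \models \kbKunit{\nofM{+}1}$ outright, and every unit GCI holds.

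For the state group, since $\varepsilon$ is the unique level-$0$ element, the clause providing a single $\states$ with $(\currstates)^{\interC} = \{\varepsilon\}$ and $(\currstate{\states'})^{\interC} = \emptyset$ for $\states' \neq \states$ immediately yields both \ref{axiom:StCov} and \ref{axiom:StDisj}. The content axioms \ref{axiom:LetDisj}, \ref{axiom:LetCov}, \ref{axiom:LetConDisj}, \ref{axiom:LetConCov} are likewise verbatim restatements of the disjointness and covering equalities that define $\conceptzero,\conceptone,\conceptLetter{\letterzero},\conceptLetter{\letterone}$. The only mildly delicate point in this block is \ref{axiom:EncLetZero} (and its mirror \ref{axiom:EncLetOne}): for $\addresswordw \in \conceptLetter{\letterzero}^{\interC}$ the $\roleell_{\nofM{+}1}$-successors of $\addresswordw$ are exactly $\addresswordw$ itself (the self-loop) and $\addresswordw 0$, of which only $\addresswordw 0$ satisfies $\conceptLvl_{\nofM{+}1}$; the guard $\conceptLvl_{\nofM{+}1}\to\conceptzero$ therefore constrains only $\addresswordw 0$, which lies in $\conceptzero^{\interC}$ by the definition of $\conceptLetter{\letterzero}$, and symmetrically for $\roler_{\nofM{+}1}$ and $\conceptone$.

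The head-position bookkeeping is where the definition's choice $(\HeadPos{i}^b)^{\interC} = \conceptLvl_0^{\interC}\cup\conceptLvl_\nofM^{\interC}$ (for the correct bit $b$ of $\addresswordw_{\textit{head}}$, with the opposite concept empty) pays off: it gives \ref{axiom:HdPosCovi} and \ref{axiom:HdPosDisji} directly, and it turns the propagation axiom \ref{axiom:PropHdPosib} into a triviality, since \emph{every} level-$\nofM$ node already belongs to $(\HeadPos{i}^b)^{\interC}$, so whichever node the chain $\forall\roleell_1\forall\roler_1\cdots$ reaches, if it satisfies $\conceptLvl_\nofM$ then it is in $\HeadPos{i}^b$ regardless of which self-loops were used. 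For the head-marking group I would compare addresses bitwise: a level-$\nofM$ node lies in $\conceptadr{i}^b\dland\HeadPos{i}^b$ precisely when its $i$-th address bit equals the $i$-th bit of $\addresswordw_{\textit{head}}$, so agreement across all $i$ forces the node to be $\addresswordw_{\textit{head}}$ and hence $\HeadHere$ (giving \ref{axiom:HdHereEqualAdr}), while a single mismatch certifies it is not $\addresswordw_{\textit{head}}$, hence $\NoHeadHere$ (giving \ref{axiom:NoHdHereDiffrAdr}); \ref{axiom:HdHereCov} is again just the covering equality of the definition.

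Finally, the scanned-letter group is the one genuinely relying on \cref{lemma:from-root-to-leaf-path-via-loops}. For \ref{axiom:RetrHdLet} I would witness the existential chain $\exists\roleell_1\exists\roler_1\cdots\exists\roleell_\nofM\exists\roler_\nofM$ by the node $\addresswordw_{\textit{head}}$, the lemma supplying a root-to-$\addresswordw_{\textit{head}}$ path through the self-loops; as $\addresswordw_{\textit{head}}$ is the only $\HeadHere$ node and lies in $\conceptLetter{\lettera}$ for exactly the defining letter $\lettera$, the premise can fire only for that $\lettera$, for which the root is in $\HeadLetter{\lettera}$ by definition. Axiom \ref{axiom:HdLetUnique} is the dual direction, using that the sole reachable $\HeadHere$ node $\addresswordw_{\textit{head}}$ lies in $\conceptLetter{\lettera}$, and \ref{axiom:HdLetCov} merely restates the root-labelling clause. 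I expect no real difficulty anywhere; the one thing demanding care throughout is the interplay between the self-loops and the $\forall$/$\exists$ restrictions in \ref{axiom:EncLetZero}, \ref{axiom:PropHdPosib}, and \ref{axiom:RetrHdLet}, where one must check that each self-loop successor either escapes the guarding level concept or coincides with the intended witness.
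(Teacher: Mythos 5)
Your proposal is correct and follows essentially the same route as the paper's own proof: an axiom-by-axiom verification of $\kbKconf$ against the clauses of \cref{def:conf-trees}, with Group~1 discharged by \cref{prop:n-conf-units-satisfies-KBKunit}, the propagation axiom \ref{axiom:PropHdPosib} trivialised by $(\HeadPos{i}^b)^{\interC} \supseteq \conceptLvl_\nofM^{\interC}$, and the head-marking axioms handled by bitwise address comparison (you argue directly where the paper argues by contradiction, but the content is identical). If anything, you are slightly more explicit than the paper about the one genuine subtlety — that the self-loop successors in \ref{axiom:EncLetZero}, \ref{axiom:PropHdPosib}, and \ref{axiom:RetrHdLet} are rendered harmless by the guarding level concepts.
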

\begin{proof} 
    Let $\interC$ be a configuration tree. We proceed with all axioms $\alpha$ of $\kbKconf$ and show that $\interC \models \alpha$.
    \begin{enumerate}\itemsep0em
    \item Since $\interC$ is an $(\nofM{+}1)$-configuration unit by definition, by~\cref{prop:n-conf-units-satisfies-KBKunit} we infer $\interC \models \kbKunit{\nofM{+}1}$.
    \item The satisfactions $\interC \models$ \ref{axiom:StCov} and $\interC \models$ \ref{axiom:StDisj} follow immediately from the first item of~\cref{def:conf-trees}.
    \item We have $\interC \models$ \ref{axiom:LetDisj} and $\interC \models$ \ref{axiom:LetCov} by the second item of~\cref{def:conf-trees}.
    Next, we have $\interC \models$ \ref{axiom:LetConDisj} by the fact that words in $\conceptLetter{\letterzero}^{\interC}$ and $\conceptLetter{\letterone}^{\interC}$ have different last letters.
    The satisfaction of \ref{axiom:LetConCov} by $\interC$ is due to the 3rd property in the 3rd item of~\cref{def:conf-trees}.
    What remains to show is the satisfaction of \ref{axiom:EncLetZero} (the proof of \ref{axiom:EncLetOne} is symmetric), but this is due to the fact that the only $\roleell_{\nofM{+}1}$- (resp. $\roler_{\nofM{+}1}$-) successor of any $\addresswordw \in \conceptLetter{\letterzero}^{\interC}$ (thus also from $\conceptLvl_{\nofM}^{\interC}$ by the previous statement) is $\addresswordw 0$ (resp. $\addresswordw1$) that belong to $\conceptzero^{\interC}$ (resp. $\conceptone^{\interC}$) by definition.
    \item Before we proceed with the next part of the axiomatisation, we establish a few properties about concept memberships of $\addresswordw_{\textit{head}}$.
    First, we have $\addresswordw_{\textit{head}} \in \conceptLvl_\nofM^{\interC}$ by definition (4th item).
    Moreover, for every $i$ we have that $\addresswordw_{\textit{head}}$ belongs to exactly one of $(\conceptadr{i}^0)^{\interC}, (\conceptadr{i}^1)^{\interC}$ by the definition of a unit. 
    Hence, the choice of $\addresswordw_{\textit{head}}$ fixes interpretations of $\conceptadr{i}^0, \conceptadr{i}^1$ and, as we will see, also $\HeadPos{i}^0$ and $\HeadPos{i}^1$.
    Indeed, by the 5th item of~\cref{def:conf-trees}, whenever $\addresswordw_{\textit{head}} \in (\conceptadr{i}^{b})^{\interC}$ then $(\HeadPos{i}^{b})^{\interC} = \conceptLvl_0^{\interC} \cup \conceptLvl_\nofM^{\interC}$ and $(\HeadPos{i}^{1-b})^{\interC} = \emptyset$, thus also $(\HeadPos{i}^{b})^{\interC} \cap (\HeadPos{i}^{1-b})^{\interC} = \emptyset$ and $(\HeadPos{i}^{b})^{\interC} \cup (\HeadPos{i}^{1-b})^{\interC} = \conceptLvl_0^{\interC} \cup \conceptLvl_\nofM^{\interC}$ hold.
    This establishes $\interC \models$ \ref{axiom:HdPosDisji} and $\interC \models$ \ref{axiom:HdPosCovi} for all $1 \leq i \leq \nofM$.
    \item If $\varepsilon \in (\conceptLvl_0 \dland \HeadPos{i}^{b})^{\interC}$, then by definition $\addresswordw_{\textit{head}} \in (\conceptadr{i}^{b})^{\interC}$. This implies $(\HeadPos{i}^{b})^{\interC} = \conceptLvl_0^{\interC} \cup \conceptLvl_\nofM^{\interC}$ and thus $\conceptLvl_\nofM^{\interC} \subseteq (\HeadPos{i}^{b})^{\interC}$, which is even stronger than the meaning of the GCI \ref{axiom:PropHdPosib}.
    Hence, we have $\interC \models$ \ref{axiom:PropHdPosib} for all $1 \leq i \leq \nofM$ and $0 \leq b \leq 1$.
    \item We next focus on proving $\interC \models$ \ref{axiom:HdHereEqualAdr} (the proof of \ref{axiom:NoHdHereDiffrAdr} is symmetric) and $\interC \models$ \ref{axiom:HdHereCov}. The second GCI follows by definition, hence we focus on the first one.
    Ad absurdum, assume that there is $\addresswordw \in \textstyle (\conceptLvl_{\nofM} \dland \bigdland_{i=1}^{\nofM} \bigdlor_{b\in\{0,1\}} \big( \conceptadr{i}^b \dland \HeadPos{i}^b \big))^{\interC}$ but $\addresswordw \not\in (\HeadHere)^{\interC}$.
    Since $\addresswordw \in (\conceptLvl_{\nofM})^{\interC}$ we infer $|\addresswordw| = \nofM$ and, by $\addresswordw \not\in (\HeadHere)^{\interC}$ and the 4th item of~\cref{def:conf-trees}, we infer $\addresswordw \neq \addresswordw_{\textit{head}}$.
    Thus there is a position $1 \leq k \leq \nofM$ such the $k$-th letter of $\addresswordw$ differs from the $k$-th letter of $\addresswordw_{\textit{head}}$ (called it $b$).
    So we have $\addresswordw \not\in (\conceptadr{k}^{b})^{\interC}$ and $\addresswordw \in (\conceptadr{k}^{1-b})^{\interC}$ (by definition of $\conceptadr{i}^{b}$ in units).
    At the same time the 5th item of~\cref{def:conf-trees} informs us that $(\HeadPos{k}^{b})^{\interC} = \conceptLvl_0^{\interC} \cup \conceptLvl_\nofM^{\interC}$ and $(\HeadPos{k}^{1-b})^{\interC} = \emptyset$, which implies $\addresswordw \in (\HeadPos{k}^{b})^{\interC}$ and $\addresswordw \not\in (\HeadPos{k}^{1-b})^{\interC}$
    This contradicts the fact that $\addresswordw \in (\bigdlor_{b\in\{0,1\}} \big( \conceptadr{i}^b \dland \HeadPos{i}^b \big))^{\interC}$.
    \item We proceed with the last three GCIs. 
    Satisfaction of \ref{axiom:HdLetCov} by $\interC$ follows by definition.
    For~\ref{axiom:RetrHdLet}, assume that its antecedent is non-empty (which means that it is equal to $\{ \varepsilon \}$). 
    This implies, by the 4th and the last item of~\cref{def:conf-trees}, that $\addresswordw_{\textit{head}} \in \conceptLetter{\lettera}^{\interC}$. 
    Thus, by definition, $\HeadLetter{\lettera}^{\interC}$ is equal to $\{ \varepsilon \}$, which obviously contains $\varepsilon$. 
    Hence, $\interC\models$  \ref{axiom:RetrHdLet}.
    Finally, $\interC \models $\ref{axiom:HdLetUnique} is shown as follows.
    If the antecedent of \ref{axiom:HdLetUnique} is non-empty then it is equal to $\{ \varepsilon \}$. 
    Thus by the list item of~\cref{def:conf-trees}, we have $\conceptLetter{\lettera}^{\interC} = \{ \addresswordw_{\textit{head}} \}$. 
    Since $\HeadHere^{\interC} = \{ \addresswordw_{\textit{head}} \}$ (by the 4th item of~\cref{def:conf-trees}), we conclude $\HeadHere^{\interC} \subseteq \conceptLetter{\lettera}^{\interC}$. Therefore $\interC \models$ \ref{axiom:HdLetUnique} holds, finishing the proof. \qedhere
  \end{enumerate}
\end{proof}

\begin{lemma}\label{lemma:conf-trees-homomorphisms}
For any model $\interI$ of $\kbKconf$ and any $\domelemd \in \conceptLvl_0^{\interI}$ there is a configuration tree $\interC$ and a homomorphism~$\homoh$ from $\interC$ into $\interI$ with $\homoh(\varepsilon) = \domelemd$.
\end{lemma}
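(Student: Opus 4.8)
The plan is to erect $\interC$ on top of the configuration unit provided by \cref{lemma:n-conf-units-homomorphisms}, and then to fill in the $\conceptsconf$-concepts canonically, reading off the relevant data from $\interI$ along $\homoh$. Since $\kbKconf$ contains every GCI of $\kbKunit{\nofM{+}1}$ and $\interI \models \kbKconf$, we have $\interI \models \kbKunit{\nofM{+}1}$, so \cref{lemma:n-conf-units-homomorphisms} hands us an $(\nofM{+}1)$-configuration unit $\interU$ together with a homomorphism $\homoh \colon \interU \to \interI$ satisfying $\homoh(\varepsilon) = \domelemd$. This already discharges the entire ``navigational skeleton'', namely the roles of $\rolesunit$ and the concepts of $\conceptsunit$; since $\conceptsconf$ contributes no new \emph{roles}, role preservation is inherited unchanged, and what remains is only to interpret the configuration-specific concepts on $\DeltaU$ so that the result is a genuine configuration tree and $\homoh$ still preserves concepts.

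Second, I would extract from $\interI$ the few ``free'' parameters which, by \cref{def:conf-trees}, pin down a configuration tree, and set all derived concepts accordingly:
\begin{itemize}\itemsep0em
  \item \emph{state}: as $\domelemd \in \conceptLvl_0^{\interI}$ and $\interI \models$ \ref{axiom:StCov}, \ref{axiom:StDisj}, there is a unique $\states$ with $\domelemd \in \currstates^{\interI}$; set $\currstates^{\interC} = \{\varepsilon\}$ and all other state concepts empty;
  \item \emph{tape letters}: for each level-$\nofM$ node $\addresswordw$, axioms \ref{axiom:LetConCov}, \ref{axiom:LetConDisj} yield a unique $\lettera$ with $\homoh(\addresswordw) \in \conceptLetter{\lettera}^{\interI}$; define $\conceptLetter{\lettera}^{\interC}$ by this pullback and then fix $\conceptzero^{\interC}, \conceptone^{\interC}$ on the leaves via the left/right convention of \cref{def:conf-trees};
  \item \emph{head address}: axioms \ref{axiom:HdPosCovi}, \ref{axiom:HdPosDisji} force $\domelemd$ to encode a unique word $\addresswordw_{\textit{head}}$ of length $\nofM$ through the $\HeadPos{i}^b$; use it to set $\HeadHere^{\interC}, \NoHeadHere^{\interC}$, the broadcast $(\HeadPos{i}^b)^{\interC}$, and finally $\HeadLetter{\lettera}^{\interC}$ from the letter residing at $\addresswordw_{\textit{head}}$.
\end{itemize}
By construction $\interC$ meets all six clauses of \cref{def:conf-trees}, so it is a configuration tree.

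Third, I would verify that $\homoh$ preserves each new concept, which is where the real work lies. For concepts pinned to the root ($\currstates$, the root $\HeadPos{i}^b$, $\HeadLetter{\lettera}$) or defined by direct pullback ($\conceptLetter{\lettera}$ at level $\nofM$), preservation is immediate from the choices above. The \emph{derived} concepts are the interesting ones, and each is matched to exactly one GCI evaluated at $\domelemd$ or at a tape-cell node: $\conceptzero/\conceptone$ on the leaves follow from \ref{axiom:EncLetZero}/\ref{axiom:EncLetOne}; the broadcast values $(\HeadPos{i}^b)^{\interC}$ on level $\nofM$ follow from \ref{axiom:PropHdPosib}; $\HeadHere$ and $\NoHeadHere$ follow from \ref{axiom:HdHereEqualAdr} and \ref{axiom:NoHdHereDiffrAdr} after comparing a node's address bits (carried by the $\conceptadr{i}^b$, preserved by $\homoh$) against the broadcast head bits; and $\HeadLetter{\lettera}$ at the root follows from \ref{axiom:RetrHdLet}. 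The uniform engine making all these applicable is that, by \cref{lemma:from-root-to-leaf-path-via-loops} in $\interU$ pushed through $\homoh$, every level-$\nofM$ node $\addresswordw$ satisfies $(\domelemd, \homoh(\addresswordw)) \in \roleell_1^{\interI} \circ \roler_1^{\interI} \circ \cdots \circ \roleell_\nofM^{\interI} \circ \roler_\nofM^{\interI}$ (for level-$\nofM$ nodes the chain up to $\roleell_\nofM \circ \roler_\nofM$ suffices), so the $\forall$- and $\exists$-chains in the propagation and retrieval axioms reach exactly $\homoh(\addresswordw)$ from $\domelemd$.

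I expect the main obstacle to be precisely this last verification for the derived concepts: their values in $\interC$ are dictated by configuration-tree bookkeeping rather than by a direct pullback, so one must argue that the GCIs of $\kbKconf$ force $\interI$ to exhibit the same derived data along the images of the unit's nodes. The head concepts $\HeadHere/\NoHeadHere$ are the fiddliest, since one has to track, bit by bit, the agreement (respectively a witnessed disagreement at some position $k$) between a cell's address and the broadcast head address, and then choose the correct disjunct of \ref{axiom:HdHereEqualAdr} (respectively \ref{axiom:NoHdHereDiffrAdr}). Everything else becomes routine once the root-to-leaf reachability of \cref{lemma:from-root-to-leaf-path-via-loops} is in hand.
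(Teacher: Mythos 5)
Your proposal is correct; it rests on the same two pillars as the paper's proof (\cref{lemma:n-conf-units-homomorphisms} for the skeleton and \cref{lemma:from-root-to-leaf-path-via-loops} as the reachability engine, plus the same GCIs), but it distributes the work dually. The paper defines \emph{all} concepts of $\conceptsconf$ by pullback along $\homoh$, \ie $\conceptC^{\interC} := \set{\addresswordw \mid \homoh(\addresswordw) \in \conceptC^{\interI}}$, so that $\homoh$ is a homomorphism by construction, and then the entire proof consists in checking that this pulled-back structure satisfies the clauses of \cref{def:conf-trees}. You instead read off only the free parameters (state, cell letters, head address) from $\interI$, define the derived concepts ($\conceptzero/\conceptone$, the broadcast $\HeadPos{i}^b$, $\HeadHere/\NoHeadHere$, $\HeadLetter{\lettera}$) by the bookkeeping of \cref{def:conf-trees} itself, so that configuration-tree-hood holds essentially by fiat, and shift all the work into verifying that $\homoh$ preserves the derived concepts --- which you correctly reduce to \ref{axiom:EncLetZero}/\ref{axiom:EncLetOne}, \ref{axiom:PropHdPosib}, \ref{axiom:HdHereEqualAdr}/\ref{axiom:NoHdHereDiffrAdr} and \ref{axiom:RetrHdLet}, fed by root-to-leaf reachability pushed through $\homoh$. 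The two routes invoke the same axioms at the same spots and are of comparable length. One genuine benefit of yours: preservation is a one-directional condition, so you never need to exclude ``junk'' memberships in $\interI$, \eg an element of $\conceptLvl_{\nofM}^{\interI}$ with mismatching address that lies in $\HeadHere^{\interI}$ \emph{in addition to} $\NoHeadHere^{\interI}$ (no GCI of $\kbKconf$ makes this pair disjoint), or spurious state memberships on non-root images. The pullback approach must argue that such sets come out empty to obtain the exact equalities demanded by \cref{def:conf-trees}; the paper does so explicitly for the state concepts and for $\HeadLetter{\lettera}$ (via \ref{axiom:HdLetUnique}), but is rather terse precisely at the $\HeadHere$/$\NoHeadHere$ step, whereas in your construction the issue never arises.
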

\begin{proof}
    By~\cref{lemma:n-conf-units-homomorphisms} there is an $(\nofM{+}1)$-configuration unit $\interU$ and a homomorphism $\homoh : \interU \to \interI$ with $\homoh(\varepsilon) = \domelemd$. 
    Moreover, as the symbols outside $\rolesunit \cup \conceptsunit$ do not appear in~\cref{def:conf-units} we can assume that $\interU$ interprets them as empty sets.
    Let $\interC = (\DeltaU, \cdot^{\interC})$ be an interpretation that is obtained from changing the meaning of concepts from $\conceptsconf$ as follows: for any $\conceptC \in \conceptsconf$ we let $\conceptC^{\interC} := \{ \addresswordw \mid \homoh(\addresswordw) \in \conceptC^{\interU} \}$.
    All other symbols are interpreted as in $\interU$.
    Clearly $\homoh$ is a homomorphism from $\interC$ into $\interI$ with $\homoh(\varepsilon) = \domelemd$.
    It suffices to prove that $\interC$ is a configuration tree. 
    This is done by routine investigation of items from~\cref{def:conf-trees} and the presented GCIs.
    \begin{itemize}\itemsep0em
    \item Let $\states$ be the unique state satisfying $\homoh(\varepsilon) \in \currstates^{\interI}$: it exists by $\interI \models$ \ref{axiom:StCov} and is unique by~$\interI \models$ \ref{axiom:StDisj}. 
    Hence, $(\currstates)^{\interC} = \{ \varepsilon \}$ holds.
    Moreover, $(\currstate{\states'})^{\interC} = \emptyset$ for all $\states' \neq \statesQ$.
    Note that $\varepsilon \not\in (\currstate{\states'})^{\interC}$ by \ref{axiom:StDisj} and for other elements (so from $\conceptLvl_i^{\interC}$ for some $i > 0$) their membership in $(\currstate{\states'})^{\interC}$ would violate $\interI \models \textstyle \bigdlor_{\states \in \statesQ} \currstates$.

    \item Similarly, the equalities $(\conceptLvl_{\nofM{+}1})^{\interC} = \conceptzero^{\interC} \cup \conceptone^{\interC}$ and \  $\conceptzero^{\interC} \cap  \conceptone^{\interC} = \emptyset$ follow by $\interI \models$ \ref{axiom:LetDisj} and  $\interI \models$ \ref{axiom:LetCov}.

    \item Analogously, we have $(\conceptLetter{\letterzero})^{\interC} \cup (\conceptLetter{\letterone})^{\interC} = \conceptLvl_{\nofM}^{\interC}$ by $\interI \models$ \ref{axiom:LetConDisj} and $\interI \models$ \ref{axiom:LetConCov}. 
    Note that this implies $(\conceptLetter{\letterzero})^{\interC} \subseteq \conceptLvl_{\nofM}^{\interC}$.
    We next show the equality $(\heartsuit): (\conceptLetter{\letterzero})^{\interC} = \{ \addresswordw  \in \DeltaU \mid \addresswordw0 \in \conceptzero^{\interC}, \addresswordw1 \in \conceptone^{\interC} \}$ (the related equality for $\conceptLetter{\letterone}$ is symmetric).
    Take any $\addresswordw \in (\conceptLetter{\letterzero})^{\interC}$ (thus also $\in \conceptLvl_{\nofM}^{\interC}$).
    By the fact that $\interC$ is a unit, we infer that $\addresswordw0 \in \conceptLvl_{\nofM{+}1}^{\interC}$ and $\addresswordw1 \in \conceptLvl_{\nofM{+}1}^{\interC}$ exist, and moreover $\addresswordw$ is linked to them, respectively, by the roles $\roleell_{\nofM{+}1}^{\interC}$ and $\roler_{\nofM{+}1}^{\interC}$.
    Hence, by the homomorphic assignment of concepts from $\conceptsconf$ and the satisfaction $\interI \models$ \ref{axiom:EncLetZero} we have that $\addresswordw0 \in \conceptzero^{\interC}$ and $\addresswordw1 \in \conceptone^{\interC}$. Hence, the $\subseteq$-relationship of $(\heartsuit)$ follows.
    For the $\supseteq$-relationship take any $\addresswordw \in \DeltaU$ s.t. $\addresswordw0 \in \conceptzero^{\interC}, \addresswordw1 \in \conceptone^{\interC}$ and note that $\addresswordw \in \conceptLvl_{\nofM}^{\interC}$ and $\addresswordw b \in \conceptLvl_{\nofM{+}1}^{\interC}$ hold.
    Otherwise, by the fact that $\interC$ is an $(\nofM{+}1)$-configuration unit, the element $\addresswordw0$ does not exists or it belongs to $\conceptLvl_{i}^{\interC}$ for $i \neq \nofM{+}1$, which violates $\interI \models$ \ref{axiom:LetCov}.
    By $\interI \models$ \ref{axiom:LetConCov} we know that $\addresswordw \in (\conceptLetter{\letterzero})^{\interC} \cup (\conceptLetter{\letterone})^{\interC}$.
    If $\addresswordw \in (\conceptLetter{\letterzero})^{\interC}$ holds then we are done.
    Thus, assume towards a contradiction that $\addresswordw \in (\conceptLetter{\letterone})^{\interC}$.
    But then the first conjunct of the consequent of \ref{axiom:EncLetOne} is violated, contradicting its satisfaction by $\interI$.
    Hence~$(\heartsuit)$ holds.

    \item Let $\addresswordw_{\textit{head}}$ be the unique $\nofM$-digit binary word whose $i$-th letter is equal to $b$ iff $\varepsilon \in (\HeadPos{i}^b)^{\interC}$ holds. This is well defined due to $\interI \models$ \ref{axiom:HdPosDisji} and $\interI \models$ \ref{axiom:HdPosCovi} for all $1 \leq i \leq \nofM$.
    It remains to show that this $\addresswordw_{\textit{head}}$ indeed plays the role of $\addresswordw_{\textit{head}}$ in the sense of~\cref{def:conf-trees}.
    Take any $1 \leq i \leq \nofM$ and let $b$ be the $i$-th letter of $\addresswordw_{\textit{head}}$. 
    We will show that $(\HeadPos{i}^{1-b})^{\interC} = \emptyset$ holds. 
    Ad absurdum, assume that it is non-empty and contains $\addresswordw$.
    By $\interI \models$ \ref{axiom:HdPosCovi} we have that either $\addresswordw \in \conceptLvl_0^{\interC}$ or $\addresswordw \in \conceptLvl_\nofM^{\interC}$. 
    The first case is not possible due to $\interI \models$ \ref{axiom:HdPosDisji} (we already have that $\varepsilon \in (\HeadPos{i}^b)^{\interC}$).
    Thus $\addresswordw \in \conceptLvl_\nofM^{\interC}$. 
    But then it violates $\interI \models$ \ref{axiom:PropHdPosib}, which by~\cref{lemma:from-root-to-leaf-path-via-loops} enforces that $\addresswordw \in (\HeadPos{i}^{b})^{\interC}$.
    Hence, $(\clubsuit{:})\; (\HeadPos{i}^{1-b})^{\interC} = \emptyset$, which by \ref{axiom:HdPosCovi} implies $(\clubsuit'{:})\; \conceptLvl_0^{\interC} \cup \conceptLvl_{\nofM}^{\interC} = (\HeadPos{i}^{b})^{\interC}$.
    Thus it follows, by definition of $\addresswordw_{\textit{head}}$, that $\addresswordw_{\textit{head}} \in (\conceptadr{i}^b)^{\interC}$ iff $\addresswordw_{\textit{head}} \in  (\HeadPos{i}^{b})^{\interC}$, concluding that $\interC$ satisfies the 5th item of~\cref{def:conf-trees}.
    Next, by exploiting $(\clubsuit)$ and $(\clubsuit')$ and applying the satisfaction of \ref{axiom:HdHereEqualAdr}, \ref{axiom:NoHdHereDiffrAdr} and \ref{axiom:HdHereCov} by $\interC$, we conclude the satisfaction of the 4th item of~\cref{def:conf-trees}.
    This, among the other properties, results in $\HeadHere^{\interC} = \{ \addresswordw_{\textit{head}} \}$.
    Finally, let $\lettera$ be the unique letter satisfying $\addresswordw_{\textit{head}} \in \conceptLetter{\lettera}^{\interC}$ (it exists and is unique by $\interI \models$ \ref{axiom:LetConDisj}, \ref{axiom:LetConCov}). 
    We claim that $\HeadLetter{\lettera}^{\interC} = \{ \varepsilon \}$ and $\HeadLetter{\letterone-\lettera}^{\interC} = \emptyset$.
    Note that both $\HeadLetter{\lettera}^{\interC}$ and $\HeadLetter{\letterone-\lettera}^{\interC}$ are subsets of $\{ \varepsilon \}$ by $\interI \models $ \ref{axiom:HdLetCov}, thus it suffices to show that $\varepsilon \in \HeadLetter{\lettera}^{\interC}$  and $\varepsilon \not\in \HeadLetter{\letterone-\lettera}^{\interC}$.
    The first property holds due to $\interI \models$ \ref{axiom:RetrHdLet}. 
    For the second property, towards a contradiction assume that $\varepsilon \in \HeadLetter{\letterone-\lettera}^{\interC}$.
    Hence, by $\interI \models$ \ref{axiom:HdLetUnique} we conclude that $\HeadHere^{\interC} = \{ \addresswordw_{\textit{head}} \} \subseteq \conceptLetter{\letterone-\lettera}^{\interC}$.
    But $\conceptLetter{\letterone-\lettera}^{\interC}$ is empty by $\interI \models$ \ref{axiom:LetConDisj} and the definition of $\lettera$. A~contradiction.
  \end{itemize}

  Hence the interpretation $\interC$ is indeed a configuration tree, concluding the proof.
\end{proof}

\section{Enriching Configuration Trees}\label{sec:enriched-conf-tress}

Recall that the purpose of configuration trees is to place them inside a model that describes the run of the Turing machine $\atmM$. 
In particular, this will require to describe the progression of one configuration to another. 
In order to prepare for that, we next introduce an extended version of configuration trees that are enriched by additional information pertaining to their predecessor configuration in a run. 
To this end, we use new concept~names~from 
\[
\conceptsenrct := \big\{\prevtrans{\transitiont}, \InitialConf,  \PrevHeadHere, \NoPrevHeadHere, \PrevHeadAbove, \NoPrevHeadAbove, \PHeadPos{i}^b, \PHeadLetter{\lettera} \big\},
\]
with $\transitiont \in \transreldelta$, $1 \leq i \leq \nofM$, $b \in \{ 0,1 \}$, and $\lettera \in \{ \letterzero, \letterone \}$.
We use $\conceptsptrans$ to denote the set $\{ \InitialConf, \prevtrans{\transitiont} \mid \transitiont \in \transreldelta\}$.

The concept $\prevtrans{\transitiont}$, assigned to the root, indicates the transition, through which the configuration has been reached from the previous configuration, while $\InitialConf$ is used as its replacement for the initial configuration.
In addition, concepts $\PHeadPos{i}^b$ and $\PHeadLetter{\lettera}$ are attached to the root in order to --- in a way very similar to $\HeadPos{i}^b$ and $\HeadLetter{\lettera}$ --- indicate the previous configuration's head position as well as the letter stored in that position \emph{on the current configuration's tape}.
For the sake of our encoding we also employ the concepts $\PrevHeadHere, \NoPrevHeadHere$ that play the role analogous to the $\HeadHere$ and $\NoHeadHere$ concept from configuration-trees.\footnote{For simplicity of axiomatisation, the initial configuration will also carry previous head information, but it will be irrelevant.} 
For technical reasons, we also introduce the concepts $\PrevHeadAbove$ and $\NoPrevHeadAbove$ that will label nodes on the $(\nofM{+}1)$-th level iff their parent is labelled with the corresponding concept from $\{\PrevHeadHere, \NoPrevHeadHere\}$. 

We proceed with the formal definition of the structures just described.

\begin{definition}[enriched configuration tree]\label{def:enr-conf-tree}
An \emph{enriched configuration tree} $\interE$ of $\atmM$ is an interpretation~$\interE = (\DeltaE,\cdot^\interE)$ such that $\interE$ is a configuration tree additionally satisfying the following conditions on concepts from~$\conceptsenrct$:
\begin{itemize}\itemsep0em
\item There is exactly one concept $\conceptC \in \conceptsptrans$ for which $\conceptC^{\interE} = \{ \varepsilon \}$ and for all $\conceptC' \in \conceptsptrans \setminus \{ \conceptC \}$ we have $(\conceptC')^{\interE} = \emptyset$.
\item $\prevtrans{(\states,\lettera,\letterb,\states',d)}^{\interE} = \{ \varepsilon \}$ implies $(\currstate{\states'})^{\interE} =\{ \varepsilon \}$ for all transitions $(\states,\lettera,\letterb,\states',d) \in \transreldelta$.
\item $\PrevHeadHere^{\interE} {=} \{ \addresswordw_{\textit{phd}} \}$ and $\NoPrevHeadHere^{\interE} {=} \conceptLvl_{\nofM}^{\interE} \setminus \{ \addresswordw_{\textit{phd}} \}$ for the $\nofM$-digit binary word $\addresswordw_{\textit{phd}}$ encoding\footnote{Here we use the fact that $\atmM$ never attempts to move left (resp. right) on the left-most (resp. right-most) tape cell.}
    \begin{itemize}\itemsep0em
    \item the number obtained as $\addresswordw_{\textit{head}}-d$ (see: \cref{def:conf-trees}) whenever $\prevtrans{(\states,\lettera,\letterb,\states',d)}^{\interE} = \{ \varepsilon \}$, or 
    \item the number $0$ in case $\InitialConf^{\interE} = \{ \varepsilon \}$.
    \end{itemize}
\item $\PrevHeadAbove^{\interE} = \{ \addresswordw0, \addresswordw1 \mid \addresswordw \in \PrevHeadHere^{\interE} \}$ and $\NoPrevHeadAbove^{\interE} = \conceptLvl_{\nofM{+}1}^{\interE} \setminus \PrevHeadAbove^{\interE}$.
\item $(\PHeadPos{i}^b)^{\interE} = \conceptLvl_{0}^{\interE} {\cup} \conceptLvl_{\nofM}^{\interE} \; \text{and} \; (\PHeadPos{i}^{1-b})^{\interE} = \emptyset$ for all $1 \leq i \leq \nofM$ and $0 \leq b \leq 1$ with $\addresswordw_{\textit{phd}} \in (\conceptadr{i}^b)^{\interE}$.
\item $\PHeadLetter{\lettera}^{\interE} = \{ \varepsilon \}$ and $\PHeadLetter{\letterone-\lettera}^{\interE} = \emptyset$, where $\lettera$ is the unique letter from $\{ \letterzero, \letterone \}$ such that $\addresswordw_{\textit{phd}} \in \conceptLetter{\lettera}^{\interE}$.
\item $\InitialConf^{\interE} {=} \{ \varepsilon \}$ implies $\varepsilon \in \conceptL^{\interE}$, $\currstate{\initialstate}^{\interE} {=} \{ \varepsilon \}$, $\conceptLetter{\letterzero}^{\interE} = \conceptLvl_{\nofM}^{\interE}$, and $\HeadPos{i}^0 = \PHeadPos{i}^0 = \conceptLvl_{0}^{\interE} \cup \conceptLvl_{\nofM}^{\interE}$ for all~$1 \leq i \leq \nofM$.
\end{itemize}
\end{definition}

As usual, we supplement the above definition with the corresponding axiomatisation.
\begin{enumerate}\itemsep0em 
    \item We ensure, that the root unambiguously indicates the previous transition (or initiality). 
    Below $\transitiont \neq \transitiont' \in \transreldelta$.
    \begin{description}\itemsep0em
        \centering
        \item[\desclabel{(TrCov)}{axiom:TrCov}] $\conceptLvl_0 \; \equiv \; \InitialConf \dlor \textstyle\bigdlor_{\transitiont \in \transreldelta} \prevtrans{\transitiont}$,
    \end{description}
    \begin{multicols}{2}
    \begin{description}\itemsep0em
        \item[\desclabel{(TrInitDisj[$\transitiont$])}{axiom:TrInitDisjt}] $\InitialConf \dland \prevtrans{\transitiont} \dlsubseteq \botconcept$,
    \end{description}
        \columnbreak
    \begin{description}\itemsep0em
        \item[\desclabel{(TrDisj[$\transitiont, \transitiont'$])}{axiom:TrInitDisjttprim}] $\prevtrans{\transitiont} \dland \prevtrans{\transitiont'} \dlsubseteq \botconcept$.
    \end{description}
    \end{multicols}
    \item We provide the encoding of the previous head position and the previous letter scanned by the head. This is done by means of the $\PHeadPos{i}^b$, $\PHeadLetter{\lettera}, \PrevHeadHere$, and $\NoPrevHeadHere$ concepts in analogy to how it was done for the current head position (see the last four points of the axiomatisation from the previous section).
    Below we assume $1 \leq i \leq \nofM$, $b \in \{ 0,1 \}$, and $\lettera \in \{ \letterzero, \letterone \}$.
    \begin{multicols}{2}
    \begin{description}\itemsep0em
        \item[\desclabel{(PHdPosCov[i])}{axiom:PHdPosCovi}] $\conceptLvl_{0} {\dlor} \conceptLvl_{\nofM} \equiv \PHeadPos{i}^0 {\dlor} \PHeadPos{i}^1$,
    \end{description}
    \columnbreak
    \begin{description}\itemsep0em
        \item[\desclabel{(PHdPosDisj[i])}{axiom:PHdPosDisji}] $\PHeadPos{i}^0 \dland \PHeadPos{i}^1 \dlsubseteq \botconcept$,
    \end{description}
    \end{multicols}
    \begin{description}\itemsep0em
        \item[\desclabel{(PropPHdPos[i,b])}{axiom:PropPHdPosib}] $\conceptLvl_0 \dland \PHeadPos{i}^b \dlsubseteq \forall{\roleell_1}\forall{\roler_1}\ldots\forall{\roleell_{\nofM}}\forall{\roler_{\nofM}} \; (\conceptLvl_\nofM \to \PHeadPos{i}^b)$,
    \end{description}
  \begin{description}\itemsep0em
      \item[\desclabel{(PHdHereCov)}{axiom:PHdHereCov}] \hspace{10.4em} $\PrevHeadHere \dlor \NoPrevHeadHere \equiv \conceptLvl_{\nofM}$
      \item[\desclabel{(PHdHereEqualAdr)}{axiom:PHdHereEqualAdr}] \; $\conceptLvl_{\nofM} \dland \bigdland_{i=1}^{\nofM} \bigdlor_{b\in\{0,1\}} \big( \conceptadr{i}^b \dland \PHeadPos{i}^b \big) \; \; \, \, \dlsubseteq \PrevHeadHere$,
      \item[\desclabel{(NoPHdHereDiffAdr)}{axiom:NoPHdHereDiffrAdr}] $\conceptLvl_{\nofM} \dland \bigdlor_{i=1}^{\nofM} \bigdlor_{b\in\{0,1\}}  \big( \conceptadr{i}^b \dland \PHeadPos{i}^{1-b} \big) \dlsubseteq \NoPrevHeadHere$,
  \end{description}
  \begin{description}\itemsep0em
      \item[\desclabel{(PHdLetCov)}{axiom:PHdLetCov}] \hspace{12.2em} $\PHeadLetter{\letterzero} \dlor \PHeadLetter{\letterone} \equiv \conceptLvl_0$,
      \item[\desclabel{(RetrPHdLet[$\lettera$])}{axiom:RetrPHdLet}] \hspace{1.85em}  $\conceptLvl_0 \dland \exists{\roleell_1}\exists{\roler_1} \ldots \exists{\roleell_{\nofM}}\exists{\roler_{\nofM}} (\PrevHeadHere \dland \conceptLetter{\lettera}) \: \dlsubseteq \PHeadLetter{\lettera}$,
      \item[\desclabel{(PHdLetUnique[$\lettera$])}{axiom:PHdLetUnique}] \hspace{11.83em}  $\conceptLvl_0 \dland \PHeadLetter{\lettera} \dlsubseteq \forall{\roleell_1}\forall{\roler_1} \ldots \forall{\roleell_{\nofM}}\forall{\roler_{\nofM}} (\PrevHeadHere \to \conceptLetter{\lettera})$.
  \end{description}
  \item Next, the concepts $\PrevHeadAbove$ and $\NoPrevHeadAbove$ are assigned via
    \begin{multicols}{2}
    \begin{description}\itemsep0em
        \item[\desclabel{(PHdAbvCov)}{axiom:PHdAbvCov}] $\PrevHeadAbove \dlor \NoPrevHeadAbove \equiv \conceptLvl_{\nofM{+}1}$,
    \end{description}
    \columnbreak
    \begin{description}\itemsep0em
        \item[\desclabel{(PHdAbvDisj)}{axiom:PHdAbvDisj}] $\PrevHeadAbove \dland \NoPrevHeadAbove \dlsubseteq \botconcept$,
    \end{description}
    \end{multicols} 

    \begin{description}\itemsep0em
        \item[\desclabel{(PropPHdAbv)}{axiom:PropPHdAbv}] \hspace{2.73em}$\PrevHeadHere \dlsubseteq \forall{\roleell_{\nofM{+}1}} \forall{\roler_{\nofM{+}1}} \conceptLvl_{\nofM{+}1} \to \PrevHeadAbove$,
        \item[\desclabel{(PropNoPHdAbv)}{axiom:PropNoPHdAbv}] $\NoPrevHeadHere \dlsubseteq \forall{\roleell_{\nofM{+}1}} \forall{\roler_{\nofM{+}1}} \conceptLvl_{\nofM{+}1} \to \NoPrevHeadAbove$.
    \end{description}

    \item We ensure consistency of the current configuration with the previous transition. Below $(\states, \lettera, \letterb, \states', d) \in \transreldelta$.
    \begin{description}\itemsep0em
        \centering
        \item[\desclabel{(TransiCons)}{axiom:TransiCons}] $\prevtrans{(\states, \lettera, \letterb, \states', d)} \dlsubseteq \PHeadLetter{\letterb} \dland \currstate{\states'} \dland \text{``}\PHeadPos{} + d = \HeadPos{}\text{''}$,
    \end{description}
    where the last right-hand-side expression, specifying decrements or increments of binary encodings of numbers, is implemented in a usual way~\cite[p. 127]{dlbook} via:
    \[
    \bigdlor\limits_{i=1}^{\nofM} \!\Big(\! \conceptA_{i}^0 \dland \conceptB_{i}^1 \dland \!\bigdland\limits_{j=1}^{i-1} (\conceptA_{j}^1 \dland \conceptB_{j}^0 ) \dland
      \!\!\!\!\bigdland\limits_{j=i+1}^{\nofM}\!\!\! \big((\conceptA_{j}^1 \dland \conceptB_{j}^1 ) \dlor (\conceptA_{j}^0 \dland \conceptB_{j}^0 )\big) \!\Big)
    \]
    with $\conceptA := \PHeadPos{}$ and $\conceptB := \HeadPos{}$ if $d = {+}1$, and with $\conceptA$ and $\conceptB$ swapped if $d = {-}1$.
    \item We encode the initial configuration as follows.
      \begin{description}\itemsep0em
        \centering
        \item[\desclabel{(InitConf)}{axiom:InitConf}] $\InitialConf \dlsubseteq \conceptLvl_0 \dland \conceptL \dland \currstate{\initialstate} \dland \bigdland_{i=1}^{\nofM} (\HeadPos{i}^0 \dland \PHeadPos{i}^0) \dland \forall{\roleell_1}\forall{\roler_1} \ldots \forall{\roleell_{\nofM}}\forall{\roler_{\nofM}} (\conceptLvl_{\nofM} \to \conceptLetter{\letterzero})$,
    \end{description}
\end{enumerate}
  
For the knowledge base $\kbKenr$, composed of all the GCIs presented so far, we show correctness in the following lemmas.
Once more, the proofs are routine and similar to the ones from the previous section.

\begin{lemma}\label{prop:enr-conf-trees-satisfies-kbKconf}
  Any enriched configuration tree of $\interE$ is a model of $\kbKenr$.
\end{lemma}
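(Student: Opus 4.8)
The plan is to follow exactly the template of \cref{prop:conf-trees-satisfies-kbKconf}: fix an arbitrary enriched configuration tree $\interE$ and verify, GCI by GCI, that $\interE \models \alpha$ for every axiom $\alpha$ of $\kbKenr$. The first observation is a free lunch: by \cref{def:enr-conf-tree} every enriched configuration tree is in particular a configuration tree, so \cref{prop:conf-trees-satisfies-kbKconf} already gives $\interE \models \kbKconf$. Hence I only need to treat the genuinely new GCIs, \ie those mentioning concepts from $\conceptsenrct$. For the transition-marker axioms \ref{axiom:TrCov}, \ref{axiom:TrInitDisjt}, and \ref{axiom:TrInitDisjttprim}, satisfaction is read off directly from the first item of \cref{def:enr-conf-tree}, which stipulates that exactly one concept of $\conceptsptrans$ is nonempty (and equal to $\{\varepsilon\}$) while all the others are empty.

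The next block of axioms is, by design, a verbatim re-run of the current-head arguments from \cref{prop:conf-trees-satisfies-kbKconf}, with the previous-head word $\addresswordw_{\textit{phd}}$ playing the role that $\addresswordw_{\textit{head}}$ played there. Concretely, \ref{axiom:PHdPosCovi}, \ref{axiom:PHdPosDisji}, and \ref{axiom:PropPHdPosib} follow from the fifth item of \cref{def:enr-conf-tree} exactly as \ref{axiom:HdPosCovi}, \ref{axiom:HdPosDisji}, and \ref{axiom:PropHdPosib} followed from the fifth item of \cref{def:conf-trees} (the propagation step again being justified by \cref{lemma:from-root-to-leaf-path-via-loops}); \ref{axiom:PHdHereCov}, \ref{axiom:PHdHereEqualAdr}, and \ref{axiom:NoPHdHereDiffrAdr} follow from the third item by the same address-comparison argument that discharged \ref{axiom:HdHereCov}, \ref{axiom:HdHereEqualAdr}, \ref{axiom:NoHdHereDiffrAdr}; and \ref{axiom:PHdLetCov}, \ref{axiom:RetrPHdLet}, \ref{axiom:PHdLetUnique} mirror \ref{axiom:HdLetCov}, \ref{axiom:RetrHdLet}, \ref{axiom:HdLetUnique} via the sixth item. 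For the ``above'' concepts, the fourth item of \cref{def:enr-conf-tree} defines $\PrevHeadAbove^{\interE}$ to be precisely the set of children of $\PrevHeadHere$-nodes and $\NoPrevHeadAbove^{\interE}$ its complement inside $\conceptLvl_{\nofM{+}1}^{\interE}$, so \ref{axiom:PHdAbvCov}, \ref{axiom:PHdAbvDisj}, \ref{axiom:PropPHdAbv}, and \ref{axiom:PropNoPHdAbv} are immediate (the last two only observing that every $\roleell_{\nofM{+}1}$- or $\roler_{\nofM{+}1}$-successor of a $\PrevHeadHere$-node is, by definition, a $\PrevHeadAbove$-node, and dually).

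The only axioms carrying real content are \ref{axiom:TransiCons} and \ref{axiom:InitConf}, and I expect \ref{axiom:TransiCons} to be the main obstacle. Since each $\prevtrans{\transitiont}^{\interE}$ is either $\{\varepsilon\}$ or $\emptyset$, the GCI is nontrivial only at the root, for the unique recorded transition; so assume $\prevtrans{(\states,\lettera,\letterb,\states',d)}^{\interE} = \{\varepsilon\}$. The conjunct $\currstate{\states'}$ is supplied by the second item of \cref{def:enr-conf-tree}. The arithmetic conjunct ``$\PHeadPos{}+d=\HeadPos{}$'' is the delicate point: here I would check that the standard binary increment/decrement gadget faithfully captures the relation $\addresswordw_{\textit{head}} = \addresswordw_{\textit{phd}} + d$ guaranteed by the third item, reading $\addresswordw_{\textit{phd}}$ and $\addresswordw_{\textit{head}}$ through the $\PHeadPos{i}^b$ and $\HeadPos{i}^b$ labels at the root. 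For $d={+}1$ (with $\conceptA=\PHeadPos{}$, $\conceptB=\HeadPos{}$) one verifies there is a unique flip index $i$ at which a recorded $0$ becomes a $1$ while all more significant recorded bits toggle $1 \to 0$ and the less significant ones are preserved — exactly the disjunction of \ref{axiom:TransiCons}; the case $d={-}1$ is symmetric after the prescribed swap of $\conceptA$ and $\conceptB$, and the ``never moves off the tape ends'' assumption guarantees the gadget neither underflows nor overflows. The remaining conjunct $\PHeadLetter{\letterb}$ is obtained from the sixth item once one observes that in an enriched configuration tree the current-tape symbol stored at the previous head position $\addresswordw_{\textit{phd}}$ is precisely the letter $\letterb$ written by the recorded transition, so that item six yields $\varepsilon \in \PHeadLetter{\letterb}^{\interE}$; this ``touched-cell'' consistency is the single point where the proof genuinely uses how $\interE$'s recorded transition constrains its tape, and I would make sure to pin it down explicitly. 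Finally, \ref{axiom:InitConf} follows directly from the seventh item of \cref{def:enr-conf-tree}: when $\InitialConf^{\interE}=\{\varepsilon\}$ it supplies $\varepsilon \in \conceptL^{\interE}$, $\currstate{\initialstate}^{\interE}=\{\varepsilon\}$, and $\HeadPos{i}^0 = \PHeadPos{i}^0 = \conceptLvl_0^{\interE} \cup \conceptLvl_{\nofM}^{\interE}$ for every $i$, while the universally quantified conjunct ``$\conceptLvl_{\nofM} \to \conceptLetter{\letterzero}$'' holds along every root-to-leaf navigation because $\conceptLetter{\letterzero}^{\interE} = \conceptLvl_{\nofM}^{\interE}$; together with $\varepsilon \in \conceptLvl_0^{\interE}$ this closes the last case and hence the proof.
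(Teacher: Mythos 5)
Your proof follows the paper's argument essentially verbatim: inherit $\interE \models \kbKconf$ from \cref{prop:conf-trees-satisfies-kbKconf}, re-run the ``non-previous'' head-position/letter arguments for the second group of GCIs, read the transition-marker and $\PrevHeadAbove$ axioms off the corresponding items of \cref{def:enr-conf-tree}, and discharge \ref{axiom:TransiCons} via items 2, 3 and 6 and \ref{axiom:InitConf} via the last item. The one point you single out as delicate --- that the letter stored at $\addresswordw_{\textit{phd}}$ on the current tape is exactly the letter $\letterb$ written by the recorded transition, so that item 6 yields $\varepsilon \in \PHeadLetter{\letterb}^{\interE}$ --- is handled in the paper by the same bare appeal to that item of the definition, so your treatment is, if anything, slightly more explicit than the original.
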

\begin{proof}
Since $\interE$ is a configuration tree by definition, by~\cref{prop:conf-trees-satisfies-kbKconf} we infer $\interE \models \kbKconf$. 
Hence, we may focus on the GCIs presented in this section only.
For the GCIs from the 2nd group, we essentially use the same proof that we used for their ``non-previous'' counterparts the proof of~\cref{prop:conf-trees-satisfies-kbKconf} and thus we do not repeat it here.
Satisfaction of \ref{axiom:PHdAbvCov} and \ref{axiom:PHdAbvDisj} follows by the 4th item of~\cref{def:enr-conf-tree}.
Next, to prove that also \ref{axiom:PropPHdAbv} is satisfied by $\interE$ (the proof for \ref{axiom:PropNoPHdAbv} is analogous) we take any $\addresswordw \in \PrevHeadHere^{\interE}$, which by definition is equal to $\addresswordw_{\textit{phd}}$ and see that the antecedent of the implication on the right hand of \ref{axiom:PropPHdAbv} is satisfied only by $\addresswordw_{\textit{phd}}0$ and $\addresswordw_{\textit{phd}}1$, which are in $\PrevHeadAbove^{\interE}$ by definition.
Next, to show satisfaction of \ref{axiom:TransiCons} we assume $\varepsilon \in \prevtrans{(\states, \lettera, \letterb, \states', d)}^{\interE}$.
Then we have $\varepsilon \in \PHeadLetter{\letterb}^{\interE}$ (by the second to last items of~\cref{def:enr-conf-tree}), $\varepsilon \in \currstate{\states'}^{\interE}$ (by the 2nd item of~\cref{def:enr-conf-tree}) and that $\varepsilon \in \text{``}\PHeadPos{} + d = \HeadPos{}\text{''}^{\interE}$ (by correctness of incrementation/decrementation of binary encodings and by the 1st subitem of the 3rd item of~\cref{def:enr-conf-tree}). 
Thus $\interE \models$ \ref{axiom:TransiCons}.
Finally, $\interE \models$ \ref{axiom:InitConf} follows directly from the last item of~\cref{def:enr-conf-tree}.
\end{proof}

\begin{lemma}\label{lemma:enr-conf-trees-homomorphisms}
  For any model $\interI$ of $\kbKenr$ and any $\domelemd \in \conceptLvl_0^{\interI}$, there is an enriched configuration tree $\interE$ and a homomorphism $\homoh$ from $\interE$ into $\interI$ with $\homoh(\varepsilon) = \domelemd$.
\end{lemma}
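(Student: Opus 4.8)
The plan is to mirror, almost line for line, the structure of the proof of \cref{lemma:conf-trees-homomorphisms}: first pull a configuration tree into $\interI$ via a homomorphism, then define the fresh concepts by pulling them back along that homomorphism, and finally verify that the resulting interpretation satisfies every clause of \cref{def:enr-conf-tree}. Concretely, since $\interI \models \kbKenr$ and $\kbKconf \subseteq \kbKenr$, \cref{lemma:conf-trees-homomorphisms} supplies a configuration tree $\interC$ and a homomorphism $\homoh : \interC \to \interI$ with $\homoh(\varepsilon) = \domelemd$; I additionally take $\interC$ to interpret all symbols outside $\rolesunit \cup \conceptsunit \cup \conceptsconf$ as empty. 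I then let $\interE := (\DeltaC, \cdot^{\interE})$ agree with $\interC$ on every old symbol and set $\conceptC^{\interE} := \{ \addresswordw \mid \homoh(\addresswordw) \in \conceptC^{\interI} \}$ for each $\conceptC \in \conceptsenrct$. Because $\interE$ differs from $\interC$ only on the new names, which are interpreted as pullbacks of $\homoh$, the map $\homoh$ is still a homomorphism $\interE \to \interI$ with $\homoh(\varepsilon) = \domelemd$; it therefore remains only to check that $\interE$ is an enriched configuration tree.

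For the previous-transition items, I would first observe that $\interI \models$ \ref{axiom:TrCov} gives $\conceptC^{\interI} \subseteq \conceptLvl_0^{\interI}$ for every $\conceptC \in \conceptsptrans$, so pulling back and using that $\homoh$ preserves levels while levels are disjoint in $\interI$ forces every $\conceptsptrans$ concept of $\interE$ into $\conceptLvl_0^{\interE} = \{\varepsilon\}$; since $\homoh(\varepsilon) = \domelemd \in \conceptLvl_0^{\interI}$, the $\subseteq$-direction of \ref{axiom:TrCov} places $\domelemd$ into at least one of them, and \ref{axiom:TrInitDisjt}, \ref{axiom:TrInitDisjttprim} make the choice unique, yielding the first item of \cref{def:enr-conf-tree}. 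For the second item, if $\prevtrans{(\states,\lettera,\letterb,\states',d)}^{\interE} = \{\varepsilon\}$ then \ref{axiom:TransiCons} forces $\domelemd \in \currstate{\states'}^{\interI}$, whence $\varepsilon \in \currstate{\states'}^{\interC}$; as $\interC$ is a configuration tree and this set is nonempty, the first item of \cref{def:conf-trees} together with \ref{axiom:StDisj} gives $\currstate{\states'}^{\interE} = \{\varepsilon\}$.

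The items governing the previous head position, the previous scanned letter, and the $\PrevHeadAbove$/$\NoPrevHeadAbove$ markers are handled by replaying, up to renaming, the $\HeadPos/\HeadHere/\HeadLetter$ reasoning of \cref{lemma:conf-trees-homomorphisms}. I define $\addresswordw_{\textit{phd}}$ to be the $\nofM$-digit word whose $i$-th bit is $b$ iff $\varepsilon \in (\PHeadPos{i}^b)^{\interE}$ (well-defined by \ref{axiom:PHdPosDisji}, \ref{axiom:PHdPosCovi}), and use \ref{axiom:PropPHdPosib} via \cref{lemma:from-root-to-leaf-path-via-loops}, together with \ref{axiom:PHdHereEqualAdr}, \ref{axiom:NoPHdHereDiffrAdr}, and \ref{axiom:PHdHereCov}, to obtain the $(\PHeadPos{i}^b)$-propagation clause and the equalities $\PrevHeadHere^{\interE} = \{\addresswordw_{\textit{phd}}\}$, $\NoPrevHeadHere^{\interE} = \conceptLvl_{\nofM}^{\interE} \setminus \{\addresswordw_{\textit{phd}}\}$. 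The $\PHeadLetter$ clause follows the same way from \ref{axiom:PHdLetCov}, \ref{axiom:RetrPHdLet}, and \ref{axiom:PHdLetUnique}. For the $\PrevHeadAbove$/$\NoPrevHeadAbove$ clause I push the parent labels down to the two level-$(\nofM{+}1)$ children through \ref{axiom:PropPHdAbv} and \ref{axiom:PropNoPHdAbv}, and then invoke the cover \ref{axiom:PHdAbvCov} and disjointness \ref{axiom:PHdAbvDisj} to conclude that these two sets partition $\conceptLvl_{\nofM{+}1}^{\interE}$ exactly as required.

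The only genuinely new ingredient, and the step I expect to be the main obstacle, is verifying that $\addresswordw_{\textit{phd}}$ encodes the \emph{correct} number, namely $\addresswordw_{\textit{head}} - d$ for a transition with direction $d$, and $0$ for the initial configuration. In the transition case this is precisely where the arithmetic subformula ``$\PHeadPos{} + d = \HeadPos{}$'' of \ref{axiom:TransiCons} is used: it is satisfied at $\domelemd$, and since the underlying configuration tree $\interC$ already pins $\HeadPos{}$ to encode $\addresswordw_{\textit{head}}$, the standard correctness of the binary increment/decrement gadget forces $\addresswordw_{\textit{phd}} = \addresswordw_{\textit{head}} - d$. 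In the initial case, \ref{axiom:InitConf} directly fixes $\HeadPos{i}^0 = \PHeadPos{i}^0 = \conceptLvl_0^{\interE} \cup \conceptLvl_{\nofM}^{\interE}$ for all $i$, forcing $\addresswordw_{\textit{phd}} = 0^{\nofM}$ and simultaneously delivering the remaining initial-configuration constraints (root in $\conceptL$, state $\initialstate$, all-$\letterzero$ tape) that constitute the last clause of \cref{def:enr-conf-tree}. Assembling these observations shows that $\interE$ meets every clause of \cref{def:enr-conf-tree}, so it is an enriched configuration tree, which completes the argument.
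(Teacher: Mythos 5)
Your proof is correct and follows essentially the same route as the paper's: obtain a configuration tree and a homomorphism via \cref{lemma:conf-trees-homomorphisms}, extend the interpretation by taking each concept of $\conceptsenrct$ to be the $\homoh$-pullback of its extension in $\interI$, and then verify \cref{def:enr-conf-tree} item by item using \ref{axiom:TrCov}, the disjointness axioms, the ``P''-replayed head-position argument, \ref{axiom:TransiCons}, and \ref{axiom:InitConf}. If anything, you are slightly more thorough than the paper, which silently skips the $\PrevHeadAbove$/$\NoPrevHeadAbove$ item that you discharge via \ref{axiom:PropPHdAbv}, \ref{axiom:PropNoPHdAbv}, \ref{axiom:PHdAbvCov}, and \ref{axiom:PHdAbvDisj}, and which (unlike you) misstates the offset in its proof text as $\addresswordw_{\textit{head}}+d$ where the definition and the arithmetic gadget actually give $\addresswordw_{\textit{head}}-d$.
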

\begin{proof}
    We follow the proof scheme of~\cref{lemma:conf-trees-homomorphisms}.
    By~\cref{lemma:conf-trees-homomorphisms}, there is a homomorphism $\homoh$ from a configuration tree $\interC$ to $\interI$ with $\homoh(\varepsilon) = \domelemd$. 
    Moreover, as the symbols outside $\rolesunit \cup \conceptsunit \cup \conceptsconf$ do not appear in~\cref{def:conf-trees} we can assume that $\interC$ interprets them as empty sets.
    Let $\interE = (\DeltaC, \cdot^{\interU})$ be an interpretation that is obtained from changing the meaning of concepts from $\conceptsenrct$ as follows: for any $\conceptC \in \conceptsenrct$ we let $\conceptC^{\interE} := \{ \addresswordw \mid \homoh(\addresswordw) \in \conceptC^{\interC} \}$.
    All other symbols are interpreted as in $\interC$.
    Clearly $\homoh$ is a homomorphism from $\interE$ into $\interI$ with $\homoh(\varepsilon) = \domelemd$ and it suffices to show that $\interE$ is an enriched configuration tree (we already know that it is a configuration tree), which is done by routine investigation of~\cref{def:enr-conf-tree} and the presented GCIs. 

    Existence of the unique concept $\conceptC \in \conceptsptrans$ claimed in the 1st item of~\cref{def:enr-conf-tree} is provided by the first three GCIs, namely the existence is due to \ref{axiom:TrCov} and uniqueness due to \ref{axiom:TrInitDisjt} and \ref{axiom:TrInitDisjttprim}.
    The 2nd item of~\cref{def:enr-conf-tree} is due to $\interI \models$ \ref{axiom:TransiCons} (more precisely, the first conjunct of the rhs).
    Similarly to the proof of~\cref{lemma:conf-trees-homomorphisms}, we establish the existence of a unique $\addresswordw_{\textit{phd}}$ and desired properties of concepts $\PrevHeadHere^{\interE}$, $\NoPrevHeadHere^{\interE}$, $(\PHeadPos{i}^b)^{\interE}$ and $\PHeadLetter{\lettera}^{\interE}$. 
    Since such a proof is nearly identical (modulo adding the ``P'' letter in front of some concept names) to the one from the previous section, we do not repeat the details here.
    Then, the fact that $\addresswordw_{\textit{phd}}$ satisfies $\addresswordw_{\textit{phd}} = \addresswordw_{\textit{head}} + d$ or is equal to $0$ in case $\InitialConf^{\interE} = \{ \varepsilon \}$ is by, respectively, membership of $\varepsilon$ in $\text{``}\PHeadPos{} + d = \HeadPos{}\text{''}^{\interE}$ and $(\forall{\roleell_1}\forall{\roler_1}\ldots\forall{\roleell_{\nofM}}\forall{\roler_{\nofM}} \; (\conceptLvl_\nofM \to \PHeadPos{i}^b))^{\interE}$, guaranteed by $\interI \models$ \ref{axiom:TransiCons} and $\interI \models$ \ref{axiom:InitConf}.
    Finally, the satisfaction of the last item of~\cref{def:enr-conf-tree} is immediate by $\interI \models$ \ref{axiom:InitConf}.
\end{proof}

\section{Describing Accepting Quasi-Runs}\label{sec:quasi-computations}
Recall that a quasi-run $\quasirunR$ of $\atmM$ is simply a tree labelled with configurations of $\atmM$ where the root is labelled with the initial configuration  \(\initialstate \letterzero^{2^{\nofM}}\). Each node representing an existential  configuration has one child labelled with a quasi-successor configuration, while each node representing a universal configuration has two children labelled by quasi-successor configurations obtained via different transitions.

In order to represent an accepting quasi-run by a model, we employ the notion of a \emph{quasi-computation tree} $\inter{Q}$, a structure intuitively defined from some $\quasirunR$ as follows: replace every node of $\quasirunR$ by its corresponding configuration tree, adequately enriched with information about its generating transition and the predecessor configuration. 
The roots of these enriched configuration trees are linked via the $\rolenext$ role to express the quasi-succession relation of $\quasirunR$. 
The roots of enriched configuration trees representing universal configurations are chosen to be labelled with $\conceptL$, their left $\rolenext$-child with $\conceptL$ and their right $\rolenext$-child with $\conceptR$ (both corresponding to existential configurations).
As expected, the $\InitialConf$ concept decorates the root of the distinguished enriched configuration tree that represents $\quasirunR$'s initial configuration. 
As our attention is restricted to \emph{accepting} quasi-runs $\quasirunR$, we require that no enriched configuration tree occurring in $\interQ$ carries a rejecting state. 
We now give a formal definition of such a structure $\interQ$.
\begin{figure}[H]
  \centering
  \vspace{-1em} 
  \hspace{5em}  
  \includegraphics[scale=0.14]{pic-overall-encoding-new.png}
\end{figure}

\begin{definition}[quasi-computation tree]\label{def:quasi-comp}
  A \emph{quasi-computa\-tion tree} $\interQ$ of $\atmM$ is an interpretation $\interQ = (\DeltaQ, \cdotQ)$ satisfying the following properties:
  \begin{itemize}\itemsep0em
  \item $\DeltaQ := \treeT \times \{0,1\}^{\leq \nofM{+}1}$, where $\treeT$ is\footnote{This is just a scary looking definition of a binary tree in which nodes at the $i$-th level have exactly $2$ children if $i$ is even and exactly one child otherwise.} a prefix-closed subset of $\{\frakone\frakzero, \frakzero\frakzero\}^* \cdot \{\varepsilon,\frakzero,\frakone\}$ with $\frakw\frakone \in \treeT$ implying~$\frakw\frakzero \in \treeT$.
  \item For every $\frakw \in \treeT$, the substructure of $\interQ$ induced by $\{ \frakw \} \times \{ 0,1 \}^{\leq \nofM{+}1}$ is isomorphic to an enriched configuration tree of $\atmM$ via the isomorphism $(\frakw, \addresswordw) \mapsto \addresswordw$.
  \item $(\varepsilon,\frakw) \in \conceptR^{\interQ}$ if $\frakw$ ends with $\frakone$, otherwise $(\varepsilon, \frakw) \in \conceptL^{\interQ}$.
  \item For any $\frakw \neq \frakw'$ and arbitrary $\addresswordw, \addresswordw' \in \{0,1\}^{\leq \nofM{+}1}$ holds $((\frakw, \addresswordw), (\frakw',\addresswordw'))\notin \roles^\interQ$ for any $\roles \in \rolesunit \setminus \set{\rolenext}$.
  \item $\rolenext^{\interQ} \!\setminus\! \{(\domelemd,\domelemd)\mid \DeltaQ{\times}\DeltaQ\} = \{ ((\frakw,\varepsilon),(\frakw\frakb,\varepsilon)) \mid \frakw\frakb, \frakw \in \treeT, \frakb \in \{ \frakzero, \frakone \} \}$.
  \item $\InitialConf^{\interQ} = \{ (\varepsilon,\varepsilon) \}$. 
  \item for any $\frakw\frakzero \in \treeT$ with $(\frakw,\varepsilon) \in \currstates^\interQ$ and $(\frakw,\varepsilon)\in \conceptLetter{\lettera}^\interQ$     
  \begin{itemize}\itemsep0em
  \item if $\frakw\frakone \in \treeT$ then $(\frakw\frakzero,\varepsilon) \in \prevtrans{\transreldelta_1(\states,\lettera)}^\interQ$ and $(\frakw\frakone,\varepsilon) \in \prevtrans{\transreldelta_2(\states,\lettera)}^\interQ$,
  \item if $\frakw\frakone \notin \treeT$ then $(\frakw\frakzero,\varepsilon) \in \prevtrans{\transreldelta_1(\states,\lettera)}^\interQ$ or $(\frakw\frakzero,\varepsilon) \in \prevtrans{\transreldelta_2(\states,\lettera)}^\interQ$.
  \end{itemize}
  \item If $(\frakw, \addresswordw) \in \HeadHere^\interQ$ and $\frakw\frakb \in \treeT$ then $(\frakw\frakb, \addresswordw) \in \PrevHeadHere^\interQ$.
  \item $\currstate{\rejectingstate}^\interQ = \emptyset$ as well as $(\frakw,\varepsilon)\in \currstate{\acceptingstate}^\interQ$ if and only if $\frakw \in \treeT$ and $\frakw\frakzero \not\in \treeT$.
  \end{itemize} 
\end{definition}

We move on to provide an appropriate axiomatisation.
\begin{enumerate}\itemsep0em
  \item We incorporate all axioms from $\kbKenr$ to ensure the indicated substructures correspond to enriched computation trees.
  \item Every non-final existential configuration has one successor configuration while every non-final universal configuration has two. 
  Final configurations do not have any successors.
  Below $\states_{e} \in \statesQforall \setminus \{ \acceptingstate, \rejectingstate\}, \states_{e} \in \statesQexists \setminus \{ \acceptingstate, \rejectingstate\}$, and $\states_{f} \in \{ \acceptingstate, \rejectingstate\}$. 
  \begin{multicols}{2}
  \begin{description}\itemsep0em
      \item[\desclabel{(EConfSucc[$\states_{e}$])}{axiom:EConfSucc}] $\currstate{\states_{e}} \dlsubseteq \exists{\rolenext}.\conceptL \dland \exists{\rolenext}.\conceptR$
  \end{description}
  \columnbreak
  \begin{description}\itemsep0em 
      \item[\desclabel{(AConfSucc[$\states_{a}$])}{axiom:AConfSucc}] $\currstate{\states_{a}} \dlsubseteq \exists{\rolenext}.\topconcept \dland \forall{\rolenext}.\conceptL$
  \end{description}
  \end{multicols}
    \begin{description}\itemsep0em 
      \centering
      \item[\desclabel{(FinConfSucc[$\states_{f}$])}{axiom:FinConfSucc}] $\currstate{\states_{f}} \dlsubseteq \forall{\rolenext}.\botconcept$
  \end{description}
  \item To transfer the previous head position to the successor configurations we employ (for $1 \leq i \leq \nofM, b \in \{0,1\}$):
  \begin{description}\itemsep0em 
      \centering
      \item[\desclabel{(TransHeadPos[$i,b$])}{axiom:TransHeadPosib}] $\conceptLvl_0 \dland \HeadPos{i}^b \dlsubseteq \forall{\rolenext}.\PHeadPos{i}^b$
  \end{description}
  \item For any $\states_{\exists} \in \statesQexists$ we specify that the corresponding configuration tree linked via $\rolenext$-role is a successor configuration of the current one.
    \begin{description}\itemsep0em 
      \centering
      \item[\desclabel{(TransiExistState)}{axiom:TransiExistState}] $\currstate{\states_{\exists}} \dland \HeadLetter{\lettera} \dlsubseteq \bigdlor_{\transitiont \in \transreldelta(\states_{\exists}, \lettera) } \forall{\rolenext}.\prevtrans{\transitiont}$
  \end{description}
  \item For every universal state $\states_{\forall} \in \statesQforall$ and a letter $\lettera$ currently scanned by the head there are only two possible choices of transitions.
  \begin{description}\itemsep0em 
      \centering
      \item[\desclabel{(TransiUnivStateL)}{axiom:TransiUnivStateL}] $\currstate{\states_{\forall}} \dland \HeadLetter{\lettera} \dlsubseteq \forall{\rolenext}.(\conceptL \to \prevtrans{\transreldelta_1(\states_\forall, \lettera)})$
      \item[\desclabel{(TransiUnivStateR)}{axiom:TransiUnivStateR}] $\currstate{\states_{\forall}} \dland \HeadLetter{\lettera} \dlsubseteq \forall{\rolenext}.(\conceptR \to \prevtrans{\transreldelta_2(\states_\forall, \lettera)})$
  \end{description}
  \item Since we want to have accepting quasi-runs of $\atmM$ only, we state that we never encounter the rejecting~state.
    \begin{description}\itemsep0em 
      \centering
      \item[\desclabel{(NoRejectState)}{axiom:NoRejectState}] $\currstate{\rejectingstate} \dlsubseteq \botconcept$
  \end{description}

\end{enumerate}

Let $\tboxT_{\atmM}$ be the set of all GCIs presented so far and let $\aboxA_{\atmM}$ be an ABox composed of a single axiom $\InitialConf(\indva)$ for a fresh individual name $\indva$. 
Put $\kbK_{\atmM} := (\aboxA_{\atmM}, \tboxT_{\atmM})$.
We claim that:

\begin{lemma}\label{prop:quasi-comp-satisfies-kbKATM}
Any accepting quasi-computation tree $\interQ$ of $\atmM$ is a model of $\kbK_{\atmM}$.
\end{lemma}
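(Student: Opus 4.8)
The plan is to mirror the template of \cref{prop:enr-conf-trees-satisfies-kbKconf}: fix an accepting quasi-computation tree $\interQ$, set $\indva^{\interQ} \deff (\varepsilon,\varepsilon)$, and then verify every axiom of $\kbK_{\atmM} = (\aboxA_{\atmM},\tboxT_{\atmM})$ one by one. The lone ABox assertion $\InitialConf(\indva)$ is immediate, since the sixth item of \cref{def:quasi-comp} gives $\InitialConf^{\interQ} = \{(\varepsilon,\varepsilon)\} \ni \indva^{\interQ}$. For the TBox I would split the GCIs into the group inherited from $\kbKenr$ and the eight genuinely new GCIs of this section.

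For the inherited group the key is \emph{locality}. By the second item of \cref{def:quasi-comp} every substructure of $\interQ$ induced by $\{\frakw\}\times\{0,1\}^{\leq\nofM{+}1}$ is isomorphic to an enriched configuration tree, which satisfies all of $\kbKenr$ by \cref{prop:enr-conf-trees-satisfies-kbKconf}. By the fourth item of \cref{def:quasi-comp} no role from $\rolesunit\setminus\{\rolenext\}$ joins two distinct substructures, so every $\kbKenr$-axiom that does not mention $\rolenext$ is evaluated entirely inside a single substructure and transfers across the isomorphism. The only $\kbKenr$-axiom referring to $\rolenext$ is \ref{axiom:leavesnextloop}, which I would check directly from the fifth item of \cref{def:quasi-comp}: the only reflexive $\rolenext$-edges are the leaf self-loops, whereas the extra $\rolenext$-edges link roots of distinct configuration trees and are hence irreflexive, so $\conceptLvl_{\nofM{+}1}\equiv\exists{\rolenext}.\Self$ continues to hold globally.

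For the new GCIs I would first record the structural fact that state types alternate with the level of $\treeT$: the root carries $\currstate{\initialstate}$ with $\initialstate\in\statesQforall$, and by the consistency built into each enriched tree (\cref{def:enr-conf-tree}) together with the alternation assumption on $\transreldelta$, a node carries a universal (resp.\ existential) state exactly at the even (resp.\ odd) levels, i.e.\ exactly where $\treeT$ branches into two (resp.\ one) children. With this in hand, \ref{axiom:EConfSucc}, \ref{axiom:AConfSucc} and \ref{axiom:FinConfSucc} follow from the branching shape of $\treeT$, the $\conceptL/\conceptR$-labelling of children (third item) and the successor relation (fifth item), using the ninth item to identify final configurations with the leaves of $\treeT$; \ref{axiom:NoRejectState} is precisely the first half of that ninth item. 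The transition axioms \ref{axiom:TransiExistState}, \ref{axiom:TransiUnivStateL} and \ref{axiom:TransiUnivStateR} are read off the two sub-cases of the seventh item (one versus two children) combined again with the $\conceptL/\conceptR$-labelling. Finally, \ref{axiom:TransHeadPosib} is obtained by chaining the eighth item — which copies the $\HeadHere$-cell of a configuration onto the $\PrevHeadHere$-cell of each successor — with the definitions of $\HeadPos{i}^b$ and $\PHeadPos{i}^b$ through the head/previous-head addresses, so that the induced identity $\addresswordw_{\textit{phd}}^{\frakw\frakb}=\addresswordw_{\textit{head}}^{\frakw}$ forces the respective $i$-th address bits to agree.

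The main obstacle I anticipate is the bookkeeping around $\rolenext$: one must keep the leaf self-loops and the genuine successor links cleanly separated, both to preserve \ref{axiom:leavesnextloop} and to ensure that the $\exists{\rolenext}$- and $\forall{\rolenext}$-restrictions in the successor axioms see exactly the root-children of a configuration and nothing else. The second delicate point is \ref{axiom:TransHeadPosib}, where the identity $\addresswordw_{\textit{phd}}^{\frakw\frakb}=\addresswordw_{\textit{head}}^{\frakw}$ must be threaded through the isomorphism and the two layers of definitions ($\HeadHere/\PrevHeadHere$ on the one side, $\HeadPos{i}^b/\PHeadPos{i}^b$ on the other); everything else is a routine case analysis.
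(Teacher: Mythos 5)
Your proposal is correct and follows essentially the same route as the paper's own proof: inherit $\kbKenr$ component-wise via the isomorphism of the second item of \cref{def:quasi-comp} together with the locality of all roles other than $\rolenext$ (fourth and fifth items), and then discharge each of the new GCIs by appealing to the remaining items of \cref{def:quasi-comp}. You are in places more explicit than the paper — the ABox assertion, the state/level alternation underlying \ref{axiom:EConfSucc} and \ref{axiom:AConfSucc}, and the identity $\addresswordw_{\textit{phd}} = \addresswordw_{\textit{head}}$ behind \ref{axiom:TransHeadPosib} — but the underlying argument is the same.
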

\begin{proof} 
  To see $\interQ \models \kbKenr$ it suffices to observe that (1) by the 2nd item of~\cref{def:quasi-comp} all the substructures of~$\interQ$ induced by $\{\frakw\} \times \{0,1\}^{\leq \nofM{+}1}$ are isomorphic to some computation tree and hence, by~\cref{prop:enr-conf-trees-satisfies-kbKconf} they satisfy~$\kbKenr$, (2) the use of roles from $\rolesunit \setminus \{ \rolenext \}$ is restricted to enriched configuration trees and hence $\interQ$ satisfies all the GCIs not involving $\rolenext$ and (3) the only GCI involving $\rolenext$ from $\kbKenr$ is~\ref{axiom:leavesnextloop} and it is satisfied in $\interQ$ due to the mentioned isomorphism property.
  Next, satisfaction of \ref{axiom:AConfSucc}, \ref{axiom:TransiUnivStateL} and \ref{axiom:TransiUnivStateR} by $\interQ$ is due to the 7th item (1st subitem) of~\cref{def:quasi-comp}.
  Similarly, we infer that $\interQ \models$ \ref{axiom:EConfSucc} and $\interQ \models$ \ref{axiom:TransiExistState} by the 7th item (2nd subitem) of~\cref{def:quasi-comp}.
  By the last item of~\cref{def:quasi-comp} we immediately conclude $\interQ \models$ \ref{axiom:FinConfSucc} and $\interQ \models$ \ref{axiom:NoRejectState}.
  Hence, it remains to prove satisfaction of \ref{axiom:TransHeadPosib}, which is immediate by the second to last item of~\cref{def:quasi-comp}.
\end{proof}

\begin{lemma}\label{lemma:quasi-computations-homomorphisms}
For any model $\interI$ of $\kbK_{\atmM}$ there exists an accepting quasi-compu\-ta\-tion tree $\interQ$ and a homomorphism $\homoh : \interQ \to \interI$ with $\homoh(\varepsilon, \varepsilon) = \indva^{\interI}$.
\end{lemma}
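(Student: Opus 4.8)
The plan is to obtain $\interQ$ by \emph{unfolding} $\interI$ along the role $\rolenext$, starting at $\indva^{\interI}$ and attaching, at every visited node, a homomorphic copy of an enriched configuration tree. Since $\InitialConf(\indva)\in\aboxA_{\atmM}$ and $\interI\models$\ref{axiom:TrCov} gives $\InitialConf\dlsubseteq\conceptLvl_0$, we have $\indva^{\interI}\in\conceptLvl_0^{\interI}$, so \cref{lemma:enr-conf-trees-homomorphisms} supplies an enriched configuration tree $\interE_{\varepsilon}$ and a homomorphism $\homoh_{\varepsilon}\colon\interE_{\varepsilon}\to\interI$ with $\homoh_{\varepsilon}(\varepsilon)=\indva^{\interI}$; this will be the slice of $\interQ$ at the root $\varepsilon$ of $\treeT$.

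I would then grow $\treeT$ together with a family $(\homoh_{\frakw})_{\frakw\in\treeT}$ of such homomorphisms top-down. For a node $\frakw$ already equipped with $\homoh_{\frakw}$, write $\domelemd:=\homoh_{\frakw}(\varepsilon)$, let $\states$ be the unique state with $\domelemd\in\currstates^{\interI}$ (unique by $\interI\models$\ref{axiom:StCov} and \ref{axiom:StDisj}), and let $\lettera$ be the unique head letter with $\domelemd\in\HeadLetter{\lettera}^{\interI}$. As $\interI\models$\ref{axiom:NoRejectState}, $\states$ is never rejecting, leaving three cases. If $\states=\acceptingstate$, I declare $\frakw$ a leaf; consistently, \ref{axiom:FinConfSucc} forbids $\domelemd$ any $\rolenext$-successor. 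If $\states$ is non-final and universal, \ref{axiom:EConfSucc} yields both a $\conceptL$- and a $\conceptR$-successor $\domeleme,\domeleme'$ of $\domelemd$, which by \ref{axiom:TransiUnivStateL} and \ref{axiom:TransiUnivStateR} satisfy $\domeleme\in\prevtrans{\transreldelta_1(\states,\lettera)}^{\interI}$ and $\domeleme'\in\prevtrans{\transreldelta_2(\states,\lettera)}^{\interI}$; I add $\frakw\frakzero,\frakw\frakone$ to $\treeT$, sending them to $\domeleme$ and $\domeleme'$ respectively. If $\states$ is non-final and existential, \ref{axiom:AConfSucc} yields a single successor $\domeleme\in\conceptL^{\interI}$, which by \ref{axiom:TransiExistState} lies in $\prevtrans{\transitiont}^{\interI}$ for one committed $\transitiont\in\transreldelta(\states,\lettera)$, and I add only $\frakw\frakzero$. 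In every case each chosen successor lies in $\conceptLvl_0^{\interI}$ (as $\prevtrans{\transitiont}\dlsubseteq\conceptLvl_0$ by \ref{axiom:TrCov}), so \cref{lemma:enr-conf-trees-homomorphisms} applies anew. Routing $\frakw\frakzero$ to the $\conceptL$-successor and $\frakw\frakone$ to the $\conceptR$-successor keeps each slice's root side in agreement with the third item of \cref{def:quasi-comp}, since \cref{lemma:n-conf-units-homomorphisms} builds the underlying unit so that $\varepsilon\in\conceptL^{\interE_{\frakw\frakb}}$ iff its image lies in $\conceptL^{\interI}$. The ATM's strict alternation of existential and universal states, together with $\initialstate\in\statesQforall$, forces universal (two-child) nodes to sit at even depth and existential (one-child) nodes at odd depth, so $\treeT$ acquires exactly the shape demanded by \cref{def:quasi-comp}.

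With $\treeT$ and the $\homoh_{\frakw}$ fixed, I assemble $\interQ$ verbatim from \cref{def:quasi-comp}: $\DeltaQ=\treeT\times\{0,1\}^{\leq\nofM{+}1}$, the slice $\{\frakw\}\times\{0,1\}^{\leq\nofM{+}1}$ carries the isomorphic image of $\interE_{\frakw}$, and $\rolenext^{\interQ}$ comprises the per-slice leaf loops together with the edges $((\frakw,\varepsilon),(\frakw\frakb,\varepsilon))$ for the children introduced above. Then $\homoh(\frakw,\addresswordw):=\homoh_{\frakw}(\addresswordw)$ is a homomorphism $\interQ\to\interI$: inside each slice it equals $\homoh_{\frakw}$, the inter-slice $\rolenext$-edges map to genuine $\rolenext$-edges of $\interI$ by the very choice of successors, and by the fourth item of \cref{def:quasi-comp} no role other than $\rolenext$ ever crosses slices. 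Moreover $\homoh(\varepsilon,\varepsilon)=\indva^{\interI}$, as required.

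It remains to confirm that $\interQ$ is an accepting quasi-computation tree. The intra-slice conditions hold because every slice is an enriched configuration tree by \cref{lemma:enr-conf-trees-homomorphisms}. The inter-slice conditions are read off the axioms invoked during the unfolding: $\InitialConf^{\interQ}=\{(\varepsilon,\varepsilon)\}$ because every non-root slice carries some $\prevtrans{\transitiont}$ and hence, by \ref{axiom:TrInitDisjt}, not $\InitialConf$; the transition conditions (seventh item) are exactly the content of \ref{axiom:TransiUnivStateL}, \ref{axiom:TransiUnivStateR} and \ref{axiom:TransiExistState}; and the acceptance condition (last item) holds since $\currstate{\rejectingstate}^{\interQ}=\emptyset$ by \ref{axiom:NoRejectState} and a branch is cut off exactly at accepting states. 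The part I expect to require the most care is the head-tracking condition (eighth item), which demands that the previous-head address of a child coincide with the head address of its parent; this follows from \ref{axiom:TransHeadPosib}, which transports each $\HeadPos{i}^b$ of $\domelemd$ across the $\rolenext$-edge into the child's $\PHeadPos{i}^b$, so that after the homomorphic transfer $\PrevHeadHere$ of the child lands precisely on the node carrying $\HeadHere$ in the parent. Finiteness of $\treeT$ is not needed here, since \cref{def:quasi-comp} admits an infinite $\treeT$; the bounded length of genuine runs is exploited only later, once the query has excluded tape mismatches.
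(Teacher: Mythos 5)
Your proposal is correct and follows essentially the same route as the paper's proof: unfold $\interI$ along $\rolenext$ from $\indva^{\interI}$ using the branching axioms \ref{axiom:EConfSucc}/\ref{axiom:AConfSucc} to build $\treeT$, invoke \cref{lemma:enr-conf-trees-homomorphisms} at each node to obtain the slices, glue them into $\interQ$, and define $\homoh$ componentwise. If anything, you verify more than the paper does explicitly -- e.g.\ the $\conceptLvl_0$-membership needed to re-apply \cref{lemma:enr-conf-trees-homomorphisms}, the tree shape via alternation, and the head-tracking condition via \ref{axiom:TransHeadPosib} -- all of which the paper compresses into the remark about decorating each $\interE_{\frakw}$ with ``Pr'' concepts.
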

\begin{proof}
  We construct a tree $\treeT$ and its origin function $\homof: \treeT \to \interI$ as follows.
  First, let $\varepsilon \in \treeT$ and $\homof(\varepsilon) = \indva^{\interI}$.
  We next proceed as follows: take any word $\frakw \in \treeT$ and consider three cases:
  \begin{itemize}\itemsep0em
    \item $\homof(\frakw)$ is labelled with a non-final universal state. Hence, by the first axiom provided, we know that $\homof(\frakw)$ has at least two $\rolenext$-successors, one of which is in $\conceptL^{\interI}$ and the other in $\conceptR^{\interI}$. Call them, respectively, $\domeleme_l, \domeleme_r$. 
    Hence, we extend $\treeT$ with the words $, \frakw\frakone$ and extend $\homof$ with $\homof(\frakw\frakzero) = \domeleme_l$ and $\homof(\frakw\frakone) = \domeleme_r$. Repeat the process from $\frakw\frakzero$ and $\frakw\frakone$.
    \item $\homof(\frakw)$ is labelled with a non-final existential state. Then we take its $\rolenext$-successor $\domeleme$ and extend $\treeT$ with $\frakw\frakzero$ and $\homof$ with $\homof(\frakw\frakzero) = \domeleme$. Repeat the process from $\frakw\frakzero$.
    \item $\homof(\frakw)$ is labelled with a final state. No action required.
  \end{itemize}
  We associate a word $\frakw \in \treeT$ with an enriched configuration tree $\interE_{\frakw}$ such that there is a homomorphism $\homog_{\frakw}$ from $\interE_{\frakw}$ to $\interI$ with $\homof(\frakw) = \homog_{\frakw}(\varepsilon)$. The existence of $\interE_{\frakw}$ and $\homog_{\frakw}$ is provided by~\cref{lemma:enr-conf-trees-homomorphisms}.
  Finally, we decorate each node of $\interE_{\frakw}$ with ``Pr'' concepts as suggested by the homomorphism $\homog_{\frakw}$.
  A $\treeT$-quasi-computation tree $\interQ$ is then defined by stipulating  
  that, for every $\frakw \in \treeT$, the substructure of $\interQ$ induced by $\{\frakw\} \times \{0,1\}^{\leq \nofM{+}1}$
  be isomorphic to the decorated $\interE_{\frakw}$. 
  The homomorphism $\homoh: \DeltaQ \to \interI$ is then defined componentwise by $(\frakw, \addresswordw) \mapsto \homog_{\frakw}(\addresswordw)$, essentially taking the disjoint unions of the homomorphisms for all enriched configuration trees. Since all the roles except $\rolenext$ are restricted to the components and we made sure that the roots of $\interQ$ were created from the elements linked via $\rolenext$-roles, we conclude that $\homoh$ is the claimed homomorphism.
\end{proof}

\section{Detecting Faulty Runs with a Single CQ}\label{sec:query}
We finally have reached the point where querying comes into play. 
Our last goal is to design \emph{one single} conjunctive query that detects ``faulty configuration progressions'' in quasi-computation trees, meaning that it matches a pair of two positions in consecutive configuration trees representing the same cell and being untouched by the head of $\atmM$ yet storing different letters.
Note that the lack of such cells in a quasi-computation tree means that any two consecutive configuration trees represent not only quasi-successor configuration but actually proper successors and hence the structure as such even represents a ``proper'' run. We start by formalising our requirements to such a query:
\begin{lemma}\label{lemma:existence-of-cq-spoiling-fun}
There exists a CQ $\queryq_{\atmM}$ of size polynomial in $\nofM$ with two distinguished variables $\varx, \vary$ such that for all quasi-computation trees $\interQ$ we have $\interQ \modelsmatch{\matchpi} \queryq_{\atmM}$ iff 
there exists a word $\frakw$, a letter $\frakb$ and a word $\addresswordw$ of length~$\nofM{+}1$~s.t.: 
\begin{itemize}\itemsep0em
  \item $\matchpi(\varx) = (\frakw, \addresswordw)$, $\matchpi(\vary) = (\frakw\frakb, \addresswordw)$,
  \item $\matchpi(\vary) \in \NoPrevHeadAbove^{\interQ}$,
  \item $\matchpi(\varx) \in \conceptzero^{\interQ}$ and  $\matchpi(\vary) \in \conceptone^{\interQ}$.
\end{itemize}
\end{lemma}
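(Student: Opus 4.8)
The plan is to build $\queryq_{\atmM}$ out of the root-to-leaf building block $\qrl$ of \cref{lemma:query-from-root-to-leaf} and then to verify the stated equivalence in both directions. Since in a quasi-computation tree the only edges that leave a single configuration unit are the $\rolenext$-links joining consecutive roots (\cref{def:quasi-comp}), I would force the two matched leaves into adjacent units as follows: take a variable $\varv$ pinned to a root by $\conceptLvl_0(\varv)$, add a step $\rolenext(\varv,\varv')$ to the root $\varv'$ of a successor unit, and let two copies of the descent skeleton $\roleell_1;\roler_1;\ldots;\roleell_{\nofM{+}1};\roler_{\nofM{+}1}$ run from $\varv$ to $\varx$ and from $\varv'$ to $\vary$. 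By \cref{lemma:from-root-to-leaf-path-via-loops} the self-loops let each skeleton reach an arbitrary leaf of its unit, so any match places $\varx$ on a level-$(\nofM{+}1)$ node of some unit $\frakw$ and $\vary$ on a level-$(\nofM{+}1)$ node of a successor unit $\frakw\frakb$. I would finish by appending the concept atoms $\conceptzero(\varx)$, $\conceptone(\vary)$ and $\NoPrevHeadAbove(\vary)$. The dual $\conceptzero/\conceptone$ cell encoding is exactly what makes these three fixed atoms sufficient: whenever an untouched cell changes its letter in either direction, the relevant children realise precisely the pattern $\conceptzero$ at $\varx$ and $\conceptone$ at $\vary$, so no letter-valued disjunction is needed.

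The hard part, and what I expect to be the main obstacle, is to force the two descents onto the \emph{same} address $\addresswordw$ with a query that stays polynomial in $\nofM$. Demanding equality of the two addresses amounts, bit by bit, to the $\nofM{+}1$ disjunctions ``both skeletons turn left, or both turn right'', and expanding such a conjunction of disjunctions into a plain CQ would be exponential. My plan is to express each per-level agreement with only constantly many atoms by exploiting the self-loops together with the propagated address bits $\conceptadr{i}^b$ — this is precisely the ``simulate disjunction inside a CQ / repeat a path'' idea announced in the introduction. Concretely, I would tie the $i$-th turn of the $\varx$-skeleton to the $i$-th turn of the $\vary$-skeleton through an auxiliary sub-pattern that, thanks to the $\roleell_i$- and $\roler_i$-self-loops present at every node, is matchable only when both skeletons sit in the same $\conceptadr{i}^b$; running this gadget in parallel over all $i$ (and over the final left/right child bit) pins $\varx$ and $\vary$ to a common address at the cost of only $O(\nofM)$ atoms. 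Designing this gadget so that it is simultaneously \emph{sound} (it genuinely forces equal addresses, given that the only inter-unit link is the root $\rolenext$-edge) and \emph{non-restrictive} (every honest mismatch still admits a match) is the delicate heart of the construction.

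To establish the equivalence I would then argue on two fronts. For the direction from right to left, given $\frakw$, $\frakb$ and a length-$(\nofM{+}1)$ word $\addresswordw$ with the three listed properties, I would read off the required match directly: \cref{lemma:from-root-to-leaf-path-via-loops} supplies, inside each of the two units, a realisation of the descent skeleton ending at address $\addresswordw$, the root $\rolenext$-edge connects the units, the synchronisation gadget holds because both descents use identical turns, and the concept atoms hold by assumption. For the converse I would take an arbitrary match $\matchpi$ and exploit the rigidity of $\interQ$: the $\rolenext$-step can only be mapped to a root-to-root edge, so $\matchpi(\varv)$ and $\matchpi(\varv')$ are consecutive roots $\frakw,\frakw\frakb$; each skeleton stays inside one unit because the roles in $\rolesunit\setminus\{\rolenext\}$ never leave a unit (\cref{def:quasi-comp}); the synchronisation gadget forces the two target leaves to share the address $\addresswordw$; and the atoms $\conceptzero(\varx)$, $\conceptone(\vary)$, $\NoPrevHeadAbove(\vary)$ supply the remaining membership conditions. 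Combining these observations yields exactly the characterisation in the statement, and tracking the atom count through the construction gives the claimed polynomial bound.
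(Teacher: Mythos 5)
Your outer scaffolding is exactly the paper's: your $\varv \to \varv'$ skeleton with two descents is the paper's $\qmain[\varx,\vary]$ (\cref{lemma:first-query}), and appending $\conceptzero(\varx) \land \conceptone(\vary) \land \NoPrevHeadAbove(\vary)$ matches the final query. But the proof has a genuine gap at precisely the point you yourself flag as ``the delicate heart of the construction'': the address-agreement gadget is never constructed, only postulated to exist. Worse, the sketch you give for it is unlikely to be realisable. You propose to tie the $i$-th turn of the $\varx$-skeleton to the $i$-th turn of the $\vary$-skeleton ``with constantly many atoms'' using the concepts $\conceptadr{i}^b$. Two problems: (1) any CQ atom $\conceptadr{i}^b(\cdot)$ hard-codes the bit value $b$, so expressing ``the bits agree'' via these concepts needs exactly the disjunction $(\conceptadr{i}^0 \text{ at both}) \lor (\conceptadr{i}^1 \text{ at both})$ that a CQ cannot state --- you are back at the original obstacle; (2) a connected sub-pattern linking a mid-level node of one unit to a mid-level node of the other cannot be local and constant-size, because by \cref{def:quasi-comp} the \emph{only} edges between units are the root-to-root $\rolenext$-edges, so any such tie must itself route through both roots, \ie{} be another full pair of descents.

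The paper's actual solution, which is the key idea of the whole hardness proof, is different in mechanism: bit values are forced \emph{structurally}, not by concept atoms --- the query $\qLtopdowni{i}$ (resp.\ $\qRtopdowni{i}$) is a root-to-leaf descent whose level-$i$ step offers only $\roleell_i$ (resp.\ only $\roler_i$), so any match must cross a genuine $\roleell_i$-edge there, pinning the $i$-th bit to $0$ (resp.\ $1$). The disjunction over the two possible bit values is then simulated by \emph{composition through a fresh middle variable}: $\qaddr{i}[\varx,\vary]$ contains $\qithbit{i}{0}[\varx,\varz] \land \qithbit{i}{1}[\varz,\vary]$, and the abundant self-loops (crucially including the $\rolenext$-self-loops on leaves, axiom \ref{axiom:leavesnextloop}) allow whichever of the two chained sub-queries is ``wrong'' for the actual bit value to collapse to an identity match that idles in place (\cref{lem:bitpreserve}, sets $M^b_1$). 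This idling-by-circling mechanism is what makes a polynomial-size CQ (of size $O(\nofM^2)$, one $O(\nofM)$-size pair of descents per bit position, not $O(1)$ per bit as you claim) suffice, and without it the statement does not follow.
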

Note the asymmetry in the 3rd bullet point above -- we ignore the reverse constellation. Yet, due to our encoding if the reverse situation occurs then so does the original one. Hence, every mismatch in a sense causes two inconsistencies from the point of $\nofM{+}1$-level nodes. This solves the mystery of introducing level $\nofM{+}1$ in our configuration trees and the particular encoding of tape symbols: it is crucial for catching faulty progressions by using one single CQ.
Before proving~\cref{lemma:existence-of-cq-spoiling-fun} we show how it implies the main theorem of our paper, namely:
\begin{thm}\label{thm:main}
Conjunctive query entailment over $\ALCself$ knowledge bases is $\TwoExpTime$-hard.
\end{thm}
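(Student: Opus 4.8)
The plan is to establish the correctness of the reduction, namely that $\kbK_{\atmM} \not\models \queryq_{\atmM}$ holds if and only if $\atmM$ is accepting, and to combine this with the known hardness of ATM acceptance. The knowledge base $\kbK_{\atmM}$ consists of polynomially many GCIs (each of size polynomial in $\nofM$ and $|\statesQ|$) and, by~\cref{lemma:existence-of-cq-spoiling-fun}, the query $\queryq_{\atmM}$ is of size polynomial in $\nofM$; hence the pair $(\kbK_{\atmM}, \queryq_{\atmM})$ is computable in polynomial time from $\atmM$. Since deciding whether a given ATM is accepting is $\TwoExpTime$-hard~\cite[Corollary 3.6]{ChandraS76}, the stated equivalence shows that non-entailment of $\queryq_{\atmM}$ is $\TwoExpTime$-hard, and as $\TwoExpTime$ is closed under complementation, the hardness transfers to CQ entailment over $\ALCself$ knowledge bases, as claimed.

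For the implication from acceptance to non-entailment, I would take an accepting run of $\atmM$ and turn it into a quasi-computation tree $\interQ$ all of whose $\rolenext$-edges connect \emph{genuine} successor configurations rather than mere quasi-successors. Concretely, I would shape $\treeT$ according to the branching structure of the run, populate each component $\{\frakw\} \times \{0,1\}^{\leq \nofM{+}1}$ with an enriched configuration tree faithfully encoding the configuration sitting at the corresponding run node (together with its generating transition and predecessor information), and check that all items of~\cref{def:quasi-comp} hold. By~\cref{prop:quasi-comp-satisfies-kbKATM}, $\interQ$ is then a model of $\kbK_{\atmM}$. Because the run is proper, every tape cell not rewritten by a transition keeps its letter across consecutive configurations, so for no level-$(\nofM{+}1)$ node lying in $\NoPrevHeadAbove^{\interQ}$ can the two addressed nodes of consecutive configuration trees carry $\conceptzero$ and $\conceptone$, respectively. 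By~\cref{lemma:existence-of-cq-spoiling-fun} this yields $\interQ \not\models \queryq_{\atmM}$, so $\interQ$ is a countermodel and $\kbK_{\atmM} \not\models \queryq_{\atmM}$.

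For the converse, I would start from an arbitrary countermodel $\interI$ of $\kbK_{\atmM}$ and $\queryq_{\atmM}$. By~\cref{lemma:quasi-computations-homomorphisms} there is an accepting quasi-computation tree $\interQ$ together with a homomorphism $\homoh : \interQ \to \interI$. As CQ satisfaction is preserved under homomorphisms, $\interQ \models \queryq_{\atmM}$ would entail $\interI \models \queryq_{\atmM}$; since $\interI \not\models \queryq_{\atmM}$, we obtain $\interQ \not\models \queryq_{\atmM}$. By~\cref{lemma:existence-of-cq-spoiling-fun}, $\interQ$ then contains no faulty progression, so no cell outside the previous head position changes its letter between consecutive configurations; together with the transition-consistency axioms governing the rewritten cell, this upgrades every quasi-successor step of the underlying quasi-run to a genuine successor step. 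The resulting run is accepting (no rejecting state occurs and accepting states sit exactly at the leaves), whence $\atmM$ is accepting.

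The step I expect to be the main obstacle is the semantic bridge underlying both directions: relating the purely local matching condition of~\cref{lemma:existence-of-cq-spoiling-fun} --- a single pair of level-$(\nofM{+}1)$ nodes labelled $\conceptzero$ and $\conceptone$ and lying in $\NoPrevHeadAbove$ --- to the global property that consecutive configurations are genuine successors. The key observation making this work is that the redundant two-children $\conceptzero/\conceptone$ encoding of tape symbols converts \emph{any} letter change of an untouched cell into at least one flip from $\conceptzero$ to $\conceptone$ on some child node (a flip from $\conceptone$ to $\conceptzero$ on one child forces a flip from $\conceptzero$ to $\conceptone$ on its sibling), so that the absence of a query match is equivalent to faithful propagation of the untouched tape content. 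The remaining tasks --- in particular the verification of the items of~\cref{def:quasi-comp} for the computation tree built from an accepting run --- are routine bookkeeping.
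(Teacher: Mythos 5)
Your proposal is correct and takes essentially the same approach as the paper: both directions rest on the same reduction, \cref{prop:quasi-comp-satisfies-kbKATM}, \cref{lemma:quasi-computations-homomorphisms}, \cref{lemma:existence-of-cq-spoiling-fun}, and preservation of CQ matches under homomorphisms, with your converse stated directly where the paper argues contrapositively (assuming $\atmM$ is not accepting and showing every model satisfies $\queryq_{\atmM}$). The extra details you supply --- polynomial-time computability of the reduction and the observation that the redundant $\conceptzero/\conceptone$ child encoding turns any untouched-cell flip into a query match --- are left implicit or appear in the paper's surrounding discussion rather than in the proof itself.
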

\begin{proof}
  Since $\coTwoExpTime {=} \TwoExpTime$, it is sufficient to show that CQ non-entailment over $\ALCself$ KBs is $\TwoExpTime$-hard. 
  Take $\kbK_{\atmM}$ as defined in~\cref{sec:quasi-computations} and $\queryq_{\atmM}$ as given by~\cref{lemma:existence-of-cq-spoiling-fun}.
  We claim that $\kbK_{\atmM} \not\models \queryq_{\atmM}$ iff~$\atmM$ is accepting.
  The ``if'' direction is easy: we take an accepting run of $\atmM$ and turn it into quasi-computation tree $\interQ$.
  By~\cref{prop:quasi-comp-satisfies-kbKATM} we conclude that $\interQ \models \kbK_{\atmM}$. 
  We also have that $\interQ \not\models \queryq$ due to the fact that any two consecutive configuration trees represent proper successor configurations.
  For the second direction it suffices to show that if $\atmM$ is not accepting then $\kbK_{\atmM} \models \queryq_{\atmM}$.
  Indeed, assume that $\atmM$ is not accepting and consider an arbitrary model $\interI$ of $\kbK_{\atmM}$ (in case $\kbK_{\atmM}$ is unsatisfiable then trivially $\kbK_{\atmM} \models \queryq_{\atmM}$).
  By~\cref{lemma:quasi-computations-homomorphisms} there is a quasi-computation tree $\interQ$ and a homomorphism $\homoh : \interQ \to \interI$ with $\homoh(\varepsilon, \varepsilon) = \indva^{\interI}$. 
  But this quasi-computation tree must represent a ``faulty'' run -- in the opposite case it would correspond to an accepting run of $\atmM$, which does not exist by assumption.
  Hence there must ba a match of $\queryq_{\atmM}$ to $\interQ$. Yet, as query matches are preserved under homomorphisms, we conclude $\interI \models \queryq_{\atmM}$. Thus all models $\interI$ of $\kbK_{\atmM}$ have matches of $\queryq_{\atmM}$, which implies~$\kbK_{\atmM} \models \queryq_{\atmM}$.
\end{proof}

In the forthcoming query definitions, we employ a convenient naming scheme. 
By writing $\queryq[\varx,\vary]$ we indicate that the variables $\varx, \vary \in \queryVarq$ are \emph{global} (\ie the same across (sub)queries that we might join together) while its other variables are \emph{local} (\ie pairwise different from any variables occurring in other queries –- this can always be enforced by renaming).
Going back to the query, we proceed as follows. 
We first prepare a query $\qmain[\varx, \vary]$ with two global distinguished variables $\varx,\vary$ that relates any two domain elements whenever they are leaf nodes of consecutive computation trees.
Then $\qmain[\varx,\vary]$ is combined with queries $\qaddr{i}[\varx,\vary]$ for all $1 \leq i \leq \nofM{+}1$ with the intended meaning that $\varx$ and $\vary$ have the same $i$-th bit of their addresses.
Additionally, our final query will require that $\varx$ be mapped to a node satisfying $\conceptzero$ and $\vary$ to a node satisfying $\conceptone$ and $\NoPrevHeadHere$.

To construct $\qmain[\varx,\vary]$ we essentially employ~\cref{lemma:from-root-to-leaf-path-via-loops}.
\begin{lemma}\label{lemma:first-query}
There exists a CQ $\qmain[\varx,\vary]$ such that for any quasi-computation tree $\interQ$ we have that $M_{\qmain} := \{ (\matchpi(\varx), \matchpi(\vary)) \mid \interQ \modelsmatch{\matchpi} \qmain \}$
is equal to any pair of leaves of two consecutive configuration trees of $\interQ$. 
Formally:
\[ M_{\qmain} = \big\{\! \left( (\frakw, w),\! (\frakw\frakb, v) \right) \in \DeltaQ \mid |w|{\,=\,}|v|{\,=\,}\nofM{+}1, \frakb{\,\in\,} \{0,1\}\! \big\}. \]
\begin{figure}[H]
\centering
\includegraphics[scale=0.2]{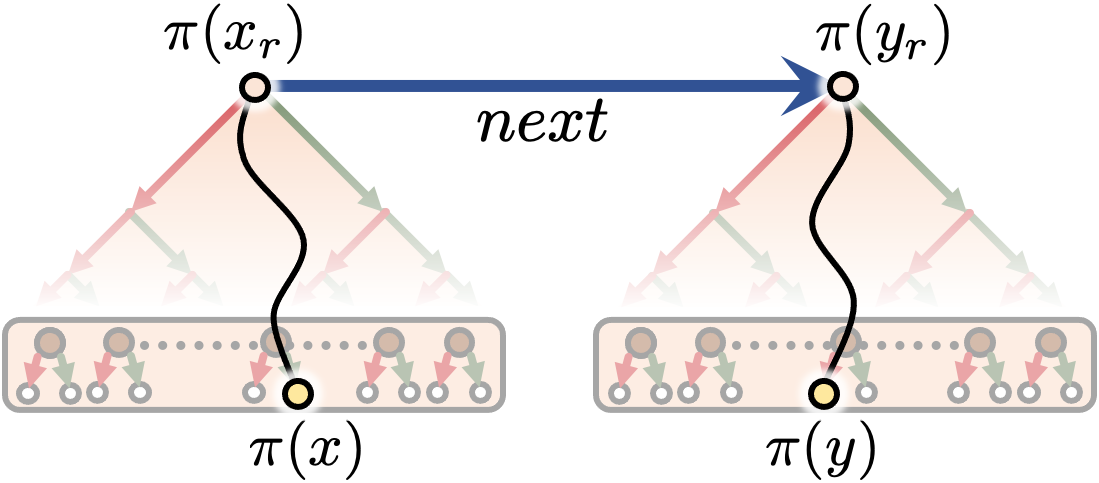}
\end{figure}
\end{lemma}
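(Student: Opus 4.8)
The plan is to assemble $\qmain$ from two copies of the root-to-leaf path query of~\cref{lemma:query-from-root-to-leaf} (instantiated for $n=\nofM{+}1$), glued together by a single $\rolenext$-atom that links their root endpoints. Introducing fresh local variables $r_1,r_2$ for the two roots, I would set
\begin{multline*}
\qmain[\varx,\vary] \;:=\; (\conceptLvl_0?;\roleell_1;\roler_1;\ldots;\roleell_{\nofM{+}1};\roler_{\nofM{+}1};\conceptLvl_{\nofM{+}1}?)(r_1,\varx) \;\land\; \rolenext(r_1,r_2) \\
\land\; (\conceptLvl_0?;\roleell_1;\roler_1;\ldots;\roleell_{\nofM{+}1};\roler_{\nofM{+}1};\conceptLvl_{\nofM{+}1}?)(r_2,\vary).
\end{multline*}
This is plainly of size polynomial in $\nofM$. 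The two $\conceptLvl_0?$-tests anchor $r_1,r_2$ at component roots, whereas the two $\conceptLvl_{\nofM{+}1}?$-tests anchor $\varx,\vary$ at leaves.

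For the inclusion $M_{\qmain}\supseteq$, given a pair $((\frakw,w),(\frakw\frakb,v))$ with $\frakw\frakb\in\treeT$ and $|w|=|v|=\nofM{+}1$, I would define the match $r_1\mapsto(\frakw,\varepsilon)$, $r_2\mapsto(\frakw\frakb,\varepsilon)$, $\varx\mapsto(\frakw,w)$, $\vary\mapsto(\frakw\frakb,v)$. The middle atom is satisfied by the non-loop root-to-root $\rolenext$-edges of~\cref{def:quasi-comp}, and each path atom is satisfied inside the corresponding component by~\cref{lemma:from-root-to-leaf-path-via-loops}: the abundance of self-loops inside a configuration unit provides the assignment of the intermediate path variables realising $(\varepsilon,w)$ (resp. $(\varepsilon,v)$) in ${\roleell_1}\circ{\roler_1}\circ\ldots\circ{\roleell_{\nofM{+}1}}\circ{\roler_{\nofM{+}1}}$.

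For the inclusion $M_{\qmain}\subseteq$, take any match $\matchpi$. The opening $\conceptLvl_0?$-test forces $\matchpi(r_1)=(\frakw,\varepsilon)$, since $\conceptLvl_0^{\interQ}$ is exactly the set of component roots; because every role in $\rolesunit\setminus\{\rolenext\}$ is component-internal (4th item of~\cref{def:quasi-comp}), the first path cannot leave the $\frakw$-component, and the closing $\conceptLvl_{\nofM{+}1}?$-test forces $\matchpi(\varx)=(\frakw,w)$ with $|w|=\nofM{+}1$. I would then analyse $\rolenext(r_1,r_2)$: a root carries no $\rolenext$-self-loop (those live only on level-$(\nofM{+}1)$ nodes), so the only $\rolenext$-edges leaving $(\frakw,\varepsilon)$ are the root-to-root edges, forcing $\matchpi(r_2)=(\frakw\frakb,\varepsilon)$ for some $\frakb\in\{\frakzero,\frakone\}$ with $\frakw\frakb\in\treeT$. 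The second path then gives $\matchpi(\vary)=(\frakw\frakb,v)$ with $|v|=\nofM{+}1$, yielding exactly a pair of the required form.

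I expect the completeness direction to be the main obstacle, not because of the path-tracing (which is routine given~\cref{lemma:from-root-to-leaf-path-via-loops}) but because of the need to rule out degenerate matches. The delicate point is why the $\conceptLvl_0?$-tests are indispensable: a leaf carries a $\rolenext$-self-loop, so without forcing $r_1$ onto a root a match could send $r_1$ to a leaf, satisfy $\rolenext(r_1,r_2)$ by that self-loop with $r_2=r_1$, and thereby produce a spurious pair with $\matchpi(\varx)=\matchpi(\vary)$ lying inside a \emph{single} configuration tree. The $\conceptLvl_0?$-tests exclude precisely this collapse, and I would make the write-up emphasise that it is exactly the interplay between the leaf $\rolenext$-loops and the root-anchored start of the path that makes this single query simultaneously sound and complete.
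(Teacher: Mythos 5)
Your proof is correct and takes essentially the same route as the paper's: the query is exactly the paper's $\qmain := \qrl[\varx_r, \varx] \land \rolenext(\varx_r, \vary_r) \land \qrl[\vary_r, \vary]$, the superset direction rests on \cref{lemma:from-root-to-leaf-path-via-loops} (via \cref{lemma:query-from-root-to-leaf}), and the subset direction rests on the 4th and 5th items of \cref{def:quasi-comp}. Your explicit remark that roots carry no $\rolenext$-self-loop (so the middle atom can only be satisfied by a genuine root-to-root edge) is exactly the point the paper compresses into its citation of the 5th item, so nothing is missing.
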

\begin{proof} 
  It suffices to take $\qmain := \qrl[\varx_r, \varx] \land \rolenext(\varx_r, \vary_r) \land \qrl[\vary_r, \vary]$.
  Let $\interQ \modelsmatch{\matchpi} \qmain$. 
  That $M_{\qmain}$ is a superset of the set above follows from the fact that quasi-computation trees are computation units and hence, containedness follows by~\cref{lemma:query-from-root-to-leaf}. We now focus on the other direction.
  Note that by the 5th item of~\cref{def:quasi-comp} we know that $\matchpi(\varx_r)$ and $\matchpi(\vary_r)$ must be two distinct roots of enriched configuration trees $\interE_{\varx_r}, \interE_{\vary_r}$. 
  By the 4th item of~\cref{def:quasi-comp} we know that the interpretation of the $\roler$s and $\roleell$s is restricted to pairs of domain elements located inside the same enriched configuration tree (and by their definition to configuration trees and by their definition to configuration units).
  Since $\qrl$ only employs the roles $\roleell_i, \roler_i$ and the concepts $\conceptLvl_0, \conceptLvl_{\nofM{+}1}$ we conclude that $\qrl$ has exactly the same set of matches in $\interE_{\varx_r}$ as in its underlying unit.
  Hence, by~\cref{lemma:query-from-root-to-leaf} we know that $\varx$ (resp. $\vary$) is indeed mapped to a leaf of $\interE_{\varx_r}$ (resp. to a leaf of $\interE_{\vary_r}$), which finishes the proof.
\end{proof}

The next part of our query construction focuses on sub-queries $\qaddr{i}[\varx,\vary]$ that are meant to relate leaves having equal $i$-th bits of addresses. 
In order to construct it we combine together several smaller queries, written in path syntax below.
\begin{itemize}\itemsep0em
  \item We let $\qtopdown[\varx, \vary] := (\roleell_{1}; \roler_{1}; \ldots; \roleell_{\nofM{+}1}; \roler_{\nofM{+}1})(\varx, \vary)$ define the \emph{top-down query}. 
  It intuitively traverses an enriched configuration tree in a top-down manner. 
  Note that $\qtopdown[\varx,\vary]$ is actually the major sub-query of $\qrl[\varx,\vary]$. 
  \item The \emph{$\roleell_i$-top-down query} $\qLtopdowni{i}[\varx,\vary]$ is similar to $\qtopdown[\varx, \vary]$, but with the $\roleell_i;\roler_i$ part replaced by just $\roleell_i$. 
  The intended behavior is that again a tree is traversed from root to leaves, but this time, an $\roleell_i$ edge must be crossed when going from the $(i-1)$-th to the $i$-th level.
  The \emph{$\roler_i$-top-down query} $\qRtopdowni{i}[\varx,\vary]$ is defined alike, by replacing $\roleell_i;\roler_i$ in $\qtopdown[\varx, \vary]$ with $\roler_i$.
\end{itemize}

An important ingredient in the construction is the query $\qithbit{i}{0}[\varx, \vary]$ defined as follows:
\[
  \conceptLvl_{\nofM{+}1}(\varx) \land \qLtopdowni{i}[\varx'\!,\! \varx] \land \rolenext(\varx'\!,\! \vary') \land \qLtopdowni{i}[\vary'\!,\! \vary] \land \conceptLvl_{\nofM{+}1}(\vary).
\]
In total analogy, we define $\qithbit{i}{1}[\varx, \vary]$ by using $\qRtopdowni{i}$ instead of $\qLtopdowni{i}$. 
Any match $\matchpi$ of the query $\qithbit{i}{b}[\varx, \vary]$ instantiates  the variables $\varx$ and $\vary$ in a quasi-computation tree $\interQ$ according to one of the following two scenarios: either $\matchpi(\varx) = \matchpi(\vary)$ or $\matchpi(\varx)$ and $\matchpi(\vary)$ are leaves in two consecutive enriched configuration trees inside the quasi-computation tree and both of these leaves have their $i$-th address bit set to $b$.
The above intuition meets its formalisation in the next lemma:
\begin{lemma}\label{lem:bitpreserve}
Let $\interQ$ be a quasi-computation tree and let $M_{\qithbit{i}{b}}\! = \!\{(\matchpi(\varx),\! \matchpi(\vary)) \mid \interQ \modelsmatch{\matchpi} \qithbit{i}{b} \}$  for $b {\,\in\,} \{0,\!1\}$. 
Then $M_{{\qithbit{i}{b}}}$ is equal to the union of $M^b_1 := \{ ((\frakw, \addresswordw), (\frakw, \addresswordw)) \}$
and $M^b_2 := \{ ((\frakw, \addresswordu b \addresswordv), (\frakw\frakb, \addresswordu' b \addresswordv')) \mid |\addresswordu| {=} |\addresswordu'| {=} i{-}1 \}$.
\end{lemma}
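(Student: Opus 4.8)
The plan is to isolate the navigational behaviour of the modified top-down query and then assemble the two inclusions. I would work throughout with $b=0$; the case $b=1$ is obtained verbatim by replacing every $\roleell_i$ with $\roler_i$ and ``left/bit $0$'' with ``right/bit $1$''.

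First I would establish a variant of \cref{lemma:from-root-to-leaf-path-via-loops} tailored to $\qLtopdowni{i}$: inside any $(\nofM{+}1)$-configuration unit, the relation $\roleell_1 \circ \roler_1 \circ \cdots \circ \roleell_{i-1} \circ \roler_{i-1} \circ \roleell_i \circ \roleell_{i+1} \circ \roler_{i+1} \circ \cdots \circ \roleell_{\nofM{+}1} \circ \roler_{\nofM{+}1}$ relates $\varepsilon$ to a leaf $\addresswordw$ precisely when the $i$-th address bit of $\addresswordw$ is $0$. The reachability (``$\supseteq$'') direction copies the inductive descent of \cref{lemma:from-root-to-leaf-path-via-loops}: at every block $j \neq i$ one appends bit $a_j$ by taking the genuine $\roleell_j$- or $\roler_j$-edge and a self-loop for the other role, while at block $i$ one takes the genuine $\roleell_i$-edge, which appends a $0$; all required self-loops exist by \ref{axiom:allloopsbutnext}. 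The converse (``$\subseteq$'') is the crux: the genuine $\roleell_j$/$\roler_j$ edges are the only length-increasing steps and each maps level $j{-}1$ to level $j$ and nowhere else, so reaching a leaf of length $\nofM{+}1$ forces exactly one descent per block; since block $i$ offers only $\roleell_i$ (a left, i.e.\ bit-$0$, edge), the $i$-th bit of any reachable leaf is forced to $0$.

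Next I would analyse an arbitrary match $\matchpi$ of $\qithbit{i}{0}$ in a quasi-computation tree $\interQ$ to show $M_{\qithbit{i}{0}} \subseteq M^0_1 \cup M^0_2$. The atom $\rolenext(\varx',\vary')$ pins $\matchpi(\varx')$ to the domain of $\rolenext^{\interQ}$, which by the 5th item of \cref{def:quasi-comp} (with \ref{axiom:leavesnextloop}) consists only of leaves, carrying $\rolenext$-self-loops, and roots of configuration trees, carrying genuine $\rolenext$-edges to successor roots. If the edge is a self-loop, then $\matchpi(\varx')=\matchpi(\vary')$ is a leaf; since a leaf admits no length-increasing edge, each $\qLtopdowni{i}$-path collapses to self-loops, forcing $\matchpi(\varx)=\matchpi(\varx')=\matchpi(\vary')=\matchpi(\vary)$, a diagonal pair in $M^0_1$ (note that no bit constraint arises, which is exactly why $M^0_1$ is independent of $i$ and $b$). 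If the edge is genuine, then $\matchpi(\varx')=(\frakw,\varepsilon)$ and $\matchpi(\vary')=(\frakw\frakb,\varepsilon)$ are roots of consecutive trees; the 4th item of \cref{def:quasi-comp} confines each $\qLtopdowni{i}$-path to a single configuration tree, so that together with the atoms $\conceptLvl_{\nofM{+}1}(\varx)$ and $\conceptLvl_{\nofM{+}1}(\vary)$ and the variant lemma, $\matchpi(\varx)$ and $\matchpi(\vary)$ are leaves with $i$-th bit $0$, i.e.\ a pair in $M^0_2$.

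For the opposite containment $M^0_1 \cup M^0_2 \subseteq M_{\qithbit{i}{0}}$ I would exhibit explicit matches. Every diagonal pair $((\frakw,\addresswordw),(\frakw,\addresswordw))$ with $\addresswordw$ a leaf is realised by assigning all four variables that leaf and reading every query atom as a self-loop (available by \ref{axiom:allloopsbutnext} and \ref{axiom:leavesnextloop}); every pair of $M^0_2$ is realised by setting $\varx',\vary'$ to the two relevant roots, using the genuine $\rolenext$-edge supplied by the 5th item of \cref{def:quasi-comp}, and invoking the variant lemma to route each $\qLtopdowni{i}$-path from its root to the prescribed leaf with $i$-th bit $0$. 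Combining both inclusions yields $M_{\qithbit{i}{0}} = M^0_1 \cup M^0_2$, and the $b=1$ case is symmetric. I expect the main obstacle to be the ``$\subseteq$'' direction of the variant lemma, namely ruling out that self-loops conspire to shift the forced descent at block $i$ onto some other level; this rests entirely on the rigidity that a genuine $\roleell_j$/$\roler_j$ edge maps level $j{-}1$ to level $j$ and nothing else.
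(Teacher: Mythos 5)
Your proof is correct and follows essentially the same route as the paper's: both directions rest on the observation that a root-to-leaf traversal matching $\qLtopdowni{i}$ must perform exactly one genuine descent per block and hence cross a genuine $\roleell_i$-edge (forcing bit $0$), combined with the case split on whether $\rolenext(\varx',\vary')$ is matched by a leaf self-loop (yielding the diagonal pairs $M^b_1$) or by a genuine root-to-root edge (yielding $M^b_2$). The only difference is presentational: you package the navigation argument as a standalone iff-variant of \cref{lemma:from-root-to-leaf-path-via-loops}, which the paper instead argues inline within the two inclusions.
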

\begin{proof}
  We show the statement for $b=0$, the case for $b=1$ then follows by symmetry.
  First we show~$M^0_1 \subseteq M_{\qithbit{i}{0}}$. This is easy: for any leaf $\domelemd = (\frakw, \addresswordw)$ we map all variables of $\qithbit{i}{0}[\varx,\vary]$ into $\domelemd$; this is a match due to the presence of all the self-loops at the leaves.
  To show $M^0_2 \subseteq M_{\qithbit{i}{0}}$ we take any $\domelemd = (\frakw, \addresswordw)$ and~$\domeleme = (\frakw\frakb, \addresswordv)$. 
  Let $\matchpi$ be a variable assignment that maps $\varx$ to $\domelemd$, $\vary$ to $\domeleme$, $\varx'$ to $(\frakw, \varepsilon)$, $\vary'$ to $(\frakw\frakb, \varepsilon)$.
  The variables of $\qLtopdowni{i}[\varx', \varx]$ are mapped to $(\frakw, \addresswordw_j)$, where $\addresswordw_j$ is the prefix of $\addresswordw$ of length $j$ following the path from $(\frakw, \varepsilon)$ to $(\frakw, \addresswordw)$ level-by-level.
  We stress that $( (\frakw, \addresswordw_{i{-}1}), (\frakw, \addresswordw_i)) \in \roleell_{i}^\interQ$ holds, which is crucial for the construction to work and that every $(\frakw, \addresswordw_j)$ node has all $\roleell$- and $\roler$-loops.
  The variables of $\qLtopdowni{i}[\vary', \vary]$ are mapped analogously.
  After noticing that $\domelemd, \domeleme \in \conceptLvl_{\nofM{+}1}^{\interQ}$ and that $(\matchpi(\varx'), \matchpi(\vary')) \in \rolenext^{\interQ}$ holds, we conclude that $\matchpi$ is clearly a match of $\qithbit{i}{0}[\varx,\vary]$ to $\interQ$.

  Now we focus on showing that $M_{\qithbit{i}{0}[\varx,\vary]} \subseteq M^0_1 \cup M^0_2$.
  Take any match $\matchpi$ and note that $\varx, \vary$ must be mapped to leaves. 
  For $\matchpi(\varx')$ and $\matchpi(\vary')$ we consider the two cases:
  \begin{enumerate}\itemsep0em
    \item $\matchpi(\varx') = \matchpi(\vary')$. 
    As the roots do not have $\rolenext$-loops, $\matchpi(\varx')$ must be a leaf. 
    This implies that all variables of $\qLtopdowni{i}[\varx', \varx]$ map into a single domain element (otherwise we would not reach a leaf after traversing such path). 
    Arguing similarly we infer that all variables of $\qLtopdowni{i}[\vary', \vary]$ are mapped to the same element. 
    Thus $\matchpi(\varx) = \matchpi(\vary)$ holds.

    \item $\matchpi(\varx') \neq \matchpi(\vary')$. 
    Since all incoming $\rolenext$ roles from leaves are self-loops, we conclude that $\matchpi(\varx')$ is the root of some enriched quasi-computation tree and $\matchpi(\vary')$ is the root of some corresponding quasi-successor in $\interQ$ (by the definition of $\rolenext^{\interQ}$). 
    By the satisfaction of $\qLtopdowni{i}[\varx', \varx]$ we know that there exists a sequence of domain elements contributing to a path from $\matchpi(\varx')$ to $\matchpi(\varx)$ witnessing its satisfaction. 
    Moreover, note that since the subquery $\qLtopdowni{i}[\varx', \varx]$ leads from the root to a leaf it implies that we necessarily cross the $\roleell_i$ role at the $i{-}1$-th level, meaning that the $i$-th bit of the address of $\matchpi(\varx)$ is equal to $0$.
    Thus we infer that $\matchpi(\varx) \in (\conceptadr{i}^0)^{\interQ}$.
    Reasoning analogously we conclude that $\matchpi(\vary) \in (\conceptadr{i}^0)^{\interQ}$, which finishes the proof.\qedhere
  \end{enumerate}
\end{proof}

We are now ready to present the query $\qaddr{i}[\varx,\vary]$ pairing leaves in consecutive enriched configuration trees with coinciding $i$-th address bit:
\[\qaddr{i}[\varx,\vary] := \qmain[x,y] \land \qithbit{i}{0}[\varx, \varz] \land \qithbit{i}{1}[\varz, \vary]\]
\begin{lemma}\label{lemma:second-query}
For any quasi-computation tree $\interQ$ we have that $M_{\qaddr{i}} = \{ (\matchpi(\varx), \matchpi(\vary)) \mid \interQ \modelsmatch{\matchpi} \qaddr{i}[\varx,\vary] \}$ is equal to the leaf pairs in two consecutive enriched configuration trees of $\interQ$ having equal $i$-th bit of address, formally:
\[
M_{\!\qaddr{i}} = M_{\qmain} \cap \Big( \big( {\conceptadr{i}^0}^{\interQ} {\times} {\conceptadr{i}^0}^{\interQ} \big) \cup \big( {\conceptadr{i}^1}^{\interQ} {\times} {\conceptadr{i}^1}^{\interQ} \big)\Big)
\]
\end{lemma}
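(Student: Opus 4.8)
The plan is to read off $M_{\qaddr{i}}$ from the two lemmas already proved: \cref{lemma:first-query} describes $M_{\qmain}$ as the set of leaf pairs of consecutive configuration trees, while \cref{lem:bitpreserve} describes $M_{\qithbit{i}{0}}$ and $M_{\qithbit{i}{1}}$ as the unions $M^0_1 \cup M^0_2$ and $M^1_1 \cup M^1_2$, respectively. Recall that $\qaddr{i}[\varx,\vary]$ is the conjunction of $\qmain[\varx,\vary]$, $\qithbit{i}{0}[\varx,\varz]$, and $\qithbit{i}{1}[\varz,\vary]$, whose only shared variables are the globals $\varx,\varz,\vary$ (all remaining variables being local and pairwise disjoint by the naming convention). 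Consequently, a match $\matchpi$ of $\qaddr{i}$ with prescribed values of $\varx,\vary$ exists exactly when some value for $\varz$ makes all three conditions $(\matchpi(\varx),\matchpi(\vary)) \in M_{\qmain}$, $(\matchpi(\varx),\matchpi(\varz)) \in M_{\qithbit{i}{0}}$, and $(\matchpi(\varz),\matchpi(\vary)) \in M_{\qithbit{i}{1}}$ hold simultaneously; the local variables can then be assigned independently inside each subquery since they do not interact.

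For the left-to-right inclusion, take a match $\matchpi$. By \cref{lem:bitpreserve} the pair $(\matchpi(\varx),\matchpi(\varz))$ lies in $M^0_1 \cup M^0_2$ and $(\matchpi(\varz),\matchpi(\vary))$ lies in $M^1_1 \cup M^1_2$, yielding four combinations. The combination of two diagonals ($M^0_1$ and $M^1_1$) forces $\matchpi(\varx) = \matchpi(\varz) = \matchpi(\vary)$; but $\qmain$ requires $\matchpi(\varx) = (\frakw,\addresswordw)$ and $\matchpi(\vary) = (\frakw\frakb,\addresswordv)$ with $\frakb \in \{0,1\}$, hence $\matchpi(\varx) \neq \matchpi(\vary)$, so this case is void. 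The combination of two jumps ($M^0_2$ and $M^1_2$) forces $\matchpi(\varz)$ to have $i$-th address bit $0$ (as the target of the first jump) and simultaneously $1$ (as the source of the second), which is impossible. In the surviving case $M^0_2$/$M^1_1$ the elements $\matchpi(\varx)$ and $\matchpi(\varz)=\matchpi(\vary)$ are leaves of consecutive trees both carrying $i$-th bit $0$, so $(\matchpi(\varx),\matchpi(\vary)) \in {\conceptadr{i}^0}^{\interQ} \times {\conceptadr{i}^0}^{\interQ}$; symmetrically, the case $M^0_1$/$M^1_2$ gives a pair in ${\conceptadr{i}^1}^{\interQ} \times {\conceptadr{i}^1}^{\interQ}$. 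Since the first conjunct forces $(\matchpi(\varx),\matchpi(\vary)) \in M_{\qmain}$, the inclusion follows.

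For the converse, take $(\domelemd,\domeleme) \in M_{\qmain} \cap \big({\conceptadr{i}^0}^{\interQ} \times {\conceptadr{i}^0}^{\interQ}\big)$; the bit-$1$ case is symmetric (choosing $\matchpi(\varz):=\domelemd$ and the diagonal in $\qithbit{i}{0}$ instead). Here $\domelemd,\domeleme$ are leaves of consecutive configuration trees, both with $i$-th bit $0$, so $(\domelemd,\domeleme) \in M^0_2 \subseteq M_{\qithbit{i}{0}}$ by \cref{lem:bitpreserve}. Setting $\matchpi(\varz) := \domeleme$, we also get $(\domeleme,\domeleme) \in M^1_1 \subseteq M_{\qithbit{i}{1}}$ (the diagonal on leaves) and $(\domelemd,\domeleme) \in M_{\qmain}$ by assumption. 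Assembling the three witnessing submatches on their disjoint local variables gives a match of $\qaddr{i}$ with $\matchpi(\varx)=\domelemd$ and $\matchpi(\vary)=\domeleme$, as required.

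The main obstacle is isolating the correct two of the four combinatorial cases for the intermediate variable $\varz$; this is precisely where the construction simulates the disjunction ``the $i$-th bit of both leaves is $0$ \emph{or} both is $1$'' inside a single conjunctive query. The two discarding arguments --- ruling out the both-diagonal case via the distinct-tree constraint built into $\qmain$, and the both-jump case via the uniqueness of $\matchpi(\varz)$'s $i$-th address bit --- are the crux of the argument, while the remaining bookkeeping about combining local variable assignments is routine.
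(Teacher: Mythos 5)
Your proof is correct and follows essentially the same route as the paper's: the paper compresses the argument into a one-line relational-calculus computation, $M_{\qaddr{i}} = M_{\qmain} \cap \big(M_{\qithbit{i}{0}} \circ M_{\qithbit{i}{1}}\big) = M_{\qmain} \cap \big((M^0_1 \cup M^0_2) \circ (M^1_1 \cup M^1_2)\big) = M^0_2 \cup M^1_2$, and your four-case analysis of the intermediate variable $\varz$ (with the diagonal--diagonal case killed by the distinctness built into $M_{\qmain}$ and the jump--jump case killed by the contradictory $i$-th bit of $\varz$) is exactly the distribution of that composition over the unions, spelled out together with both inclusions.
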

\begin{proof}
By employing the definition of the query, \cref{lem:bitpreserve} and relational calculus we conclude that $M_{\!\qaddr{i}} = M_{\qmain} \cap \big(M_{\qithbit{i}{0}} \circ M_{\qithbit{i}{1}}\big) = M_{\qmain} \cap \big((M^0_1 \cup M^0_2) \circ (M^1_1 \cup M^1_2)\big) = M_{\qmain} \cap \big(M^0_1 \cup M^1_2 \cup M^0_2\big) =M^1_2 \cup M^0_2$ \qedhere
\end{proof}

We are finally ready to present our query 
\[
  \queryq_{\atmM} := \bigwedge_{i=1}^{\nofM{+}1} \qaddr{i}[\varx,\vary] \land \NoPrevHeadAbove(\vary) \land \conceptzero(\varx) \land \conceptone(\vary)
\]
by means of which we can conclude with the proof of~\cref{lemma:existence-of-cq-spoiling-fun}.
\begin{proof}[Proof of~\cref{lemma:existence-of-cq-spoiling-fun}]
Let $\queryq_{\atmM}$ as defined above and observe that its size is clearly polynomial in $\nofM$. 
Note that $\queryq_{\atmM}$ satisfies our requirements: 
The 1st item follows from two lemmas: the fact that $\varx$ and $\vary$ are mapped to leaves of two consecutive enriched configuration trees follows from~\cref{lemma:first-query} and the fact that $\varx$ and $\vary$ are mapped to nodes having equal addresses follows from~\cref{lemma:second-query} applied for every $1 \leq i \leq \nofM{+}1$.
The 2nd and the 3rd points hold since we supplemented our query with $\NoPrevHeadAbove(\vary) \land \conceptzero(\varx) \land \conceptone(\vary)$.
\end{proof}

\section{Conclusions}\label{sec:conc}
Conjunctive query entailment for $\ALCself$ is, in fact $\TwoExpTime$-complete, where membership follows from much stronger logics \cite[Thm~4.3]{CalvaneseEO09}.
Hardness, shown in this paper, came as a quite surprise to us (in fact, we spent quite some time trying to prove \mbox{\sc{ExpTime}}-membership). The key insight of our proof (and maybe the take-home message from this paper) is that the presence of $\Self$ allows us to mimic case distinction over paths (and hence the handling of disjunctive information) through concatenation, by providing the opportunity for one of the two disjuncts to idle by ``circling in place''.

On a last note, our result also holds for plain $\ALCself$ TBoxes, since the only ABox assertion $\InitialConf(\indva)$ can be replaced by the GCI $\topconcept \dlsubseteq \exists{\role{aux}}.\InitialConf$ for an auxiliary role name $\role{aux}$.   

\section*{Acknowledgements}
This work was supported by the ERC through the Consolidator Grant No.~771779 (\href{https://iccl.inf.tu-dresden.de/web/DeciGUT/en}{DeciGUT}).
\begin{figure}[h]
    \centering
    \includegraphics[scale=0.05]{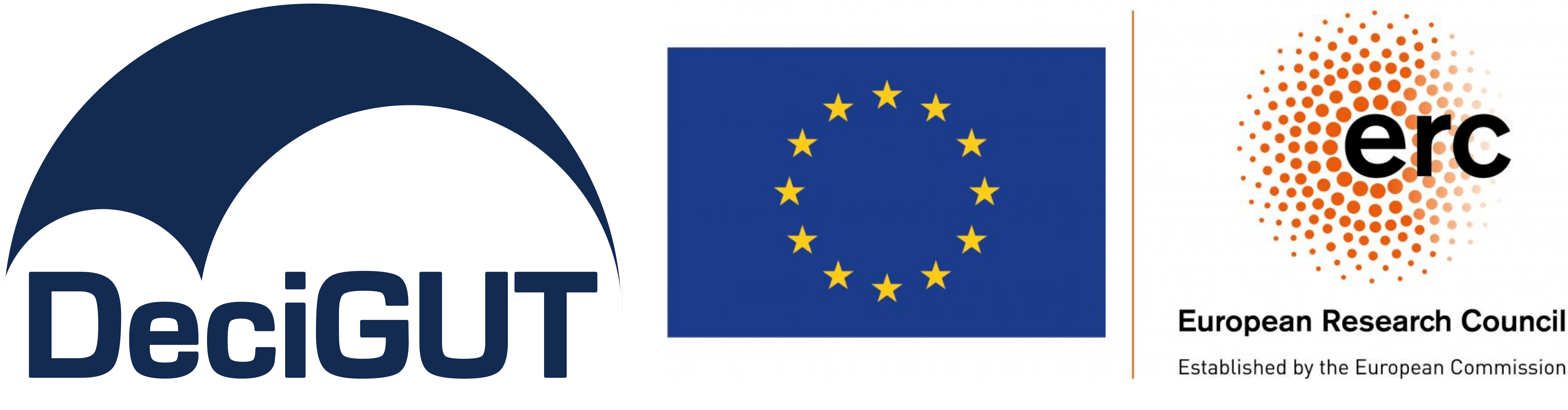}
\end{figure}

\pagebreak
\bibliographystyle{alpha}
\bibliography{references}
\end{document}